\def\BibTeX{{\rm B\kern-.05em{\sc i\kern-.025em b}\kern-.08emT\kern-.1667em\lower.7ex\hbox{E}\kern-.125emX}}
\declaretheoremstyle[bodyfont = \normalfont]{defstyle} 
\declaretheorem[numberwithin=section,name=Theorem]{thm}
\declaretheorem[sibling=thm,name=Lemma]{lem}
\declaretheorem[sibling=thm,name=Observation]{obs}
\declaretheorem[sibling=thm,name=Proposition]{prop}
\declaretheorem[sibling=thm,name=Corollary]{cor}
\declaretheorem[sibling=thm,style=defstyle,name=Definition]{defn}
\declaretheorem[sibling=thm,style=defstyle,name=Problem]{problem}
\declaretheorem[sibling=thm,style=defstyle,name=Claim]{claim}
\declaretheorem[sibling=thm,style=defstyle,name=Hypothesis]{hypo}
\declaretheorem[sibling=thm,style=defstyle,name=Summary]{summary}
\newcommand{\secref}[1]{Section~\ref{sec:#1}}
\newcommand{\thmref}[1]{Theorem~\ref{thm:#1}}
\newcommand{\lemref}[1]{Lemma~\ref{lem:#1}}
\newcommand{\propref}[1]{Proposition~\ref{prop:#1}}
\newcommand{\obsref}[1]{Observation~\ref{obs:#1}}
\newcommand{\corref}[1]{Corollary~\ref{cor:#1}}
\newcommand{\correfs}[2]{Corollaries~\ref{cor:#1} and~\ref{cor:#2}}
\newcommand{\norm}[1]{\left\lVert#1\right\rVert}
\newcommand{\Oh}{{\mathcal{O}}}
\newcommand{\tOh}{\tilde{\mathcal{O}}}
\newcommand{\Fr}{Fréchet\xspace}
\newcommand{\eps}{\varepsilon}
\newcommand{\RR}{\mathbb{R}}
\newcommand{\ind}{\mathrm{ind}}
\newcommand{\zero}{\mathbf{0}}
\newcommand{\ie}[0]{i.e.\xspace}
\newcommand{\Algblankline}{\item[]}
\newenvironment{corbox}{\begin{mdframed}[skipabove=\topsep,skipbelow=\topsep]\begin{cor}}{\end{cor}\end{mdframed}
}
\newenvironment{defnbox}{\begin{mdframed}[skipabove=\topsep,skipbelow=\topsep]\begin{defn}}{\end{defn}\end{mdframed}
}
\newenvironment{sumbox}{\begin{mdframed}[skipabove=\topsep,skipbelow=\topsep]\begin{summary}}{\end{summary}\end{mdframed}
}
\begin{document}
\title[Discrete Fréchet Distance Under Translation: Conditional Hardness and an Improved Algorithm]{Fr{\'e}chet Distance Under Translation: Conditional Hardness and an Algorithm via Offline Dynamic Grid Reachability}
\titlenote{An extended abstract of this paper appeared at \cite{BringmannKN19}.}

\author{Karl Bringmann}
\email{kbringma@mpi-inf.mpg.de}
\authornote{This work is part of the project TIPEA that has received funding from the European Research Council (ERC) under the European Unions Horizon 2020 research and innovation programme (grant agreement No. 850979).}
\affiliation{\institution{Saarland University and Max Planck Institute for Informatics, Saarland Informatics Campus}
  \streetaddress{Campus E1 3}
  \city{Saarbrücken}
  \postcode{66123}
  \country{Germany}
}

\author{Marvin Künnemann}
\email{marvin@mpi-inf.mpg.de}
\affiliation{\institution{Max Planck Institute for Informatics, Saarland Informatics Campus}
  \streetaddress{Campus E1 4}
  \city{Saarbrücken}
  \postcode{66123}
  \country{Germany}
}

\author{André Nusser}
\email{anusser@mpi-inf.mpg.de}
\orcid{0000-0002-6349-869X}
\affiliation{\institution{Max Planck Institute for Informatics and Graduate School of Computer Science, Saarland Informatics Campus}
  \streetaddress{Campus E1 4}
  \city{Saarbrücken}
  \postcode{66123}
  \country{Germany}
}

\renewcommand{\shortauthors}{Bringmann, Künnemann, Nusser}

\begin{abstract}
The discrete Fr\'echet distance is a popular measure for comparing polygonal curves. An important variant is the discrete Fr\'echet distance under translation, which enables detection of similar movement patterns in different spatial domains. For polygonal curves of length $n$ in the plane, the fastest known algorithm runs in time $\tilde{\mathcal{O}}(n^{5})$ \cite{avraham2015faster}. This is achieved by constructing an arrangement of disks of size ${\mathcal{O}}(n^{4})$, and then traversing its faces while updating reachability in a directed grid graph of size $N := {\mathcal{O}}(n^2)$, which can be done in time $\tilde{\mathcal{O}}(\sqrt{N})$ per update \cite{diks2007dynamic}. The contribution of this paper is two-fold.

First, although it is an open problem to solve dynamic reachability in directed grid graphs faster than $\tilde{\mathcal{O}}(\sqrt{N})$, we improve this part of the algorithm: We observe that an offline variant of dynamic $s$-$t$-reachability in directed grid graphs suffices, and we solve this variant in amortized time $\tilde{\mathcal{O}}(N^{1/3})$ per update, resulting in an improved running time of $\tilde{\mathcal{O}}(n^{4.66...})$ for the discrete Fr\'echet distance under translation. Second, we provide evidence that constructing the arrangement of size ${\mathcal{O}}(n^{4})$ is necessary in the worst case, by proving a conditional lower bound of $n^{4 - o(1)}$ on the running time for the discrete Fr\'echet distance under translation, assuming the Strong Exponential Time Hypothesis.

\end{abstract}
 
\begin{CCSXML}
	 <ccs2012>
	 <concept>
	 <concept_id>10002950.10003624.10003633.10003640</concept_id>
	 <concept_desc>Mathematics of computing~Paths and connectivity problems</concept_desc>
	 <concept_significance>500</concept_significance>
	 </concept>
	 <concept>
	 <concept_id>10003752.10003777.10003779</concept_id>
	 <concept_desc>Theory of computation~Problems, reductions and completeness</concept_desc>
	 <concept_significance>500</concept_significance>
	 </concept>
	 <concept>
	 <concept_id>10003752.10003809.10003635.10010038</concept_id>
	 <concept_desc>Theory of computation~Dynamic graph algorithms</concept_desc>
	 <concept_significance>500</concept_significance>
	 </concept>
	 <concept>
	 <concept_id>10003752.10010061.10010063</concept_id>
	 <concept_desc>Theory of computation~Computational geometry</concept_desc>
	 <concept_significance>500</concept_significance>
	 </concept>
	 </ccs2012>
\end{CCSXML}

\ccsdesc[500]{Mathematics of computing~Paths and connectivity problems}
\ccsdesc[500]{Theory of computation~Problems, reductions and completeness}
\ccsdesc[500]{Theory of computation~Dynamic graph algorithms}
\ccsdesc[500]{Theory of computation~Computational geometry}

\keywords{Fréchet distance, conditional lower bounds}

\maketitle

\clearpage

\section{Introduction}

\paragraph{\Fr distance.}
Modern tracking devices yield an abundance of movement data, e.g., in the form of GPS trajectories. This data is usually given as a sequence of points in $\RR^d$ for some small dimension $d$ like 2 or 3. By interpolating linearly between consecutive points, we obtain a corresponding polygonal curve. 
One of the most fundamental tasks on such objects is to measure similarity between two curves $\pi,\sigma$. A popular approach is to measure their distance using the \Fr distance, which has two important variants: The classic \emph{continuous \Fr distance} is the minimal length of a leash connecting a dog and its owner as they continuously walk along the interpolated curves $\pi$ and $\sigma$, respectively, from the startpoints to the endpoints without backtracking. In the \emph{discrete \Fr distance}, at any time step the dog and its owner must be at vertices of their curves and may jump to the next vertex. 
This discrete version is well motivated when we think of the inputs as sequences of points rather than polygonal curves, i.e., if the interpolated line segments between input points have no meaning in the underlying application.
In comparison to other similarity measures such as the Hausdorff distance, the \Fr distance considers the ordering of the vertices along the curves, thus reflecting an intuitive property of curve similarity.

The time complexity of the \Fr distance is well understood.
For the continuous \Fr distance, Alt and Godau designed an $\Oh(n^2 \log n)$-time algorithm for polygonal curves $\pi,\sigma$ consisting of $n$~vertices~\cite{AltG95}. Buchin et al.~\cite{BuchinBMM14} improved on this result by giving an algorithm that runs in time $\Oh(n^2 \sqrt{\log n} (\log \log n)^{3/2})$ on the Real RAM and $\Oh(n^2 (\log \log n)^2)$ on the Word RAM.
The first algorithm for the discrete \Fr distance ran in time $\Oh(n^2)$~\cite{EiterM94}, which was later improved to $\Oh\big(n^2 \tfrac{\log \log n}{\log n}\big)$~\cite{AgarwalBAKS13}. 
On the hardness side, conditional on the Strong Exponential Time Hypothesis, Bringmann~\cite{Bringmann14} ruled out $\Oh(n^{2-\eps})$-time algorithms for any $\eps > 0$, for both variants of the \Fr distance. Recently, Abboud and Bringmann~\cite{abboud2018tighter} showed that any $\Oh(n^2 / \log^{17+\eps} n)$-time algorithm for the discrete \Fr distance would prove novel circuit lower bounds. On the practical side, several fast implementations for computing the continuous Fréchet distance resulted from the SIGSPATIAL GIS Cup 2017 \cite{sigspatial1, sigspatial2, sigspatial3} with a follow-up work significantly improving on these results \cite{BKN19}.

Many extensions and variants of the \Fr distance have been studied, e.g., generalizing from curves to other types of objects, replacing the ground space $\RR^d$ by more complex spaces, and many more (see, e.g., \cite{Indyk02,BuchinBW09,AltB10,ChambersETAL10,
Wenk2010geodesic,MaheshwariSSZ11,DriemelHP13,
avraham2015discrete}).
Applications of the \Fr distance range from moving objects analysis (see, e.g.,~\cite{BuchinBGLL11}) through map-matching tracking data (see, e.g.,~\cite{BrakatsoulasPSW05}) to signature verification (see, e.g.,~\cite{MunichP99}).

\paragraph{\Fr distance under translation.}
For some applications, it is useful to change the definition of the \Fr distance slightly. In particular, several applications on curves evolve around the theme of \emph{detecting movement patterns}.
Consider the task of signature verification.
Whether two signatures are similar is a translation-invariant property -- intuitively, by translating a signature in space, we cannot make it more or less similar to another signature. Consider another example: given GPS trajectories of an animal, we might want to detect different movement patterns (just considering their shape) by chopping the trajectories into smaller pieces and clustering these pieces according to some distance measure. For the two applications mentioned above, it is inconvenient that the \Fr distance is not invariant under translation.\footnote{In this context one could even ask for a version of the \Fr distance that is translation- and rotation-invariant, but we focus on the former in this paper.}
In order to overcome this issue, the \emph{\Fr distance under translation} between curves $\pi,\sigma$ is defined as the minimal \Fr distance between $\pi$ and any translation of $\sigma$, i.e., we minimize over all possible translations of $\sigma$. Clearly, this yields a translation-invariant distance measure, and thus enables the above application. 

The \emph{continuous} \Fr distance under translation was independently introduced by Efrat et al.~\cite{efrat2001pattern} and Alt et al.~\cite{alt2001matching}, who designed algorithms in the plane with running time $\tOh(n^{10})$ and $\tOh(n^8)$, respectively\footnote{By $\tOh(\cdot)$ we hide polylogarithmic factors in $n$.}. Both groups of researchers also presented approximation algorithms, e.g., a $(1+\eps)$-approximation running in time $\Oh(n^2 / \eps^2)$ in the plane~\cite{alt2001matching}. This line of work was extended to three dimensions with a running time of $\tOh(n^{11})$~\cite{wenk2002phd}.

The \emph{discrete} \Fr distance under translation was first studied by Jiang et al.~\cite{jiang2008protein} who designed an $\tOh(n^6)$-time algorithm in the plane. Mosig et al.~\cite{mosig2005approximately} presented an approximation algorithm that computes the discrete \Fr distance under translation, rotation, and scaling in the plane, up to a factor close to 2, and runs in time $\Oh(n^4)$.
The best known exact algorithm for the discrete \Fr distance under translation in the plane is due to Ben Avraham et al.~\cite{avraham2015faster}. It is an improvement of the algorithm by Jiang et al.~\cite{jiang2008protein} and runs in time $\tOh(n^5)$.

\paragraph{Our contribution.}
In this paper, we further study the time complexity of the discrete \Fr distance under translation in the plane. First, we improve the running time from $\tOh(n^{5})$ to $\tOh(n^{4.66...})$. This is achieved by designing an improved algorithm for a subroutine of the previously best algorithm, namely offline dynamic $s$-$t$-reachability in directed grid graphs, see Section~\ref{sec:techoverview} below for a more detailed overview.

\begin{thm} \label{thm:mainupper}
  The discrete \Fr distance under translation on curves of length $n$ in the plane can be computed in time $\tOh(n^{14/3}) = \tOh(n^{4.66..})$.
\end{thm}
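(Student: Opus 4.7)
The plan is to follow the high-level structure of the $\tOh(n^5)$ algorithm of Ben Avraham, Kaplan and Sharir~\cite{avraham2015faster} and to speed up its bottleneck subroutine. Their algorithm constructs an arrangement of $\Oh(n^4)$ disks in translation space; each face fixes the combinatorics of the discrete \Fr free-space matrix, and moving to an adjacent face changes only $\Oh(1)$ cells. Traversing the arrangement face-by-face thus amounts to performing $\Oh(n^4)$ edge updates on a directed grid graph of size $N=\Theta(n^2)$, asking after each update whether a fixed source can still reach a fixed target; this decides whether the current translation realizes the desired \Fr distance. The fully dynamic Diks--Sankowski structure handles each update in $\tOh(\sqrt{N}) = \tOh(n)$, yielding the $\tOh(n^5)$ running time.

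The first step is to observe that the sequence of updates, and the times at which reachability must be queried, depends only on the geometry of the arrangement and not on any query answer. Hence the entire schedule can be generated in a preprocessing phase of total time $\tOh(n^4)$, before a single reachability call is made. This turns the bottleneck into a purely \emph{offline} problem: given a fixed sequence of $M=\Theta(n^4)$ edge insertions, deletions, and $s$-$t$-reachability queries on a directed grid graph of size $N=\Theta(n^2)$, answer all queries.

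The main technical step, which I expect to be the genuinely hard part, is to solve this offline problem in amortized $\tOh(N^{1/3})$ per operation. The approach I would pursue is to cut the operation sequence into blocks of length $B$ and, for each block, build a local reachability oracle based on the $\Oh(\sqrt{N})$-separator of the underlying grid, restricted to the cells touched by that block. Between blocks one maintains a global ``summary'' of the rest of the graph; because the entire update schedule is known in advance, the summary only has to be refreshed along the boundary between consecutive blocks, and the grid structure makes this refresh local. Balancing per-query cost against per-block rebuilding cost should produce the exponent $1/3$. The delicate point is to ensure that the local oracle of a new block can be spliced together with the global summary at cost $\tOh(N^{1/3})$, even when consecutive blocks act on spatially distant regions of the grid; keeping this interaction cheap is where a purely grid-specific argument seems essential.

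Finally, substituting $N=\Theta(n^2)$ and $M=\Theta(n^4)$ into the offline bound yields a total running time of $\tOh(M\cdot N^{1/3}) = \tOh(n^4 \cdot n^{2/3}) = \tOh(n^{14/3})$, which, together with the $\tOh(n^4)$ time for constructing the arrangement and its dual traversal, proves the theorem.
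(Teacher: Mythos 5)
Your high-level structure is the right one and matches the paper: reduce to an offline version of dynamic $s$-$t$-reachability on the $n\times n$ grid, observe that the schedule of $\Theta(n^4)$ updates is fixed in advance, solve the offline problem in amortized $\tOh(N^{1/3})=\tOh(n^{2/3})$ per operation, and multiply. The arithmetic $n^4\cdot n^{2/3}=n^{14/3}$ is correct. But there are two genuine gaps.

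The first is the optimization-to-decision step. Your argument only decides, for a given $\delta$, whether the translated Fréchet distance is at most $\delta$. To \emph{compute} the distance you need to locate the optimal $\delta^*$ among the $\Oh(n^6)$ combinatorial critical values (triples of disk centers $q,q',q''$ from $Q=\{\pi_i-\sigma_j\}$ that become cocircular), and you cannot afford to enumerate them. The paper handles this with a parametric search (Megiddo/Cole style) that sorts the intersection points on each disk boundary implicitly using $\Oh(\log n)$ calls to the decision procedure, at total overhead $\Oh((n^4+T_{\mathrm{dec}}(n))\log n)$. Some such mechanism is needed; it is not automatic.

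The second gap is the heart of the matter: you have not actually solved the offline reachability problem in $\tOh(N^{1/3})$ amortized, you have only described where the difficulty lies. ``Local oracle based on the $\Oh(\sqrt N)$-separator of the cells touched by a block'' plus ``refresh the global summary along block boundaries'' does not, as stated, give the required bounds, and the sentence ``balancing should produce the exponent $1/3$'' is not a derivation. Concretely, to make the chunking work with chunk size $k$, you need two nontrivial ingredients that you do not supply: (a) advancing the global structure past a chunk of $k$ point activations/deactivations must cost $\tOh(n\sqrt k)$, not the naive $\tOh(nk)$ that $k$ separate grid updates would give; and (b) each in-chunk reachability query, with the up-to-$k$ terminals of the chunk toggled on or off, must cost $\tOh(k)$, not $\tOh(k^2)$ as a transitive-closure graph on terminals would give. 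Only with both of these does $\tfrac{U}{k}\bigl(n\sqrt k + k^2\bigr)=U(n/\sqrt k + k)$ optimize to $U n^{2/3}$ at $k=n^{2/3}$. The paper achieves (a) by a hierarchical canonical-block decomposition in which blocks touched by many chunk updates are only recomputed once, and (b) by a succinct terminal-to-boundary reachability characterization: labelling each terminal $t=(x,y)$ by its antidiagonal $L(t)=x+y$ and each boundary point $q$ of a block by $\ell(q)=\min\{L(p): p\rightsquigarrow q\}$, one shows $t\rightsquigarrow q$ iff $q\in[A(t),Z(t)]$ and $\ell(q)\le L(t)$, which reduces the per-block reachability work to a constant number of orthogonal range queries. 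Without this characterization (or a substitute of comparable strength), the ``local oracle spliced with a global summary'' story does not obviously beat $\tOh(\sqrt N)$ per operation, and the exponent $1/3$ does not follow.
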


Our second main result is a lower bound of $n^{4-o(1)}$, conditional on the standard Strong Exponential Time Hypothesis. The Strong Exponential Time Hypothesis essentially asserts that Satisfiability requires time $2^{n-o(n)}$; see Section~\ref{sec:preliminaries} for a definition. This (conditionally) separates the discrete \Fr distance under translation from the classic \Fr distance, which can be computed in time $\tOh(n^2)$. Moreover, the first step of all known algorithms for the discrete \Fr distance under translation is to construct an arrangement of disks of size $\Oh(n^{4})$. Our conditional lower bound shows that this is essentially unavoidable.

\begin{restatable}{thm}{mainlower} \label{thm:mainlower}
  The discrete \Fr distance under translation of curves of length $n$ in the plane requires time $n^{4-o(1)}$, unless the Strong Exponential Time Hypothesis fails.
\end{restatable}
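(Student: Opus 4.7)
The plan is to reduce from Orthogonal Vectors (OV): given sets $A, B \subseteq \{0,1\}^d$ of $N$ vectors each with $d = \Theta(\log N)$, decide whether some $(a,b) \in A \times B$ has $\langle a,b\rangle = 0$. Under SETH, OV requires time $N^{2-o(1)}$. I will construct a discrete Fréchet-under-translation instance consisting of curves $\pi, \sigma$ with $n = \tOh(\sqrt{N})$ vertices together with a threshold $\delta$, such that $\min_{t \in \RR^2} d_F(\pi, \sigma + t) \leq \delta$ iff the OV instance is satisfiable. An $\Oh(n^{4-\eps})$ algorithm for Fréchet under translation would then solve OV in time $\Oh(N^{2-\eps/2+o(1)})$, contradicting SETH.

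The two translation dimensions supply two extra \emph{selection} degrees of freedom on top of the one-dimensional OR inherent to Bringmann-style curve gadgets for discrete Fréchet distance. Partition $A$ into $M := \lceil \sqrt{N} \rceil$ blocks $A_1, \ldots, A_M$ of size at most $M$ each, and similarly $B = B_1 \cup \cdots \cup B_M$. The horizontal component of the translation will be forced to select a block index $i \in [M]$ of $A$, and the vertical component selects $j \in [M]$ of $B$. Inside the curves I embed a Bringmann-style 2-OV gadget of size $\tOh(M)$ per aligned block pair, so that at the selecting translation corresponding to $(i,j)$ the Fréchet distance is at most $\delta$ iff there exists $(a,b) \in A_i \times B_j$ with $\langle a,b\rangle = 0$. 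Minimizing over translations then realizes a Yes-instance iff the full OV instance does.

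Concretely, I place block gadgets $\pi_1, \ldots, \pi_M$ in $\pi$ at horizontal offsets $x_1, \ldots, x_M$, and block gadgets $\sigma_1, \ldots, \sigma_M$ in $\sigma$ at vertical offsets $y_1, \ldots, y_M$, interleaved with \emph{anchor} vertices whose coordinates must be matched nearly exactly for the Fréchet distance to stay below $\delta$. By choosing anchor spacing much coarser than the inner-gadget scale, the translations that simultaneously align all anchors within the threshold form a discrete grid of $M^2 = \Theta(N)$ candidate shifts, each corresponding to one block pair $(i,j)$. For any such translation, the aligned blocks $\pi_i$ and $\sigma_j$ invoke the inner Bringmann-style test on $A_i \times B_j$; since that test depends only on \emph{relative} positions inside the aligned block, it is unaffected by which translation inside the anchor tolerance was chosen.

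The main obstacle is designing the anchor/selector mechanism so that (i) no ``in-between'' translation beats the discrete grid by trading anchor precision against slack in the inner matching, and (ii) anchors do not interfere with inner gadgets across block boundaries. I expect to handle (i) via a two-scale construction (anchor separation $\gg$ inner-gadget diameter $\gg \delta$), and (ii) by inserting sufficiently long ``connector'' subcurves between consecutive block gadgets that force any bounded-cost Fréchet traversal to synchronize block indices before entering the inner gadget of any block. Proving rigorously that the construction effectively discretizes translation space to exactly $\Theta(N)$ candidate shifts, each realizing an independent 2-OV test over $\tOh(\sqrt{N})$ vectors, will be the key technical step.
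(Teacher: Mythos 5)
Your plan contains a size-bookkeeping error that is fatal to the reduction. You claim curves of length $n = \tOh(\sqrt{N})$, but the construction you describe cannot achieve this. You place $M = \lceil \sqrt{N} \rceil$ block gadgets $\pi_1, \dots, \pi_M$ inside $\pi$, each of which is a Bringmann-style $2$-OV gadget over a block of at most $M$ vectors in $d = \Theta(\log N)$ dimensions, and hence of size $\tOh(M)$. Since a curve is a sequence and all $M$ blocks must be physically present (the block content $A_i$ is needed at the translation selecting $i$, and different translations select different blocks, so the inner gadgets cannot be shared), the curve length is $n = \Omega(M \cdot M) = \Omega(N)$, not $\tOh(\sqrt{N})$. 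With $n = \tilde\Theta(N)$, an $\Oh(n^{4-\eps})$-time algorithm yields only an $\tOh(N^{4-\eps})$-time OV algorithm, which contradicts nothing unless $\eps > 2$; and for $\eps > 2$ the statement is already subsumed by Bringmann's $n^{2-o(1)}$ lower bound for the plain discrete Fréchet distance. In short, the translation supplies only $M^2 = N$ candidate shifts, the inner Fréchet test supplies another factor $N$, so the total search effort is the $N^2$ needed for OV -- but that budget is also exactly $n^2$, not $n^4$, so the reduction is not tight.

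The paper avoids this by reducing from \emph{4-OV} and letting the \emph{translation alone} do all the searching: the candidate translations are $\tau = (h(v_1,v_2)\,\eps,\ h(v_3,v_4)\,\eps)$ with $h(v,v') = \ind(v) + \ind(v')\cdot N$, giving $N^2 \times N^2 = N^4$ candidate shifts, while the curve has length only $n = \Oh(DN)$ because, for each of the $D$ coordinates, the curve contains one equality/OR gadget per vector (not per pair, and with no inner 2-OV search). The Fréchet check is pure verification of a pre-selected quadruple. This is the ingredient your proposal is missing: in a tight reduction the two translation dimensions must carry essentially all $n^4$-worth of search, and the free-space diagram must be used only to verify; any pair-searching done inside the Fréchet computation consumes curve length quadratically and destroys the exponent. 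If you wish to salvage a two-scale/block construction, the translation would need to select enough information that the Fréchet step degenerates into a near-linear-time verification, which effectively pushes you back to a $k$-OV encoding in the translation, as in the paper.
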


\noindent
We leave closing the gap between $\tOh(n^{4.66..})$ and $n^{4-o(1)}$ as an open problem.

\subsection{Technical Overview} \label{sec:techoverview}

\paragraph{Previous algorithms for the discrete \Fr distance under translation.}
Let us sketch the algorithms by Jiang et al.~\cite{jiang2008protein} and Ben Avraham et al.~\cite{avraham2015faster}. Given sequences $\pi = (\pi_1,\ldots,\pi_n)$ and $\sigma = (\sigma_1,\ldots,\sigma_n)$ in $\RR^2$ and a number $\delta \ge 0$, we want to decide whether the discrete \Fr distance under translation of $\pi$ and $\sigma$ is at most $\delta$. From this decision procedure one can obtain an algorithm to compute the actual distance via standard techniques (i.e., parametric search).

\begin{wrapfigure}{r}{0.2\textwidth}
\centering
\includegraphics[width=0.2\textwidth]{figures/arrangement_without_quark.pdf}
\end{wrapfigure}
The translations $\tau$ for which the distance of $\pi_i$ and $\sigma_j + \tau$ is at most $\delta$ form a disk in $\RR^2$. Over all pairs $(\pi_i,\sigma_j)$ this yields $\Oh(n^2)$ disks, all of them having radius $\delta$. Construct their arrangement $\mathcal{A}$ (see an illustration to the right), which is guaranteed to have $\Oh(n^4)$ faces. Within each face of $\mathcal{A}$, any two translations are equivalent, in the sense that they leave the same pairs $(\pi_i,\sigma_j)$ in distance at most $\delta$. Thus, whether the discrete \Fr distance is at most $\delta$ is constant in each face. Hence, it suffices to compute the discrete \Fr distance between $\pi$ and $\sigma$ translated by $\tau$ over $\Oh(n^4)$ choices for $\tau$, one for each face of $\mathcal{A}$. Since the discrete \Fr distance can be computed in time $\Oh(n^2)$, this yields an $\Oh(n^6)$-time algorithm, which is essentially the algorithm by Jiang et al.~\cite{jiang2008protein}.

Ben Avraham et al.~\cite{avraham2015faster} improve this algorithm as follows. Denote by $M$ the $n \times n$ matrix with $M_{i,j} = 1$ if the points $\pi_i,\sigma_j$ are in distance at most $\delta$, and $M_{i,j} = 0$ otherwise ($M$ is called the ``free-space diagram''). It is well-known that the discrete \Fr distance of $\pi,\sigma$ is at most $\delta$ if and only if there exists a monotone path from the lower left to the upper right corner of $M$ using only 1-entries.
Equivalently, consider a \emph{directed grid graph} $G_M$ on $n \times n$ vertices, where each node $(i,j)$ has directed edges to $(i+1,j), (i,j+1)$, and $(i+1,j+1)$, and the nodes $(i,j)$ of $G_M$ with $M_{i,j} = 0$ are ``deactivated'' (i.e., removed). Then the discrete \Fr distance of $\pi,\sigma$ is at most $\delta$ if and only if node $(n,n)$ is reachable from node $(1,1)$ in $G_M$. See Figure \ref{fig:freespace_diagram} for an example of a pair of curves, its corresponding free-space diagram $M$, and directed grid graph $G_M$.
\begin{figure}
	\centering
	\includegraphics[width=.9\textwidth]{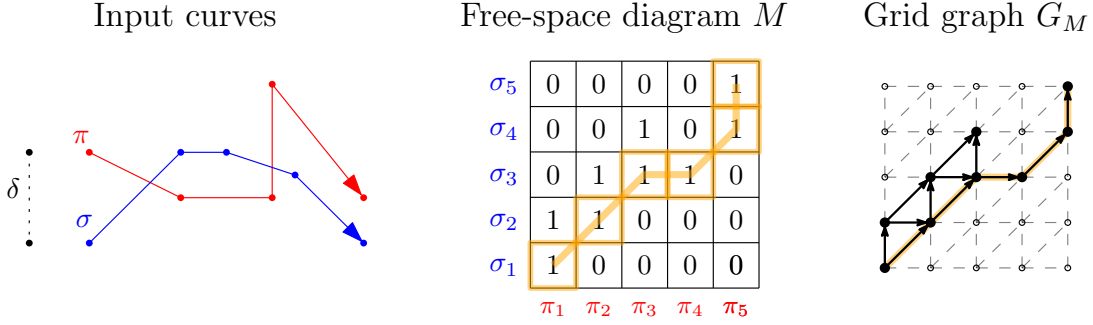}
	\caption{Two input curves $\pi, \sigma$ and a distance $\delta$, the corresponding free-space diagram $M$, and the grid graph $G_M$ corresponding to $M$. A monotone traversal of $M$ and $G_M$ is marked in orange.}
	\label{fig:freespace_diagram}
\end{figure}

Ben Avraham et al. observe that it is easy to construct a sequence of $\Oh(n^4)$ faces $f_1,\ldots,f_L$ of the arrangement~$\mathcal{A}$ such that (1)~each face of $\mathcal{A}$ is visited at least once and (2) $f_\ell$ and $f_{\ell+1}$ are neighboring in $\mathcal{A}$ for all~$\ell$.  Since consecutive faces in this sequence are neighbors, only one pair $(\pi_i,\sigma_j)$ changes its distance, i.e., either $\pi_i,\sigma_j$ are in distance at most $\delta$ in $f_\ell$ and in distance larger than~$\delta$ in $f_{\ell+1}$, or vice versa. This corresponds to one activation or deactivation of a node in $G_M$. After this update, we want to again check whether node $(n,n)$ is reachable from node $(1,1)$ in $G_M$. That is, using a dynamic algorithm for $s$-$t$-reachability in directed grid graphs, we can maintain whether the \Fr distance is at most $\delta$. The best known solution to dynamic reachability in directed $n \times n$ grids runs in time $\tOh(n)$~\cite{diks2007dynamic}.\footnote{This algorithm even works more generally for dynamic reachability in directed planar graphs.} Over all $\Oh(n^4)$ faces, this yields time $\tOh(n^5)$ for the discrete \Fr distance under translation in the plane \cite{avraham2015faster}. 

\paragraph{Intuition.}
There are two parts to the above algorithm: (1) Constructing the arrangement $\mathcal{A}$ and iterating over its faces, and (2) maintaining reachability in the grid graph $G_M$. Both parts could potentially be improved. 

The natural first attempt is to attack the arrangement enumeration, i.e., part (1). The size of the arrangement is $\Oh(n^4)$, and for no other computational problem it is known -- to the best of our knowledge -- that any optimal algorithm must construct such a large arrangement, so this part seems intuitively wasteful. Surprisingly, our conditional lower bound of Theorem~\ref{thm:mainlower} shows that constructing the arrangement is essentially unavoidable.

The remaining part (2) at first sight seems much less likely to be improvable, since it is a well-known open problem to find a faster dynamic algorithm for reachability in directed grid graphs.
Nevertheless, we show how to improve the running time of this part of the algorithm.

\paragraph{Our algorithm.}
We observe that we do not need the full power of dynamic reachability, since we can precompute all $\Oh(n^4)$ updates. This leaves us with the following problem. 

\medskip
\noindent 
\emph{Offline Dynamic Grid Reachability}: 
We start from the directed $n\times n$-grid graph $G$ in which all nodes are deactivated.\footnote{In fact, our algorithm also works in the general case in which nodes can be arbitrarily activated/deactivated in the beginning.}
We are given a sequence of updates $u_1,\ldots,u_U$, where each $u_\ell$ is of the form ``activate node $(i,j)$'' or ``deactivate node $(i,j)$''. The goal is to compute for each $1 \le \ell \le U$ whether node $(1,1)$ can reach node $(n,n)$ in $G$ after performing the updates $u_1,\ldots,u_\ell$.
\medskip

Our main algorithmic contribution is an algorithm for Offline Dynamic Grid Reachability in amortized time $\tOh(n^{2/3})$ per update. This is faster than the update time $\tOh(n)$ obtained by using a dynamic algorithm for reachability in directed planar graphs~\cite{diks2007dynamic}.

\begin{thm} \label{thm:algoreach}
  Offline Dynamic Grid Reachability can be solved in time $\tOh(n^2 + U \cdot n^{2/3})$.
\end{thm}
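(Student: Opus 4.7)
The plan is to establish the bound via a batching strategy. I would partition the $U$ updates into consecutive batches of size $B := n^{4/3}$, and aim to spend $\tOh(n^2)$ total time per batch. Summed over the $U/B$ batches, this yields $\tOh(n^2 + n^2 \cdot U/B) = \tOh(n^2 + U \cdot n^{2/3})$, matching the claimed bound (the additive $\tOh(n^2)$ also covers a one-time initialization of the grid).

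Between batches I would refresh a global reachability snapshot in $\tOh(n^2)$ time, e.g.\ by forward and backward BFS to record reachability from $(1,1)$ and to $(n,n)$ in the current grid.

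Within a single batch, the set $S$ of cells whose state changes is known in advance (this is the offline property) and has size at most $B$; all cells outside $S$ retain their batch-initial state, which I call the \emph{background}. The plan is to construct, in $\tOh(n^2)$ time at the start of the batch, a reduced planar DAG $G_{\text{red}}$ on only $\tOh(B) = \tOh(n^{4/3})$ vertices that faithfully represents $(1,1) \rightsquigarrow (n,n)$ reachability in the true grid for every activation pattern of $S$ arising during the batch. Concretely, $G_{\text{red}}$ would contain the $S$-cells together with a small number of auxiliary grid vertices, and its edges would encode background reachability. Given $G_{\text{red}}$, each toggle of an $S$-cell followed by an $s$-$t$ reachability query could be served by the Diks--Sankowski dynamic reachability algorithm for planar DAGs~\cite{diks2007dynamic} in $\tOh(\sqrt{|V(G_{\text{red}})|}) = \tOh(\sqrt{B}) = \tOh(n^{2/3})$ per operation, giving $\tOh(B \cdot n^{2/3}) = \tOh(n^2)$ for the $B$ queries of a batch in total.

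The main technical obstacle is the construction and correctness of $G_{\text{red}}$: one must show that $\tOh(B)$ planar-embedded vertices suffice to capture reachability in the full $n^2$-cell grid across all $B$ updates. A natural approach is to ``condense'' each maximal background region disjoint from $S$ into a planar gadget on its boundary vertices that encodes background reachability between boundary cells; because neighboring $S$-cells share boundaries, a recursive decomposition of the grid aligned with the coordinates of $S$ should keep the total boundary complexity at $\tOh(B)$, using that a subgrid of $m$ cells has a separator of size $\Oh(\sqrt{m})$. Proving this condensation preserves reachability and plugs into the Diks--Sankowski framework is where I expect the bulk of the technical work; the surrounding batch accounting is then standard.
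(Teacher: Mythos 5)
Your high-level strategy (batch the offline updates, spend per-batch time that amortizes to $\tOh(n^{2/3})$ per update) is the same as the paper's, but the per-batch mechanism is genuinely different, and it contains a gap that I don't think can be repaired in the form you propose.

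The gap is the claim that an $\tOh(B)$-vertex \emph{planar} DAG $G_{\text{red}}$ exists (and can be built in $\tOh(n^2)$) that preserves $(1,1)$--$(n,n)$ reachability under all activation patterns of the $B = n^{4/3}$ batch terminals. Your justification is a separator argument: condense each background region into a gadget on its boundary, claiming total boundary complexity $\tOh(B)$. But that is not what the separator bound gives. If you carve the $n\times n$ grid into regions each containing $\Oh(1)$ terminals, you need $\Omega(B)$ regions of typical size $n^2/B$, each with boundary $\Theta(n/\sqrt{B})$, so the total boundary is $\Theta(n\sqrt{B})$. For your choice $B = n^{4/3}$ this is $\Theta(n^{5/3})$, not $\tOh(n^{4/3})$. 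Diks--Sankowski on a graph of size $n^{5/3}$ costs $\tOh(n^{5/6})$ per operation, so the $B$ operations of a batch cost $\tOh(n^{4/3+5/6}) = \tOh(n^{13/6})$, blowing past your $\tOh(n^2)$ per-batch budget. Re-optimizing $B$ against the true boundary size $n\sqrt{B}$ gives a per-update amortized cost no better than roughly $\tOh(n^{4/5})$ --- i.e., you would recover only the ``simplified variant'' mentioned in the paper's technical overview, not the claimed $\tOh(n^{2/3})$. Moreover, even setting aside the size issue, it is far from clear that background reachability between boundary vertices of a grid piece can be encoded as a \emph{planar} gadget as Diks--Sankowski requires; the paper's own terminal-reachability graph $H$ has $\Theta(k^2)$ edges and is not planar.

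The paper avoids exactly this obstacle. Instead of trying to fit the compressed instance into a black-box dynamic planar reachability algorithm, it builds a custom hierarchical data structure over canonical blocks storing, per block $B$, only $\Oh(|\partial B| + |\term_B|)$ numbers: intervals $[\sA(p),\sZ(p)]$ for terminals/inputs and levels $\ell(q)$ for outputs (Corollary~\ref{cor:reachlabels}). Batch updates are pushed up the block hierarchy in $\tOh(n\sqrt{k})$ total time because large blocks are shared among all $k$ updates, and each reachability query is answered in $\tOh(k)$ (not $\tOh(\sqrt{k})$) via orthogonal range searching across the $\Oh(\log n)$ levels, never materializing a reduced graph at all. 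Balancing $\tOh(n\sqrt{k})$ update time against $\tOh(k)$ query time $\times$ $k$ queries gives $k = \tOh(n^{2/3})$ and the claimed bound. So the key missing ingredient in your proposal is a structural lemma like Corollary~\ref{cor:reachlabels}; the Diks--Sankowski black box is not the right tool here because the intermediate object it would need (a small planar reachability preserver) is neither available at the size you assume nor obviously planar.
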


The high-level approach of this algorithm is to consider all $U$ updates in batches of size at most~$k$, which we call \emph{chunks}. Roughly speaking, we design a grid reachability data structure that given a chunk of $k$ updates $u_1, \dots, u_k$, enables us to (1) for any $1\le j\le k$, answer a grid reachability query in the matrix updated by $u_1,\dots, u_j$ in time $\tOh(k)$ and (2) obtain the data structure for the matrix updated by the complete chunk $u_1,\dots, u_k$ in time $\tOh(n \sqrt{k} + k)$. This way, for each of the $\Oh(U/k)$ chunks, we only need time $\tOh(k^2)$ to answer all $k$ reachability queries for this chunk and time $\tOh(n\sqrt{k} + k)$ to update the data structure for the next chunk, leading to a total time of $\tOh( (U/k) (k^2 + n \sqrt{k})) = \tOh( U (k + n / \sqrt{k}))$. By setting $k \approx n^{2/3}$, we obtain the desired algorithm running in time $\tOh(U n^{2/3})$ after $\tOh(n^2)$ preprocessing.

To obtain our data structure, we build on the reachability data structure of Ben Avraham et al.~\cite{avraham2015faster}, augmented by two crucial insights: How to incorporate a chunk of $k$ updates faster than $k$ single updates, and how to succinctly store reachability information for $k$ distinguished nodes in the grid (coined \emph{terminals}, which correspond to the updates of the next chunk) in the data structure. The latter is given by a surprisingly succinct characterization of reachability of terminals in a grid graph (see Corollary~\ref{cor:reachlabels}), which is the key technical contribution for the algorithm. 

\smallskip
\begin{wrapfigure}{r}{0.25\textwidth}
\centering
\includegraphics[width=0.25\textwidth]{figures/canonical_blocks_without_quark.pdf}
\end{wrapfigure}
Let us give a more detailed overview of our algorithm and its main ingredients. 
Start with a \emph{block} $[n] \times [n]$ corresponding to the matrix $M$. Repeatedly split every block horizontally in the middle, and then split every block vertically in the middle, until we end up with constant-size blocks.
We call all the blocks considered during this process (not just the constant-size blocks!) the ``canonical'' blocks, see the figure to the right. Ben Avraham et al.~\cite{avraham2015faster} showed that one can store for each canonical block of sidelength $s$ reachability information for each pair of boundary nodes, succinctly represented using only $\tOh(s)$ bits of space, and efficiently computable in time $\tOh(s)$ from the information of the two canonical child-blocks. In particular, over all blocks this information can be maintained in time $\tOh(n)$ per update $u_i$.

\emph{Ingredient 1: Batched updates.} 
The first insight is that we can compute the reachability after a given chunk of $k$ updates $u_1,\ldots, u_k$ faster than $\tOh(nk)$: Intuitively, each update \enquote{touches} roughly $2 \log n$ blocks -- all those that contain the node which is activated or deactivated. Our approach now uses that among the canonical blocks containing an update, the large blocks must be shared by many updates. Specifically, instead of recomputing the reachability information of the large blocks at the top of the hierarchy $k$ times, we perform those updates jointly and thus avoid the runtime of $k$ explicit updates of large blocks. A careful tradeoff yields an update time of $\tOh(n\sqrt{k} + k)$.

\emph{Ingredient 2: Reachability among terminals.} 
Now fix a chunk $C = u_{\ell + 1},\ldots,u_{\ell+k}$ and let $M$ denote the matrix at the beginning of $C$, i.e., after incorporating all updates prior to $u_{\ell+1}$. Denote by $\mathcal{T}$ (``terminals'') the entries that get activated or deactivated during this chunk $C$, and also add $(1,1)$ and $(n,n)$ to the set of terminals. We first deactivate all terminals, obtaining a matrix $M^\zero$ and a corresponding grid graph $G_{M^\zero}$. The basic idea now is to determine for each pair of terminals $t,t' \in \mathcal{T}$ whether $t'$ is reachable from $t$ in $G_{M^\zero}$. 

Let us sketch a simplified algorithm that assumes we have built a graph~$H$ with vertex set $\mathcal{T}$, containing a directed edge $(t,t')$ if and only if $t'$ is reachable from $t$ in $G_{M^\zero}$. To answer the reachability query whether $(n,n)$ is reachable from $(1,1)$ after updating $M$ by $u_{\ell+1}, \dots, u_{\ell+j}$, we proceed as follows:
 For each terminal $t$, activate $t$ in $H$ if and only if $t$ is activated in $M$ updated by $u_{\ell+1}, ..., u_{\ell+j}$. Check whether $(n,n)$ is reachable from $(1,1)$ in $H$. Since $H$ has $O(k)$ nodes and $O(k^2)$ edges, this reachability check can be performed in time $\Oh(k^2)$. (By choosing a chunk size of $k \approx n^{2/5}$ this would result in an $\tOh(U n^{4/5})$ algorithm for Offline Dynamic Grid Reachability, ignoring the preprocessing time.) We will later show how to improve the reachability query time from $\Oh(k^2)$ to $\tOh(k)$ by working directly on the graph $G_{M^\zero}$ instead of constructing the graph $H$. These details are given in the subsequent sections.

\begin{wrapfigure}{r}{0.3\textwidth}
\centering
\includegraphics[width=0.3\textwidth]{figures/reach_region_without_quark.pdf}
\end{wrapfigure}
It remains to describe how to determine reachability information among terminals. To this end, we design a surprisingly succinct representation of reachability from terminals to block boundaries.
Consider a canonical block $B$ and let $\mathcal{T}_B$ be the terminals in $B$. For each terminal $t \in \mathcal{T}_B$ let $A(t)$ be the lowest/rightmost point on the right/upper boundary of $B$ that is reachable from $t$, and similarly let $Z(t)$ be the highest/leftmost reachable point, see the illustration to the right. We label any terminal $t = (x,y)$ by $L(t) := x+y$, i.e., the anti-diagonal that $t$ is contained in. For any right/upper boundary point $q$ of $B$, let $\ell(q)$ be the minimal label of any terminal in $\mathcal{T}_B$ from which $q$ is reachable. We prove the following succinct representation of reachability (see Corollary~\ref{cor:reachlabels}) that significantly generalizes a previous characterization for reachability among the \emph{boundaries} of blocks~\cite{AltG95, avraham2015faster}. 

\medskip
\noindent 
\begin{center}
\emph{For any right/upper boundary point $q$ of $B$ and any terminal $t \in \mathcal{T}_B$, \\ $q$ is reachable from $t$ if and only if $q \in [A(t),Z(t)]$ and $\ell(q) \le L(t)$.}
\end{center}
\medskip

Here, $q \in [A(t),Z(t)]$ is to be understood as ``$q$ lies between $A(t)$ and $Z(t)$ along the boundary of $B$'', which can be expressed using a constant number of inequalities. 
The ``only if'' part is immediate, since $t$ can only reach boundary vertices in $[A(t),Z(t)]$, and $\ell(q)$ is the minimal label of any terminal reaching $q$; the ``if'' part is surprising.

Assume we can maintain the information $A(t),Z(t),\ell(q)$. Then using this characterization we can determine all terminals reaching a boundary point $q$ by a single call to \emph{orthogonal range searching}, since we can express the characterization using a constant number of inequalities. A complex extension of this trick allows us to determine reachability among terminals (indeed, this technical overview is missing many details of Section~\ref{sec:reach-ds}). This yields our algorithm, see Sections~\ref{sec:algo} and \ref{sec:reach-ds} for details.

\paragraph{Conditional lower bound.}
Our reduction starts from the $k$-OV problem, which asks for $k$ vectors from $k$ given sets such that in no dimension all vectors are 1. More formally:

\medskip
\noindent 
\emph{$k$-Orthogonal Vectors ($k$-OV)}: 
Given sets $V_1,\ldots,V_k$ of $N$ vectors in $\{0,1\}^D$, are there $v_1 \in V_1,\ldots, v_k \in V_k$ such that for any $j \in [D]$ there exists an $i \in [k]$ with $v_i[j] = 0$?
\medskip

\noindent 
A naive algorithm solves $k$-OV in time $\Oh(N^k D)$. 
It is well-known that the Strong Exponential Time Hypothesis implies that $k$-OV has no $\Oh(N^{k-\eps} \textup{poly}(D))$-time algorithm for all $\eps > 0$ and $k \ge 2$~\cite{Wil05}. 

In our reduction we set $k = 4$. An overview of our construction can be found in Figure \vref{fig:whole_reduction}. We consider \emph{canonical translations} of the form $\tau = (\eps \cdot h_1,\eps \cdot h_2) \in \RR^2$ with $h_1,h_2 \in \{0,\ldots,N^2-1\}$. 
By a simple gadget, we ensure that any translation resulting in a \Fr distance of at most 1 must be close to a canonical translation. For simplicity, here we restrict our attention to exactly the canonical translations.
Note that there are $N^4$ canonical translations, and thus they are in one-to-one correspondence to choices of vectors $(v_1,\ldots,v_4) \in V_1 \times \ldots \times V_4$. In other words, the outermost existential quantifier in the definition of 4-OV corresponds to the existential quantifier over the translation~$\tau$ in the Fréchet distance under translation. 

The next part in the definition of 4-OV is the universal quantifier over all dimensions $j \in [D]$. For this, our constructed curves $\pi,\sigma$ are split into $\pi = \pi^{(1)} \ldots \pi^{(D)}, \sigma = \sigma^{(1)} \ldots \sigma^{(D)}$ such that $\pi^{(i)}, \sigma^{(j)}$ are very far for $i \ne j$. This ensures that the \Fr distance of $\pi,\sigma$ is the maximum over all \Fr distances of $\pi^{(i)},\sigma^{(i)}$, and thus simulates a universal quantifier.

The next part is an existential quantifier over $i \in [k]$. Here we need an OR-gadget for the \Fr distance. Such a construction in principle exists in previous work~\cite{Bringmann14,abboud2018tighter}, however, no previous construction would work with translations, in the sense that a translation in $y$-direction could only decrease the \Fr distance. By constructing a more complex OR-gadget, we avoid this monotonicity.

Finally, we need to implement a check whether the translation $\tau$ corresponds to a particular choice of vectors. We exemplify this with the first dimension of the translation, which we call $\tau_1$, explaining how it corresponds to choosing $(v_1,v_2)$. Let $\ind(v_1),\ind(v_2) \in \{0, \dots, N-1\}$ be the indices of these vectors in their sets $V_1,V_2$, respectively. We want to test whether $\tau_1 = \eps \cdot (\ind(v_1) + \ind(v_2) \cdot N)$. We split this equality into two inequalities. For the inequality $\tau_1 \ge \eps \cdot (\ind(v_1) + \ind(v_2) \cdot N)$, in one curve we place a point at $\pi_1 = (1 + \eps \cdot \ind(v_1), -1-\eta)$, and in the other we place a point at $\sigma_1 = (-1 - \eps \cdot \ind(v_2) \cdot N, -1-\eta)$, for some $\eta > 0$ which we specify later in this work. Then the distance of $\pi_1$ to the translated $\sigma_1$ is essentially their difference in $x$-coordinates, which is $(1 + \eps \cdot \ind(v_1)) - (-1 - \eps \cdot \ind(v_2) \cdot N + \tau_1) = 2 + \eps \cdot (\ind(v_1) + \ind(v_2) \cdot N) - \tau_1$. This is at most~2 if and only if the inequality for $\tau_1$ holds. 
We handle the opposite inequality similarly, and we concatenate the constructed points for both inequalities in order to test equality.

In total, our construction yields curves $\pi,\sigma$ such that their discrete \Fr distance under translation is at most 1 if and only if $V_1,\ldots,V_4$ contain orthogonal vectors. The curves $\pi,\sigma$ consist of $n = \Oh(D \cdot N)$ vertices. Hence, an algorithm for the discrete \Fr distance under translation in time $\Oh(n^{4-\eps})$ would yield an algorithm for 4-OV in time $\Oh(N^{4-\eps} \textup{poly}(D))$, and thus violate the Strong Exponential Time Hypothesis. See Section~\ref{sec:lowerbound} for details.

\subsection{Further related work}
\paragraph{On directed planar/grid graphs.}
In this paper we improve offline dynamic $s$-$t$-reachability in directed grid graphs. The previously best algorithm for this problem came from a more general solution to dynamic reachability in directed planar graphs. For this problem, a solution with $\tOh(N^{2/3})$ update time was given by Subramanian~\cite{subramanian1993fully}, which was later improved to update time $\tOh(\sqrt{N})$ by Diks and Sankowski~\cite{diks2007dynamic}. In particular, our work yields additional motivation to study offline variants of classic dynamic graph problems.

Related work on dynamic directed planar or grid graphs includes, e.g., shortest path computation~\cite{klein1998fully,abraham2012fully,italiano2011improved}, reachability in the decremental setting~\cite{italiano2017decremental}, or computing the transitive closure~\cite{diks2007dynamic}. Recently, the first conditional lower bounds for dynamic problems on planar graphs were shown by Abboud and Dahlgaard~\cite{abboud2016popular}, however, they did not cover dynamic reachability in directed planar graphs.

Other work on directed planar and grid graphs studies, e.g., the minimum amount of space necessary to determine reachability between two nodes in polynomial time~\cite{asano2014,AshidaN18}. For grid graphs this was recently improved from $\tOh(\sqrt{N})$ to $\tOh(N^{1/3})$~\cite{AshidaN18}, but with very different techniques compared to ours.

\paragraph{On related distance measures.}
A related distance measure to the Fréchet distance, which does not take traversals into account, is the Hausdorff distance. The Hausdorff distance can be computed in almost linear time in the plane. As for the Fréchet distance, there are also related variants of the Hausdorff distance that allow rigid motions of one of the input curves. For the Hausdorff distance under translation there exists a cubic-time algorithm \cite{huttenlocher1993upper} for all $L_p$ norms in the plane. For the $L_1$ and $L_\infty$ norm, there even is a quadratic-time algorithm for the Hausdorff distance under translation in the plane \cite{chew_improvements_1992}. Surprisingly, despite this quadratic-time algorithm, the is a cubic lower bound for the size of the arrangement of the Hausdorff distance under translation that holds for all $L_p$ norms \cite{rucklidge_lower_1996}.

The cubic-time algorithm for the Hausdorff distance under translation constructs the upper envelope of Voronoi surfaces and then tests for minima on the vertices and edges of this construction. Thus, this algorithm is also an arrangement-based approach. Similarly, the first algorithms \cite{efrat2001pattern, alt2001matching} for the continuous Fréchet distance under translation as well as the weak continuous Fréchet distance under translation relied on arrangement constructions.

A recently introduced variant of the \Fr distance is the \emph{\Fr gap}~\cite{filtser2015discrete,fan2017computing}. 
Some researchers have argued that this measure is similar to the \Fr distance under translation in certain aspects, in particular since the \Fr gap between a curve $\pi$ and a translation $\pi + \tau$ of the same curve is 0~\cite{fan2017computing}. Moreover, the \Fr gap can be computed significantly faster, with the currently fastest known algorithm running in time $\Oh(n^3)$ \cite{filtser2015discrete}. In some sense, our conditional lower bound in Theorem~\ref{thm:mainlower} explains why replacing the \Fr distance under translation by such a surrogate measure is necessary to obtain more efficient algorithms. Additionally, the discrete Fréchet distance \emph{with shortcuts} was also recently considered in a translation-invariant setting \cite{Filtser18}.

\paragraph{On related reachability data structures.}
In \cite{AltG95}, a reachability data structure on the free-space diagram is given to compute the Fréchet distance between closed curves and also to compute the best matching to any subcurve under the Fréchet distance. In \cite{maheshwari_improved_2014}, a data structure called the free-space map is presented, which improves the reachability data structure of \cite{AltG95} and as a consequence shaves off logarithmic factors in the running time for computing the above mentioned applications as well as improving the running time for other applications.

\subsection{Organization}
We start off with introducing basic definitions, notational conventions, and algorithmic tools in \secref{preliminaries}. Afterwards, in \secref{algo}, we give an overview of our algorithmic result and we reduce the problem to designing a certain data structure for Offline Dynamic Grid Reachability. This data structure, our main technical contribution, is developed in \secref{reach-ds}. Finally, we prove our conditional lower bound of $n^{4-o(1)}$ in \secref{lowerbound}.

 \newcommand{\OR}{{\mathcal{OR}}}
\newcommand{\univ}{{\mathcal{U}}}

\newcommand{\concat}{\circ}

\newcommand{\dF}{\delta_F}

\section{Preliminaries} \label{sec:preliminaries}

We let $[n]$ denote the set $\{1,\dots, n\}$. Furthermore, for convenience, we use as convention that $\min \emptyset = \infty$ and $\max \emptyset = - \infty$.

\subsection{Curves, Traversals, Fréchet distances, and more}

A polygonal curve $\pi$ of length $n$ over $\mathbb{R}^d$ is a sequence of points $\pi_1, \dots, \pi_n \in \mathbb{R}^d$. Throughout the paper, we only consider polygonal curves in the Euclidean plane, i.e., $d=2$. Given any translation vector $\tau \in \mathbb{R}^2$, we denote by $\pi + \tau$ the polygonal curve $\pi' = (\pi'_1, \dots, \pi'_n)$ given by $\pi'_i = \pi_i + \tau$. 

We call any pair $(i,j)\in [n]\times[n]$ a \emph{position}. A \emph{traversal} $T$ is a sequence $t_1, \dots, t_\ell$ of positions, where $t_k = (i,j)$ implies that $t_{k+1}$ is either $(i+1, j)$ (that is, we advance one step in $\pi$ while staying in $\sigma_j$), $(i, j+1)$ (we advance in $\sigma$ while staying in $\pi_i$), or $(i+1,j+1)$ (we advance in both curves simultaneously). We call $T= (t_1, \dots, t_\ell)$ a traversal of $\pi, \sigma$, if $t_1 = (1,1)$ and $t_\ell = (n,n)$. 

We now define two types of concatenations: a concatenation of curves and a concatenation of traversals. Let $\pi = (\pi_1, \dots, \pi_n), \sigma = (\sigma_1, \dots, \sigma_n)$ be polygonal curves of lengths $n$. We define the concatenation of $\pi$ and $\sigma$ as $\pi \concat \sigma \coloneqq (\pi_1, \dots, \pi_n, \sigma_1, \dots, \sigma_n)$. The resulting curve has length $2n$.
We now define the concatenation of traversals. Given two traversals $T = (t_1, \dots, t_\ell)$ and $T' = (t'_1, \dots, t'_{\ell'})$ with $t_\ell = t'_1$, we define the concatenated traversal as $T \concat T' \coloneqq (t_1, \dots, t_\ell = t'_1, t'_2, \dots, t'_{\ell'})$. Note that we obtain a traversals from $t_1$ to $t'_{\ell'}$.

The \emph{discrete Fréchet distance} is formally defined as
\[ \dF(\pi, \sigma) \coloneqq \min_{T = ((i_1, j_1), \dots, (i_\ell, j_\ell))} \max_{1\le k \le \ell} \|\pi_{i_k} - \sigma_{j_k} \|, \]
where $T$ ranges over all traversals of $\pi, \sigma$ and $\|\cdot \|$ denotes the Euclidean distance in $\mathbb{R}^2$.

We obtain a well-known equivalent definition as follows: Fix some distance $\delta \ge 0$. We call a position $(i,j)$ \emph{free} if $\| \pi_i - \sigma_j\| \le \delta$. We say that a traversal $T= (t_1, \dots, t_\ell)$ of $\pi, \sigma$ is a \emph{valid traversal} for $\delta$ if $t_1, \dots, t_\ell$ are all free positions. The discrete Fréchet distance of $\pi, \sigma$ is then the smallest $\delta$ such that there is a valid traversal of $\pi,\sigma$ for $\delta$.

Analogously, consider the $n \times n$ matrix $M$ with $M_{i,j} = 1$ if $(i,j)$ is free, and $M_{i,j} = 0$ otherwise. We call any traversal $T = (t_1, \dots, t_\ell)$ a \emph{monotone path from $t_1$ to $t_\ell$}. If all positions $(i,j)$ visited by $T$ satisfy $M_{i,j} = 1$, we call $T$ a \emph{monotone 1-path} from $t_1$ to $t_\ell$ in $M$. As yet another formulation, consider the $n \times n$ grid graph $G_M$ where vertex $(i,j)$ has directed edges to all of $(i,j+1), (i+1,j)$, and $(i+1,j+1)$ (in case they exist). Deactivate (i.e., remove) all non-free vertices $(i,j)$ from $G_M$. Then a monotone 1-path in $M$ corresponds to a (directed) path in $G_M$. 
Hence, $\dF(\pi, \sigma) \le \delta$ is equivalent to the existence of a valid traversal of $\pi, \sigma$ for $\delta$, which in turn is equivalent to the existence of a monotone 1-path from $(1,1)$ to $(n,n)$ in the matrix~$M$, and to vertex $(n,n)$ being reachable from $(1,1)$ in $G_M$.

Finally, we define the discrete Fréchet distance under translation as $\min_{\tau\in \mathbb{R}^2} \dF(\pi, \sigma + \tau)$, i.e., the smallest discrete Fréchet distance of $\pi$ to any translation of $\sigma$.

\subsection{Hardness Assumptions}

The \emph{Strong Exponential Time Hypothesis (SETH)} was introduced by Impagliazzo and Paturi~\cite{ImpagliazzoP01} and essentially postulates that there is no exponential-time improvement over exhaustive search for the Satisfiability Problem.
\begin{hypo}[Strong Exponential Time Hypothesis (SETH)]
For any $\epsilon > 0$ there exists $k \ge 3$ such that $k$-SAT has no $\Oh((2-\epsilon)^n)$-time algorithm.
\end{hypo}

In fact, our reductions even hold under a weaker assumption, specifically, the \emph{$k$-OV Hypothesis}.\footnote{In fact, we only need the corresponding hypothesis for $4$-OV.}  Recall the $k$-OV problem: Given sets $V_1,\ldots,V_k$ of $N$ vectors in $\{0,1\}^D$, the task is to determine whether there are $v_1 \in V_1,\ldots, v_k \in V_k$ such that for all $j \in [D]$ there exists an $i \in [k]$ with $v_i[j] = 0$.  
\begin{hypo}[$k$-OV Hypothesis]
For any $k \ge 2$ and $\epsilon > 0$, there is no $\Oh(N^{k-\epsilon} \mathrm{poly}(D))$-time algorithm for $k$-OV.
\end{hypo}
 The well-known split-and-list technique due to Williams~\cite{Wil05} shows that SETH  implies the $k$-OV Hypothesis. Thus, any conditional lower bound that holds under the $k$-OV hypothesis also holds under SETH.

\subsection{Orthogonal range data structures}
\label{sec:orthrange}

We will use a tool from geometric data structures, namely \emph{(dynamic) orthogonal range data structures}. Let $S$ be a set of key-value pairs $s = (k_s,v_s) \in \mathbb{Z}^d \times \mathbb{Z}$; in our applications, we will have $d=2$ or $d=3$. An orthogonal range data structure on $S$ enables us to query the maximal value of any pair in $S$ whose key lies in a given orthogonal range.
Formally, we say \emph{$\OR$ stores $v_s$ under the key $k_s$ for $s \in S$ for minimization queries}, if $\OR$ supports, for any $\ell_1, u_1, \ell_2, u_2, \dots, \ell_d, u_d \in \mathbb{Z} \cup \{-\infty,\infty\}$, queries of the form 
\[ \OR.\min([\ell_1, u_1]\times \cdots \times [\ell_d, u_d]): \text{ return } \min \{ v_s \mid s = (k_s,v_s) \in S, k_s \in [\ell_1, u_1]\times \cdots \times [\ell_d, u_d] \}. \]
We will also consider analogous maximization queries.

Classic results~\cite{GabowBT84, Chazelle88} show that for any set $S$ of size $n$ and $d=2$, we can construct such a data structure $\OR$ in time and space $\Oh(n\log n)$, supporting minimization (or maximization) queries in time $\Oh(\log n)$.

In Section~\ref{sec:singlestepreach}, we will also use an orthogonal range searching data structure that allows (1) to \emph{report} all values of pairs in $S$ whose keys lie in a given orthogonal range, and (2) to \emph{remove} a key-value pair from $S$. Formally, we say that \emph{$\OR$ stores $v_s$ under the key $k_s$ for $s \in S$ for decremental range reporting queries}, if $\OR$ supports, for any $\ell_1, u_1, \ell_2, u_2, \dots, \ell_d, u_d \in \mathbb{Z} \cup \{-\infty,\infty\}$, queries of the form 
\[ \OR.\mathrm{report}([\ell_1, u_1]\times \cdots \times [\ell_d, u_d]): \text{ return } \{ v_s \mid s = (k_s,v_s) \in S, k_s \in [\ell_1, u_1]\times \cdots \times [\ell_d, u_d] \}, \]
as well as deletions from the set $S$.

Mortensen~\cite{Mortensen06} and Chan and Tsakalidis~\cite{ChanT17} showed how to construct such a data structure $\OR$ for any set $S$ of size $n$ in time and space $\Oh(n \log^{d-1} n)$, deletion time $\Oh( \log^{d-1} n)$ and query time $\Oh(\log^{d-1} n + k)$, where $k$ denotes the output size of the query. (These works obtain even stronger results, however, we use simplified bounds for ease of presentation.)

\newcommand{\trav}[2]{{#1 \rightsquigarrow #2}}
\newcommand{\travseq}[3]{{#1 \rightsquigarrow #2 \rightsquigarrow \dots \rightsquigarrow #3}}
\newcommand{\travsub}[3]{{#1 \rightsquigarrow_{#3} #2}}

\newcommand{\mypath}{P}

\newcommand{\sA}{\mathrm{A}}
\newcommand{\sZ}{\mathrm{Z}}
\newcommand{\intvl}{{\mathcal{I}}}

\newcommand{\rev}{\mathrm{rev}}
\newcommand{\ellrev}{\ell^\rev}
\newcommand{\sArev}{\sA^\rev}
\newcommand{\sZrev}{\sZ^\rev}
\newcommand{\Lrev}{L^\rev}
\newcommand{\intvlrev}{\intvl^\rev}
\newcommand{\reachrev}{\reach^\rev}

\newcommand{\lt}{\mathbf{l}}
\newcommand{\rt}{\mathbf{r}}
\newcommand{\tp}{\mathrm{top}}

\newcommand{\Pcom}{P_\mathrm{com}}
\newcommand{\face}{{\ensuremath {\mathcal{F}}}\xspace}
\newcommand{\reachr}{{\mathcal{R}}}

\newcommand{\reach}{\mathrm{Reach}}
\newcommand{\singleStepReach}{\mathrm{SingleStepReach}}

\newcommand{\R}{\mathbb{R}}

\newcommand{\Bin}{B^-}
\newcommand{\Bout}{B^+}
\newcommand{\Bmid}{B^\mathrm{mid}}
\newcommand{\Bmidfree}{B^\mathrm{mid}_\mathrm{free}}
\newcommand{\Jmid}{{J^\mathrm{mid}}}
\newcommand{\sBbound}{{|\partial B|}}

\newcommand{\bell}{\bar{\ell}}

\newcommand{\qcenter}{q_\mathrm{c}}

\newcommand{\diagram}{{\mathcal{D}}}

\newcommand{\update}[1]{[\![#1]\!]}

\newcommand{\Parameter}{\State \textbf{parameter:} }

\newcommand{\term}{{\mathcal{T}}}

\newcommand{\blocks}{{\mathcal{B}}}

\newcommand{\ORB}{\OR_{B}}

\newcommand{\dec}{\mathrm{dec}}

\newcommand{\arr}{{\mathcal{A}}}

\newcommand{\disk}{D}

\newcommand{\optdelta}{{\delta^*}}

\newcommand{\crossingop}[3]{C^{#3}_{#1}(#2)}
\newcommand{\crossingA}[2]{\crossingop{#1}{#2}{1}}
\newcommand{\crossingB}[2]{\crossingop{#1}{#2}{2}}
\newcommand{\crossing}[2]{\crossingop{#1}{#2}{\cdot}}

\newcommand{\crossings}[1]{{\mathcal{C}}_{#1}}

\newcommand{\partJ}{{\mathcal{J}}}

\section{Algorithm: Reduction to Grid Reachability} \label{sec:algo}

In this section, we prove our algorithmic result by showing how a certain grid reachability data structure (that we give in \secref{reach-ds}) yields an $\tOh(n^{4+2/3})$-time algorithm for computing the discrete Fréchet distance under translation. 

We start with a formal overview of the algorithm.
First, we reduce the decision problem (i.e., is the discrete Fréchet  distance under translation of $\pi, \sigma$ at most $\delta$?) to the problem of determining reachability in a dynamic grid graph, as shown by Ben Avraham et al.~\cite{avraham2015faster}. However, noting that all updates and queries are known in advance, we observe that the following \emph{offline} version suffices.

\begin{problem}[Offline Dynamic Grid Reachability]
Let $M$ be an $n \times n$ matrix over $\{0,1\}$. We call $u = (p,b)$ with $p \in [n]\times [n]$ and $b\in \{0,1\}$ an \emph{update} and define $M\update{u}$ as the matrix obtained by setting the bit at position $p$ to $b$, i.e., 
    \[M\update{(p,b)}_{i,j}  = \begin{cases} b & \text{if } p = (i,j), \\ M_{i,j} & \text{otherwise. } \end{cases} \]    
For any sequence of updates $u_1, \dots, u_k \in ([n] \times [n]) \times \{0,1\}$ with $k \ge 2$, we define $M\update{u_1, \dots, u_k} \coloneqq (M\update{u_1})\update{u_2, \dots, u_k}$.

The \emph{Offline Dynamic Grid Reachability} problem asks to determine, given $M$ and any sequence of updates $u_1, \dots, u_U \in ([n]\times [n]) \times \{0,1\}$, whether there is a monotone 1-path from $(1,1)$ to $(n,n)$ in $M\update{u_1, \dots, u_k}$ for any $1 \le k \le U$. 
\end{problem}

We show the following reduction in \secref{redDynGridReach}.

\begin{lem}\label{lem:redDynGridReach}
Assume there is an algorithm solving Offline Dynamic Grid Reachability in time $T(n,U)$. Then there is an algorithm that, given $\delta > 0$ and polygonal curves $\pi,\sigma$ of length $n$ over $\mathbb{R}^2$, determines whether $\dF(\pi,\sigma+\tau) \le \delta$ for some $\tau \in \mathbb{R}^2$ in time $\Oh(T(n, n^{4}))$.
\end{lem}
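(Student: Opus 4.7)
The plan is to follow the approach of Ben Avraham et al.\ sketched in \secref{techoverview}, turning it into a clean reduction to Offline Dynamic Grid Reachability. For each pair $(i,j) \in [n]\times[n]$, define the disk $\disk_{i,j} = \{\tau \in \R^2 : \|\pi_i - (\sigma_j + \tau)\| \le \delta\}$, which is the set of translations under which position $(i,j)$ becomes free. I would construct the arrangement $\arr$ of these $n^2$ disks; by standard bounds $\arr$ has $\Oh(n^4)$ vertices, edges, and faces and can be built in time $\Oh(n^4)$, using symbolic perturbation so that each boundary arc belongs to exactly one disk. Within any face $f$ of $\arr$, the set of free positions is identical for all $\tau \in f$, so the free-space matrix is constant on $f$; call it $M(f)$. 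By the equivalence stated in \secref{preliminaries}, $\dF(\pi, \sigma + \tau) \le \delta$ holds for some $\tau \in \R^2$ if and only if there is a face $f$ of $\arr$ such that $(n,n)$ is reachable from $(1,1)$ by a monotone $1$-path in $M(f)$.

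Next, I would enumerate the faces of $\arr$ along a walk that changes only one matrix entry at each step. Take the dual graph whose nodes are the faces of $\arr$ and whose edges connect faces sharing a common boundary arc; build a spanning tree of it, double each tree edge, and compute an Euler tour of the resulting multigraph. This yields a sequence $f_1, \dots, f_L$ with $L = \Oh(n^4)$ such that every face of $\arr$ appears at least once and consecutive faces $f_\ell, f_{\ell+1}$ share a common boundary arc. That arc lies on the boundary of exactly one disk $\disk_{i_\ell,j_\ell}$, so $M(f_\ell)$ and $M(f_{\ell+1})$ differ only in entry $(i_\ell, j_\ell)$. I would record the update $u_\ell = ((i_\ell,j_\ell), b_\ell)$ with $b_\ell = M(f_{\ell+1})_{i_\ell,j_\ell}$; given appropriate cross-pointers between dual edges, boundary arcs, and their defining disks, each $u_\ell$ can be read off in $\Oh(1)$ time.

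Finally, I would initialize $M := M(f_1)$ in time $\Oh(n^2)$ by picking any representative $\tau \in f_1$ and evaluating all pairwise distances, perform an explicit $\Oh(n^2)$-time reachability check on this initial $M$, and then invoke the Offline Dynamic Grid Reachability algorithm on $M$ together with the sequence $u_1, \dots, u_{L-1}$ of $U = \Oh(n^4)$ updates. By construction $M\update{u_1, \dots, u_k} = M(f_{k+1})$, so the returned answers for $k = 1, \dots, L-1$, combined with the initial check on $M(f_1)$, cover every face of $\arr$; we output \enquote{yes} iff at least one of these checks succeeds. Correctness then follows from the equivalence above and the fact that the Euler tour visits every equivalence class of translations. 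The total running time is $\Oh(n^4) + \Oh(T(n, n^4)) = \Oh(T(n, n^4))$, since any algorithm for Offline Dynamic Grid Reachability must at least read its $\Omega(n^4)$-sized input and hence $T(n, n^4) = \Omega(n^4)$.

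The main obstacle is not algorithmic but rather the careful (standard) bookkeeping around the disk arrangement: one must ensure that each Euler-tour step yields, in $\Oh(1)$ amortized time, the unique disk whose boundary is crossed and the post-crossing bit $b_\ell$. A doubly-connected edge list representation of $\arr$, augmented with pointers from each boundary arc to its defining disk $\disk_{i,j}$ and with adjacency between dual edges and primal arcs, suffices; general-position issues (triple-point coincidences, tangencies) are handled by symbolic perturbation, which preserves the value of $\dF(\pi,\sigma+\tau)$ up to a negligible tiebreaking term.
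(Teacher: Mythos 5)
Your proposal reconstructs the face-walking version sketched in the paper's technical overview, but the paper's actual proof of this lemma deliberately deviates from that sketch — it walks over \emph{vertices} of the arrangement $\arr_\delta$ rather than over its two-dimensional faces — and that deviation is exactly the point you have glossed over.

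The gap is the degenerate-$\delta$ case. Your central equivalence, ``$\dF(\pi,\sigma+\tau)\le\delta$ for some $\tau$ iff some face $f$ gives a monotone $1$-path in $M(f)$,'' is false when the only witnessing translation $\tau^*$ is forced to lie on a lower-dimensional cell of $\arr_\delta$ (an arc or a vertex where several circles meet). At such a $\tau^*$, the closed-disk memberships can be a strict superset of those of \emph{every} incident $2$-face, so $M^{\tau^*}$ may admit a monotone $1$-path while no $M(f)$ does. This is not a far-fetched corner case: this lemma is invoked inside the parametric search of Lemma~\ref{lem:paramsearch}, which queries $\delta$ at or near critical values, precisely where such degeneracies occur. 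The paper's footnote in \secref{redDynGridReach} flags this explicitly and chooses to traverse arrangement vertices $\tau_v$ (via the plane graph $G_\delta$), using the fact that $M^{\tau_v}$ \emph{dominates} $M(f)$ entrywise for each face $f$ incident to $\tau_v$, so no witness can be missed.

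Your proposed fix, symbolic perturbation, does not obviously repair this, and your justification for it is incorrect. Perturbing the disk radii or centers does \emph{not} merely break ties while ``preserving the value of $\dF(\pi,\sigma+\tau)$ up to a negligible term'': a decision problem at exactly the threshold can flip under an arbitrarily small perturbation in the wrong direction (shrinking the disks can turn a YES into a NO, and you cannot know in advance which direction is safe). To salvage your route you would need to either (a) pick the representative point of each face to be an arrangement vertex on its boundary rather than an interior point, and argue via the domination property above; or (b) switch to the paper's vertex-walk outright, which also gives the $\Oh(n^4)$-length update sequence after the general-position argument bounding how many disks pass through a common point. Either way, the remaining bookkeeping in your proposal (Euler tour of a doubled spanning tree, $\Oh(1)$ per-step update extraction, running-time accounting) is sound and matches the paper.
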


Our speed-up is achieved by solving Offline Dynamic Grid Reachability in time $T(n,U) = \tOh(n^2 + U n^{2/3})$ (Ben Avraham et al.~\cite{avraham2015faster} achieved $T(n,U) = \Oh(n^2 + Un)$). To this end, we devise a \emph{grid reachability data structure}, which is our central technical contribution.

\begin{restatable}[Grid reachability data structure]{lem}{dslemma}\label{lem:ds}
Given an $n \times n$ matrix $M$ over $\{0,1\}$ and a set of \emph{terminals} $\term \subseteq [n] \times [n]$ of size~$k>0$, there is a data structure $\diagram_{M, \term}$ with the following properties. 
\begin{enumerate}[label=\roman*)]
\item\label{enum:ds-init} (Construction:) We can construct $\diagram_{M, \term}$ in time $\Oh(n^2 + k\log^2 n)$.
\item\label{enum:ds-reachqueries} (Reachability Query:) Given $F\subseteq \term$, we can determine in time $\Oh(k \log^3 n)$ whether there is a monotone path from $(1,1)$ to $(n,n)$ using only positions $(i,j)$ with $M_{i,j} = 1$ or $(i,j) \in F$. 
\item\label{enum:ds-updates} (Update:) Given $\term' \subseteq [n]\times [n]$ of size $k$ and an $n\times n$ matrix $M'$ over $\{0,1\}$ differing from $M$ in at most $k$ positions, we can update $\diagram_{M,\term}$ to $\diagram_{M',\term'}$ in time $\Oh(n\sqrt{k} \log n + k\log^2 n)$. Here, we assume $M'$ to be represented by the set $\Delta$ of positions in which $M$ and $M'$ differ.
\end{enumerate}
\end{restatable}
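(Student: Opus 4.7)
The plan is to organize the data structure around the canonical block hierarchy: recursively split $[n]\times[n]$ by alternating horizontal and vertical midcuts down to constant-sized cells, yielding an $\Oh(\log n)$-depth tree $\blocks$ whose blocks of sidelength $s$ total $\Oh(n^2/s^2)$ in count and $\Oh(n^2/s)$ in aggregate boundary length. For every block $B \in \blocks$ I maintain two layers. The first is the succinct boundary-reachability representation of Ben Avraham et al.~\cite{avraham2015faster}, encoding in $\tOh(|\partial B|)$ bits whether each pair of boundary cells of $B$ is connected by a monotone 1-path within $B$ when all terminals are treated as deactivated, and rebuildable in time $\tOh(|\partial B|)$ from its two child representations. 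The second layer is terminal-aware: for each $t \in \term \cap B$ the extreme reachable boundary points $\sA(t), \sZ(t)$ together with the anti-diagonal label $L(t) = t_1 + t_2$, and for each boundary point $q$ of $B$ the value $\ell_B(q) \coloneqq \min\{L(t) : t \in \term \cap B,\ q \text{ reachable from } t \text{ within } B\}$. All these per-block quantities are stored in the orthogonal range data structures of \secref{orthrange}, keyed by positional coordinates.

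For construction I build the first layer bottom-up in time $\sum_{s}\Oh(n^2/s^2) \cdot \tOh(s) = \tOh(n^2)$ and then insert each terminal into its $\Oh(\log n)$ canonical ancestor blocks via $\Oh(\log^2 n)$ orthogonal range operations each, giving the claimed $\Oh(n^2 + k\log^2 n)$. For the update, a set $\Delta$ of $k$ matrix changes affects only $\min(k, n^2/s^2)$ blocks at level $s$; summing $\tOh(s) \cdot \min(k, n^2/s^2)$ over dyadic sidelengths $s = 1,2,\dots,n$ splits into two geometric sums balanced at $s \approx n/\sqrt{k}$ and yields $\tOh(n\sqrt{k})$, while the terminal-layer refresh costs $\Oh(k\log^2 n)$ by the same per-terminal argument, matching the stated bound.

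The main obstacle is answering a reachability query in time $\Oh(k\log^3 n)$. Here I would appeal to the succinct characterization promised in the overview (Corollary~\ref{cor:reachlabels}), which asserts that a boundary point $q$ of $B$ is reachable from a terminal $t \in \term \cap B$ in the terminal-deactivated version of $B$ if and only if $q \in [\sA(t), \sZ(t)]$ and $\ell_B(q) \le L(t)$. Because these two conditions amount to a constant number of coordinate inequalities, the orthogonal range structures let me, given $t$, report all downstream terminals $t' \in F$ reachable from $t$ through a canonical ancestor $B$ in time $\tOh(1)$ per reported element. The query then runs a BFS on $F \cup \{(1,1),(n,n)\}$: each currently reached terminal probes its $\Oh(\log n)$ canonical ancestor blocks using the decremental range reporting structure from \secref{orthrange}, so that each element of $F$ is discovered and removed at most once across the entire traversal, and the BFS terminates as soon as $(n,n)$ is popped. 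Since at most $k$ terminals are touched, each through $\Oh(\log n)$ reports at amortized $\Oh(\log^2 n)$, the total query time is $\Oh(k \log^3 n)$.

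The hardest step will be proving that a BFS edge in this reduced view really corresponds to an actual monotone path in the grid, which is precisely the nontrivial ``if'' direction of the characterization: the anti-diagonal label $L(t) = t_1 + t_2$ is designed so that whenever $\ell_B(q) \le L(t)$ witnesses some other terminal $t^\star$ reaching $q$ from below, one can splice $t \rightsquigarrow q$ with a prefix $t^\star \rightsquigarrow q$ without violating monotonicity across block boundaries. Making this splicing argument airtight across all levels of $\blocks$, and checking that decrementally removing already-visited terminals from the orthogonal range structures does not corrupt the per-block $\ell_B$ values, is where I expect the bulk of the technical work in \secref{reach-ds} to lie.
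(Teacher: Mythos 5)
Your outline matches the paper's high-level architecture (canonical block hierarchy, the anti-diagonal labelling $L$, the characterization via $\sA,\sZ,\ell$, and the balanced update analysis around sidelength $\approx n/\sqrt{k}$), and the construction and update bounds are in the right ballpark. The gap is in the query procedure, and it is exactly where the paper's technical work lies.

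You describe a BFS in which, ``given $t$, [the range structures] report all downstream terminals $t'\in F$ reachable from $t$ through a canonical ancestor $B$ in time $\tOh(1)$ per reported element.'' Consider $B$ with children $B_1,B_2$ and splitting boundary $\Bmid$. A terminal $t\in B_1$ reaches $t'\in B_2$ inside $B$ iff there is some free $j\in\Bmid$ with
\[
\ind(j)\in[\sA_1(t),\sZ_1(t)],\quad \ell_1(j)\le L(t),\quad \ind(j)\in[\sArev_2(t'),\sZrev_2(t')],\quad \ellrev_2(j)\le \Lrev(t').
\]
For a fixed $t$ this is \emph{not} an orthogonal range on $t'$: the intersection $[\sA_1(t),\sZ_1(t)]\cap[\sArev_2(t'),\sZrev_2(t')]$ depends on $t'$, and the set $\{j : \trav{t}{j}\}$ is in general not even an interval in $\ind(j)$, since the condition $\ell_1(j)\le L(t)$ can carve holes out of $[\sA_1(t),\sZ_1(t)]$. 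Resolving the existential quantifier over $j$ separately for each pair $(t,t')$ pushes the query back to $\Omega(k^2)$, so a single decremental reporting query per $(t,B)$ does not return the required set and the claimed $\tOh(1)$-per-report bound does not hold.

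The paper does not run a BFS. It runs a divide-and-conquer $\reach(B,S,F)$ that recurses into $B_1$ to collect the \emph{entire} reached set $R_1$, then calls a batched subroutine $\singleStepReach(B,R_1\setminus\Bmid,F\setminus B_1)$. The crucial device inside $\singleStepReach$ is to partition $\Bmidfree$ into $\Oh(|\term_B|)$ $(S,F)$-reach-equivalent intervals: subintervals $J$ on which each $\intvl_1(s)$, $s\in S$, and each $\intvlrev_2(f)$, $f\in F$, is either fully contained or disjoint. On such a $J$ the existential quantifier collapses, and a single threshold $\ell^J=\min\{\ellrev_2(j):j\in J,\ \ell_1(j)\le L^J\}$, computed by one query to the stored per-block structure $\ORB$ (which holds $\ellrev_2(q)$ keyed by $(\ind(q),\ell_1(q))$ for $q\in\Bmidfree$), decides exactly which $f\in F$ are reached through $J$; these can then be collected by one decremental reporting query. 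This gives $\tOh(|\term_B|)$ per block and $\tOh(k\log^3 n)$ total. Your sketch contains neither the interval partition, nor $\ORB$, nor the reverse quantities $\intvlrev,\ellrev,\Lrev$ that are needed to express the $B_2$-side of the composition; without these the query bound does not follow.

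One secondary deviation: you define $\ell_B(q)$ as a minimum over $t\in\term\cap B$ only, whereas the paper defines $\ell(q)=\min\{L(p):p\in B,\ \trav{p}{q}\}$ over all positions. The unified definition is what makes the single stored quantity $\ell_1(\cdot)$ usable both in the recursion for block inputs and in the terminal tests, and it is what the recursive computation through non-terminal splitting-boundary points $j\in\Bmid$ naturally produces. If you split it into a Ben-Avraham-style boundary layer plus a separate terminal layer you would need to carry two labellings and re-derive the recursion for both; it is not impossible, but the paper's unification is cleaner and what the stated bounds rest on.
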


\secref{reach-ds} is dedicated to devising this data structure, i.e., proving \lemref{ds}.
Equipped with this data structure, we can efficiently \emph{batch} updates and queries to the data structure. Specifically, we obtain the following theorem.

\begin{thm}\label{thm:algo-batchfrechet}
Offline Dynamic Grid Reachability can be solved in time $\Oh(n^2 + Un^{2/3}\log^2 n)$.
\end{thm}

We prove this theorem in \secref{solveDynGridReach}. Finally, it remains to use standard techniques of parametric search to transform the decision algorithm to an algorithm computing the discrete Fréchet distance under translation. This has already been shown by Ben Avraham et al.~\cite{avraham2015faster}; we sketch the details in \secref{paramsearch}.

\begin{lem}\label{lem:paramsearch}
Let $T_\dec(n)$ be the running time to decide, given $\delta > 0$ and polygonal curves $\pi,\sigma$ of length $n$ over $\mathbb{R}^2$, whether $\dF(\pi,\sigma+\tau) \le \delta$ for some $\tau \in \mathbb{R}^2$. Then there is an algorithm computing the discrete Fréchet distance under translation for any curves $\pi,\sigma$ of length $n$ over $\mathbb{R}^2$ in time $\Oh( (n^{4} + T_\dec(n))\log n )$.
\end{lem}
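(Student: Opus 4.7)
The plan is to invoke standard parametric search, as sketched by Ben Avraham et al.~\cite{avraham2015faster}. The starting point is a structural characterization of $\optdelta := \min_\tau \dF(\pi, \sigma + \tau)$: fix any optimal translation $\tau^*$ and any traversal $T^*$ realizing the minimum; then $\optdelta = \max_{(i,j)\in T^*} \|(\pi_i - \sigma_j) - \tau^*\|$, and by optimality of $\tau^*$ this value equals the radius of the smallest enclosing circle of the point set $\{\pi_i - \sigma_j : (i,j) \in T^*\} \subseteq \mathbb{R}^2$, with $\tau^*$ its center. Since a planar smallest enclosing circle is determined by at most three of its defining points, $\optdelta$ lies in a discrete set of candidate critical values, each expressible in terms of a constant number of pairs $(i,j) \in [n]\times[n]$.

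Using this characterization, I would assemble a candidate set $\mathcal{C}$ of size $\Oh(n^4)$: the \emph{pair events} $\tfrac{1}{2} \|(\pi_{i_1}-\sigma_{j_1}) - (\pi_{i_2}-\sigma_{j_2})\|$ directly contribute $\Oh(n^4)$ candidates, while the \emph{triple events} (circumradii of three vectors $\pi_i - \sigma_j$) are handled implicitly, following Ben Avraham et al., by reducing their treatment to locating $\optdelta$ within an interval delimited by two consecutive pair events. The algorithm then proceeds in three steps: first, enumerate and sort $\mathcal{C}$ in $\Oh(n^4 \log n)$ time; second, perform binary search over $\mathcal{C}$ using $\Oh(\log n)$ calls to the decision procedure, each costing $T_\dec(n)$, to isolate the smallest $\delta \in \mathcal{C}$ whose decision answer is ``yes''; third, if $\optdelta$ lies strictly between two consecutive pair-event candidates, pin it down by identifying the defining triple from the invariant combinatorial structure of the disk arrangement ${\cal A}_\delta$ on that pair-event-free interval and computing the circumradius in constant time.

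Summing the costs gives total time $\Oh(n^4 \log n + T_\dec(n) \log n) = \Oh((n^4 + T_\dec(n)) \log n)$, matching the lemma. The main obstacle is the third step: one has to argue that within any pair-event-free interval of $\delta$, the disk arrangement ${\cal A}_\delta$ is combinatorially invariant and hence the set of candidate triples defining a potential $\optdelta$ in that interval can be extracted in constant time from the decision procedure's state at the interval boundaries. This is precisely the content of the parametric-search argument of \cite{avraham2015faster}, which I would adapt directly without modification, since our improved decision procedure (Theorem~\ref{thm:algo-batchfrechet}) does not alter the structural properties of ${\cal A}_\delta$ on which the argument relies.
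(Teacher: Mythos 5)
Your first two steps — characterizing $\optdelta$ as the radius of a smallest enclosing circle (hence either a pair or triple event), enumerating the $\Oh(n^4)$ pair events, and binary searching over them — match the paper's \enquote{Step 1.} The genuine gap is in your third step. You claim that within a pair-event-free interval the disk arrangement $\arr_\delta$ is combinatorially invariant and that the defining triple for $\optdelta$ can therefore be extracted in constant time; both halves of this claim are false. In a pair-event-free interval the \emph{incidence structure} $I_\delta(q)$ is indeed fixed, but the cyclic ordering $L_\delta(q)$ of intersection points along each disk boundary still changes exactly at the $\Oh(n^6)$ triple events (three disk boundaries passing through a common point, or an intersection point crossing the reference point $r_q$). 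These are precisely the critical values you need to search among, and there is no constant-time shortcut to pick the right one: the decision procedure returns only a yes/no bit, which cannot name a triple.

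The paper's actual resolution, which your proposal elides, is Cole's sorting-based parametric search (\lemref{colesorting}): one sorts the $N = \sum_q |\crossings{\optdelta}(q)| = \Oh(n^4)$ intersection points along disk boundaries \emph{as they would appear at $\delta = \optdelta$}, resolving each comparison by $\Oh(1)$ calls to the decision procedure at the $\Oh(1)$ candidate $\delta$'s where the two compared points could swap order. Cole's technique makes this sort cost $\Oh((N + T_\dec(n)) \log n) = \Oh((n^4 + T_\dec(n)) \log n)$ rather than the naive $\Oh(T_\dec(n) \cdot N \log N)$, and it outputs a shrunken interval $(\alpha_2, \beta_2]$ containing no remaining critical values in its interior, so $\optdelta = \beta_2$. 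Without this machinery — or an equivalent way to implicitly search the $\Oh(n^6)$ triple events — your proof does not reach the claimed running time. Invoking \cite{avraham2015faster} is fine, but your summary of what that argument actually does is incorrect, so the proof as written does not go through.
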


Combining \lemref{paramsearch}, \lemref{redDynGridReach} and \thmref{algo-batchfrechet}, we obtain an algorithm computing the discrete Fréchet distance under translation in time
\[ \Oh( (n^{4} + T(n,n^{4}))\log n) =  \Oh( n^{4+2/3} \log^3 n),  \]
as desired.
In the remainder of this section, we provide the details of all steps mentioned above, except for \lemref{ds} (which we prove in \secref{reach-ds}).

\subsection{Reduction to Offline Dynamic Grid Reachability}
\label{sec:redDynGridReach}

In this section we prove \lemref{redDynGridReach}.
Given polygonal curves $\pi, \sigma$ of length $n$ over $\mathbb{R}^2$ and $\delta > 0$, we determine whether $\dF(\pi, \sigma + \tau) \le \delta$ for some $\tau \in \mathbb{R}^2$ as follows. 

For any radius $r$ and point $p \in \mathbb{R}^2$, we let $\disk_r(p)$ denote the disk of radius $r$ with center $p$.
\begin{obs}\label{obs:vertextest}
Let $\tau \in \mathbb{R}^2$ and define the $n\times n$ matrix $M^{\tau}$ over $\{0,1\}$ by 
\[ M_{i,j}^\tau = 1 \qquad \iff \qquad \tau \in \disk_\delta(\pi_i - \sigma_j).\]
We have $\dF(\pi, \sigma + \tau) \le \delta$ if and only if there is a monotone 1-path from $(1,1)$ to $(n,n)$ in $M^\tau$.
\end{obs}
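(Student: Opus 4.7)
The plan is to reduce this claim directly to the equivalent formulation of the discrete Fréchet distance via monotone 1-paths that was established in \secref{preliminaries}. Essentially, the statement is a matter of unpacking definitions, and the only nontrivial step is verifying that the disk condition $\tau \in \disk_\delta(\pi_i - \sigma_j)$ is exactly the freeness condition for position $(i,j)$ with respect to the curves $\pi$ and $\sigma + \tau$ at threshold $\delta$.

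First, I would chase the algebra: by definition, $\disk_\delta(\pi_i - \sigma_j)$ is the set of points in $\mathbb{R}^2$ at Euclidean distance at most $\delta$ from $\pi_i - \sigma_j$, so
\[ \tau \in \disk_\delta(\pi_i - \sigma_j) \iff \norm{\tau - (\pi_i - \sigma_j)} \le \delta \iff \norm{\pi_i - (\sigma_j + \tau)} \le \delta. \]
Denoting the translated curve by $\sigma' = \sigma + \tau$, this says precisely that $\norm{\pi_i - \sigma'_j} \le \delta$, i.e., that the position $(i,j)$ is free with respect to the pair $(\pi, \sigma')$ at threshold $\delta$. Hence $M^\tau$ coincides with the free-space matrix $M$ associated to $\pi$ and $\sigma + \tau$ for the distance $\delta$.

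Second, I would invoke the equivalence established in \secref{preliminaries}: for any two curves of length $n$ and any $\delta \ge 0$, $\dF(\pi, \sigma + \tau) \le \delta$ holds if and only if there is a valid traversal of $\pi, \sigma + \tau$ for $\delta$, which in turn holds if and only if there is a monotone 1-path from $(1,1)$ to $(n,n)$ in the corresponding free-space matrix. Applying this to the matrix $M^\tau$ identified above yields the claimed equivalence.

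There is no real obstacle here: both directions of the biconditional are immediate once the algebraic identity $\norm{\pi_i - (\sigma_j + \tau)} = \norm{\tau - (\pi_i - \sigma_j)}$ is noted and the free-space/monotone-path characterization from the preliminaries is applied. The observation is essentially a rewriting of the decision version of the Fréchet-distance-under-translation problem, parameterized by $\tau$, into a grid-reachability condition on $M^\tau$, which is exactly what is needed to later exploit that $M^\tau$ changes only at face boundaries of the arrangement of disks $\{\disk_\delta(\pi_i - \sigma_j)\}_{i,j}$.
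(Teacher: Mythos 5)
Your proof is correct and follows exactly the intended route: the paper states this as an unproved observation precisely because it is the definitional rewriting you carry out, namely $\tau \in \disk_\delta(\pi_i - \sigma_j) \iff \norm{\pi_i - (\sigma_j + \tau)} \le \delta$, so $M^\tau$ is the free-space matrix of $(\pi, \sigma+\tau)$ at threshold $\delta$, and the monotone-1-path characterization from the preliminaries finishes the argument.
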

By the above observation, it suffices to check for the existence of monotone 1-paths from $(1,1)$ to $(n,n)$ in a bounded number of matrices. To this end, let $Q \coloneqq \{ \pi_i - \sigma_j \mid i,j \in [n]\}$. We construct the arrangement $\arr_\delta$ of the disks $\disk_\delta(q)$ for $q\in Q$, in the sense that we construct the following plane graph $G_\delta$ (cf.\ Figure~\ref{fig:arrangement}). First, we include the vertices of $\arr_\delta$ in its node set (i.e., intersections of disks $\disk_\delta(q), \disk_\delta(q')$ with $q,q'\in Q$). Second, for each $q\in Q$ for which $\disk_\delta(q)$ intersects no $\disk_\delta(q')$ for $q'\in Q\setminus\{q\}$, we include an arbitrary $\tau_q$ on the boundary of $\disk_\delta(q)$. Finally, we add an arbitrary vertex $\tau_0 \in \mathbb{R}^2$ lying in the outer face of $\arr_\delta$ to the node set. Any nodes $\tau, \tau'$ of $G_\delta$ are connected by an edge if they are neighboring vertices on the boundary of some face of $\arr_\delta$; additionally, we connect $\tau_0$ to all nodes which lie on the boundary separating the outer face from some other face. Observe that $G_\delta$ is a connected plane graph, has $\Oh( |Q|^2 ) = \Oh(n^4)$ nodes and edges, and can be constructed in time $\Oh(n^4)$. 

\begin{figure}
\centering
\includegraphics[width=0.8\textwidth]{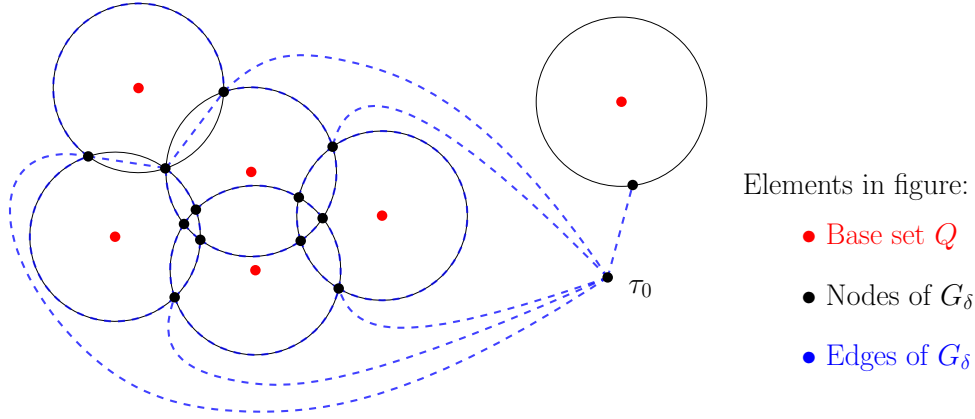}
\caption{Arrangement $\arr_\delta$ and construction of $G_\delta$.}
\label{fig:arrangement}
\end{figure}

Note that by \obsref{vertextest}, it suffices to check whether $\dF(\pi, \sigma + \tau_v) \le \delta$ for any node $\tau_v$ in\footnote{Note that our approach here deviates somewhat from the description in the introduction. This is due to the fact that for adversarial $\delta$, we might need to consider degenerate faces consisting of a single point only; due to the parametric search that we describe in \secref{paramsearch}, we may not assume $\delta$ to avoid such degenerate cases. Traversing vertices of the arrangement instead of the faces takes care of such border cases in a natural manner.} $G_\delta$: for any (bounded) face $f$ of $\arr_\delta$, there is at least one point $\tau_v$ in $G_\delta$ that lies on the boundary of $f$. The corresponding matrix $M^{\tau_v}$ has at least the same 1-positions as the matrix $M^\tau$ for any $\tau \in f$ (and might have more).

To obtain a walk visiting all nodes in $G_\delta$, we simply compute a spanning tree $T$ of $G_\delta$, double all edges of $T$, and find an Eulerian cycle starting and ending in $\tau_0$. Denote this cycle by $\tau_0, \dots, \tau_L$ and observe that $L = \Oh(n^4)$. Let $M_0 = M^{\tau_0}$ be the $n\times n$ all-zeroes matrix. For any $0 \le i < L$, we construct an update sequence $\bar{u}_i$ that first sets all positions $(i,j)$ with $M^{\tau_i} = 1$ and $M^{\tau_{i+1}} = 0$ to zero, and then sets all positions $(i,j)$ with $M^{\tau_i} = 0$ and $M^{\tau_{i+1}} = 1$ to 1. Thus, if we start with $M^{\tau_{i}}$ and perform the updates in $\bar{u}_i$, then at any point in time, the current matrix is dominated by either $M^{\tau_i}$ or $M^{\tau_{i+1}}$ (that is, the free positions of the current matrix are always a subset of $M^{\tau_i}$'s free positions or a subset of $M^{\tau_{i+1}}$'s free positions), and at the end we obtain $M^{\tau_{i+1}}$. Thus, by concatenating all updates to $\bar{u}_0, \dots, \bar{u}_{L-1}$, we obtain an instance of the Offline Dynamic Grid Reachability problem with initial matrix $M_0$ and update sequence $u_1, \dots, u_{L'}$ with the following property: There is some $i \in \{0, \dots, L\}$ with $\dF(\pi, \sigma+ \tau_i) \le \delta$ if and only if there is some $i' \in [L']$ such that $(n,n)$ is reachable from $(1,1)$ via a monotone 1-path in $M_0\update{u_1, \dots, u_{i'}}$. Since $\tau_0, \dots, \tau_L$ visits all nodes in $G_\delta$, this is equivalent to testing whether $\dF(\pi, \sigma + \tau) \le \delta$ for any $\tau \in \mathbb{R}^2$.

It remains to bound $L'$. We assume general position of the input points $P\cup S$ where  $P = \{\pi_i \mid i \in [n]\}$ and $S=\{\sigma_j \mid j \in [n]\}$. Observe that there is some universal constant~$C$ such that no $C$~points in $Q$ lie on a common circle.\footnote{To be more precise, we sketch how to argue that the general position assumption for $P \cup S$ ``transfers'' to~$Q$. Assume that there exist points $q_1, \dots, q_\ell \in Q$ lying on a common circle. For all $i$, we must have $q_i = p_i - s_i$ for some $p_i \in P, s_i \in S$. First assume that $\ell = 4$ and that there is some $s$ such that $s_i = s$ for all $i \in \{1,2,3,4\}$. Then already $p_1, \dots, p_4$ lie on a common circle (it has the same radius as the original circle, and its center is translated by $s$), which violates the general position assumption of points in $P$. Otherwise, let the points $q_1, \dots, q_\ell$ be arbitrary with $\ell \ge 36$. By the first case, any $s_i$ appears at most 3 times among $s_1, \dots, s_\ell$. After removing copies, we may assume without loss of generality that $q_1, \dots, q_{\ell'}$ with $\ell' \ge \ell/3$ have distinct $s_i$'s. Similarly, we may also assume that $q_1, \dots, q_{\ell''}$ with $\ell'' \ge \ell/9 \ge 4$ have distinct $p_i$'s as well. The fact that $q_4$ lies on the circle defined by $q_1, q_2, q_3$ can be expressed by a nonzero degree-2 polynomial $P_{q_1,q_2,q_3}(x, y)$ vanishing on $q_4$. Since $q_4 = p_4 - s_4$, we obtain a nonzero degree-2 polynomial $P_{q_1, q_2, q_3}'(p,s)$ vanishing on $(p_4, s_4)$. This contradicts general position of $P \cup S$.}
Thus, if we move from vertex $\tau_i$ to $\tau_{i+1}$ along an edge in $G_\delta$, e.g., from one vertex  of the boundary of some face to a neighboring vertex on that boundary, there are at most $2C$ entries that change from $M^{\tau_i}$ to $M^{\tau_{i+1}}$, since for both $\tau_i$ and $\tau_{i+1}$, there are at most $C$ disks intersecting this vertex and no other entries change when moving along this edge (by construction of $G_\delta$). Thus, $L'  \le 2CL = \Oh(n^4)$. Consequently, given an algorithm solving Offline Dynamic Grid Reachability in time $T(n,U)$, we can determine whether $\dF(\pi, \sigma + \tau) \le \delta$ for some $\tau \in \mathbb{R}^2$ in time $\Oh(T(n,L')) = \Oh(T(n, n^4))$.

\subsection{Solving Offline Dynamic Grid Reachability}
\label{sec:solveDynGridReach}

We prove \thmref{algo-batchfrechet} using the grid reachability data structure given in \lemref{ds}.  
Specifically, we claim that the following algorithm (formalized as Algorithm~\ref{alg:batchfrechet}) solves Offline Dynamic Grid Reachability in time $\Oh(n^2 + Un^{2/3}\log^2 n)$. We partition our updates $u_1,\dots, u_U$ into groups $\bar{u}_1, \dots, \bar{u}_{\Oh(U/k)}$ containing $k$ updates each. For any group $\bar{u}_i$, let $M_i$ be obtained from $M$ by performing all updates \emph{prior} to $\bar{u_i}$. Note that $\bar{u}_i$ will update a set of at most $k$ positions; denote this set by $\term_i$. We build the grid reachability data structure $\diagram_i=\diagram_{M^\zero_i, \term_i}$ with terminal set $\term_i$ and matrix $M^\zero_i$ obtained from $M_i$ by setting the positions of all terminals $\term_i$ to 0. Observe that the state after any update within $\bar{u}_i$ corresponds to $M_i^\zero$ with some additional positions in $\term_i$ set to $1$ (the \emph{free terminals}). Thus, for each update within $\bar{u}_i$, we can determine whether it creates a monotone 1-path from $(1,1)$ to $(n,n)$ by simply determining the set $F\subseteq \term_i$ of free terminals at the point of this update and performing the corresponding reachability query in~$\diagram_i$. 
It is straightforward to argue that the resulting algorithm correctly solves Offline Dynamic Grid Reachability.

\begin{algorithm}
\begin{algorithmic}[1]
\Function{OfflineDynamicGridReachability}{$M$, $u_1, \dots, u_U$}
\Parameter $k$
\State Divide $u_1, \dots, u_U$ into $s =  \left\lceil \frac{U}{k} \right\rceil$ subsequences $\bar{u}_1, \dots, \bar{u}_{s}$ of length $k$.\footnote{If necessary, repeat the last element of the last group to make all groups consist of exactly $k$ updates.}
\State Initialize $M_1 \gets M$ 
\State Set $\term_1$ to the set of positions updated in $\bar{u}_1$.
\State Let $M^\zero_1$ be obtained from $M_1$ by updating all positions in $\term_1$ to $0$
\State Build $\diagram_{M^\zero_1,\term_1}$ \label{line:build}
\For{$i\gets 1$ \textbf{to} $s$} 
	\For{$j\gets 1$ \textbf{to} $k$}
		\State Let $F\subseteq \term_i$ be the free terminals in $M_i\update{\bar{u}_i[1], \dots, \bar{u}_i[j]}$
		\If{reachability query in $\diagram_{M^\zero_i, \term_i}$ with free terminals $F$ is successful} \label{line:query}
			\State \Return \textbf{true}
	        \EndIf	
	\EndFor
\State Set $M_{i+1} \gets M_i\update{\bar{u_i}}$
	\State Set $\term_{i+1}$ to the set of positions updated in $\bar{u}_{i+1}$.\footnote{We let $\bar{u}_{s+1}$ consist of $k$ arbitrary updates, as we will never make use of these values.}
	\State Let $M^\zero_{i+1}$ be obtained from $M_{i+1}$ by updating all positions in $\term_{i+1}$ to $0$
	\State update $\diagram_{M^\zero_i, \term_i}$ to $\diagram_{M^\zero_{i+1}, \term_{i+1}}$ \label{line:update}
\EndFor
\State \Return \textbf{false}
\EndFunction
\end{algorithmic}
\caption{Solving Offline Dynamic Grid Reachability on matrix $M$ and update sequence $u_1, \dots, u_U$.} 
\label{alg:batchfrechet}
\end{algorithm}

To analyze the running time of Algorithm~\ref{alg:batchfrechet}, note that each data structure $\diagram_{M^\zero_i, \term_i}$ has a terminal set of size at most $k$ and each $M^\zero_{i+1}$ differs from $M^\zero_i$ in at most $2k$ entries. Thus by \lemref{ds}, we need time $\Oh(n^2 + k \log^2 n)$ to build $\diagram_1 = \diagram_{M^\zero_1, \term_1}$ in Line~\ref{line:build}. The time spent for handling a single group $\bar{u}_i$ is bounded by the time to perform $k$ queries in $\diagram_i = \diagram_{M^\zero_i, \term_i}$ plus the time to update $\diagram_i = \diagram_{M^\zero_i, \term_i}$ to $\diagram_{i+1}= \diagram_{M^\zero_{i+1}, \term_{i+1}}$, which amounts to $\Oh( k^2 \log^3 n + n\sqrt{k} \log n + k \log^2 n) = \Oh(k^2 \log^3 n + n\sqrt{k}\log n)$ by \lemref{ds}. Thus, in total, we obtain a running time of \[\Oh\left( n^2 + k\log^2 n + \frac{U}{k} \left( k^2 \log^3 n + n\sqrt{k} \log n\right)\right) = \Oh\left(n^2 + U \left(k\log^3 n + \frac{n}{\sqrt{k}}\log n\right) \right).\]
This expression is minimized by setting $k\coloneqq n^{2/3}/\log^{4/3} n$, resulting in a total running time of $\Oh(n^2 + Un^{2/3} \log^{1+2/3} n) = \Oh(n^2 + Un^{2/3}\log^2 n)$, as desired.

\subsection{Parametric Search}
\label{sec:paramsearch}

In this section, we sketch how to use parametric search techniques (due to Megiddo~\cite{Megiddo83} and Cole~\cite{Cole87}) to reduce the optimization problem to the decision problem with small overhead, i.e., we prove \lemref{paramsearch}. Specifically, for the readers' convenience, we describe the arguments made by Ben Avraham et al.~\cite{avraham2015faster} in slightly more detail.   

Our aim in this section is to compute the discrete Fréchet distance under translation of polygonal curves $\pi,\sigma$ of length $n$ over $\mathbb{R}^2$, i.e., to determine
\[ \optdelta \coloneqq \min_{\tau \in \mathbb{R}^2} \dF(\pi, \sigma+\tau). \]

Using the decision algorithm, we can determine, for any $\delta > 0$, whether $\optdelta \le \delta$ in time $T_\dec(n)$. As we shall see below, there is a range of $\Oh(n^6)$ possible values  (defined by the point set of $\pi, \sigma$) that $\optdelta$ might attain (called \emph{critical values}). Naively computing all critical values and performing a binary search would result in an $\Oh((n^6 + T_\dec(n)) \log n)$-time algorithm, which is too slow for our purposes. Instead, we use the parametric search technique to perform an implicit search over these critical values. 

Conceptually, we aim to determine the combinatorial structure of the arrangement $\arr_\optdelta$ defined in \secref{redDynGridReach} (captured by the graph $G_\optdelta$) without knowing $\optdelta$ in advance. To specify this combinatorial structure, define for every $q \in Q$ the set
\[
I_\delta(q) \coloneqq \{ q' \in Q\setminus\{q\} \mid \disk_\delta(q), \disk_\delta(q') \text{ intersect} \}.
\]
Note that for every $q' \in I_\delta(q)$, there are one or two intersection points of $\disk_{\delta}(q)$, $\disk_{\delta}(q')$, which we denote by $\crossingA{\delta}{q,q'}$ and $\crossingB{\delta}{q,q'}$ (note that we allow these points to coincide if $\disk_\delta(q), \disk_\delta(q')$ intersect in a single point only) -- we assume this notation to be chosen consistently in the sense that $\crossingA{\delta}{q,q'}$ and $\crossingA{\delta}{q',q}$ refer to the same point (likewise for $\crossingB{\delta}{q,q'}$ and $\crossingB{\delta}{q',q}$). We denote by $\crossings{\delta}(q)$ the set of all intersection points on the boundary of $\disk_\delta(q)$, i.e., $\crossingA{\delta}{q,q'}$, $\crossingB{\delta}{q,q'}$ for all $q' \in I_\delta(q)$. We obtain a list $L_\delta(q)$ by starting with the rightmost point on the boundary of $\disk_\delta(q)$, say $r_q$, and listing all intersection points $C \in \crossings{\delta}(q)$ in counter-clockwise order. Observe that the combinatorial structure of $\arr_\delta$ is completely specified by the lists $L_\delta(q)$ for $q\in Q$. 

We wish to construct $L_\optdelta(q)$ for all $q\in Q$ using calls to our decision algorithm, i.e., queries of the form ``Is $\optdelta \le \delta$?''. Along the way, we maintain a shrinking interval $(\alpha, \beta]$ such that $\optdelta \in (\alpha, \beta]$ -- our aim is that in the end $(\alpha, \beta]$ no longer contains critical values except for $\beta$, and thus $\optdelta = \beta$ can be derived. We proceed in two steps.

\paragraph{Step 1: Determining $I_\optdelta(q)$.}
The critical values for this step are the half-distances of all pairs $q,q'\in Q$ (cf. Figure~\ref{fig:criticalvalues1}). We list all these values and perform a binary search over them, using our decision algorithm. Since there are at most $\Oh(|Q|^2) = \Oh(n^4)$ such values, we obtain an algorithm running in time $\Oh((n^4 + T_\dec(n)) \log n)$ returning an interval $(\alpha_1, \beta_1]$ such that $\optdelta \in (\alpha_1, \beta_1]$ and no half-distance of a pair $q,q'\in Q$ is contained in $(\alpha_1, \beta_1)$. Thus, from this point on we know $I_\optdelta(q)$ for all $q\in Q$ (without knowing the exact value of $\optdelta$ yet).

\begin{figure}
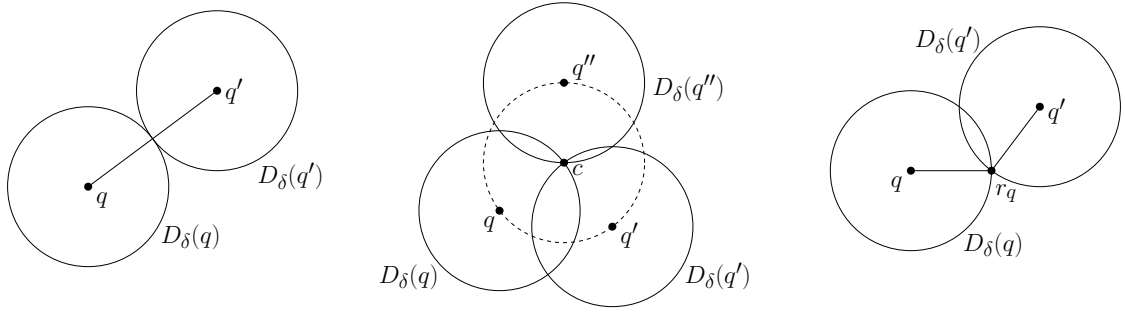

	\begin{subfigure}[t]{0.32\textwidth}
		\includegraphics[width=\textwidth]{figures/critical_values1.pdf}
		\caption{The elements in $L_\delta(q)$ change at radius $\delta$ that is a half-distance to  some other $q'\in Q$.}
		\label{fig:criticalvalues1}
	\end{subfigure}
	\hfill
	\begin{subfigure}[t]{0.32\textwidth}
		\includegraphics[width=\textwidth]{figures/critical_values2.pdf}
		\caption{The ordering of $L_\delta(q)$ might change at radius $\delta$ such that $q,q',q'' \in Q$ lie on a common circle around some center $c$.}
		\label{fig:criticalvalues2}
	\end{subfigure}
	\hfill
	\begin{subfigure}[t]{0.32\textwidth}
		\includegraphics[width=\textwidth]{figures/critical_values3.pdf}
		\caption{The ordering of $L_\delta(q)$ might change at radius $\delta$ such that $D_\delta(q')$ intersects $D_\delta(q)$ in $r_q$.}
		\label{fig:criticalvalues3}
	\end{subfigure}

	\caption{Critical values for the lists $L_\delta(q), q\in Q$.}
	\label{fig:criticalvalues}
\end{figure}

\paragraph{Step 2: Sorting $L_\optdelta(q)$.}
We use the following well-known variant of Meggido's parametric search that is due to Cole~\cite{Cole87}. 
\begin{lem}[implicit in \cite{Cole87}]\label{lem:colesorting}
Let parametric values $f_1(\delta), \dots, f_N(\delta)$ be given. Assume there is an unknown value $\optdelta > 0$ and a decision algorithm determining, given $\delta > 0$, whether $\optdelta \le \delta$ in time $T(N)$. If we can determine $f_i(\optdelta) \le f_j(\optdelta)$ for any $i,j\in [N]$ using only a constant number of queries to the decision algorithm, then in time $\Oh((N + T(N))\log n)$, we can sort $f_1(\optdelta), \dots, f_N(\optdelta)$ and obtain an interval $(\alpha, \beta]$ such that $\optdelta \in (\alpha, \beta]$ and no critical value for the sorted order of $f_1(\delta), \dots, f_N(\delta)$ is contained in $(\alpha, \beta)$.
\end{lem}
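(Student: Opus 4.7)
The plan is to apply Cole's technique for parametric search \cite{Cole87} on top of a parallel sorting network. We instantiate any sorting network with $P = \Oh(N \log N)$ comparators arranged in $D = \Oh(\log N)$ layers (for instance, the AKS network), and simulate it layer-by-layer on the unknown parametric values $f_1(\optdelta), \dots, f_N(\optdelta)$, while maintaining an interval $(\alpha, \beta]$ known to contain $\optdelta$, initialized as $(0, \infty)$. Each comparator asks whether $f_i(\optdelta) \le f_j(\optdelta)$; by hypothesis, the answer is determined by the position of $\optdelta$ relative to $\Oh(1)$ explicit critical values in $\delta$, which we compute in $\Oh(1)$ time per comparator.

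At each layer, we batch the critical values of all pending comparators, discard any lying outside $(\alpha, \beta]$ (their side is already known), compute the weighted median $\delta^*$ of the remaining values under Cole's amortization weights, and invoke the decision oracle once at $\delta^*$, paying $T(N)$. The outcome tightens $(\alpha, \beta]$ to either $(\alpha, \delta^*]$ or $(\delta^*, \beta]$ and resolves at least half the total weight of pending critical values; iterating shrinks the active pool. Cole's weight-decay scheme assigns exponentially decreasing weights to comparators postponed to later layers, so that the total number of oracle queries over the entire simulation is bounded by $\Oh(D) = \Oh(\log N)$, rather than $\Oh(D \log N)$ as would follow from a naive layer-by-layer analysis.

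Combining the $\Oh(P) = \Oh(N \log N)$ bookkeeping cost with the $\Oh(D \cdot T(N)) = \Oh(T(N) \log N)$ oracle cost yields a total running time of $\Oh((N + T(N)) \log N)$, matching the claimed bound. At termination, the recorded comparator outcomes fully determine the sorted order of $f_1(\optdelta), \dots, f_N(\optdelta)$, and every critical value relevant to the sorted order of $f_1(\delta), \dots, f_N(\delta)$ has either been queried (and thus equals $\alpha$ or $\beta$) or lay outside the maintained interval at the moment it was considered; hence no such critical value remains in the open interval $(\alpha, \beta)$. The main technical obstacle in a self-contained proof is Cole's weight-decay amortization argument, which we invoke as a black box from \cite{Cole87}.
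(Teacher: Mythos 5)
The paper states this lemma as ``implicit in \cite{Cole87}'' and gives no proof of its own, so there is no paper proof to compare against. Your reconstruction correctly identifies the intended argument: run Megiddo's parametric simulation on an $\Oh(\log N)$-depth, $\Oh(N\log N)$-size sorting network (e.g., AKS), and replace the per-layer binary search by Cole's weighted-median trick so that the total number of decision-oracle calls drops from $\Oh(\log^2 N)$ to $\Oh(\log N)$; this gives exactly the stated bound $\Oh(N\log N + T(N)\log N) = \Oh((N+T(N))\log N)$.

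Two small inaccuracies worth fixing. First, Cole's simulation is not strictly layer-by-layer: one maintains a pool of \emph{active} comparisons (those whose inputs are already resolved, which may span several layers), and the weight of a comparison is a \emph{fixed} function of its depth (roughly $4^{-d}$); it does not ``decay'' as the comparison waits. Your phrasing ``postponed to later layers'' and ``weight-decay'' suggests a dynamic weight update that is not how the amortization works. Second, the closing claim deserves one extra sentence: because every critical value ever considered either became an endpoint of the maintained interval or was already outside it, \emph{all} comparison outcomes made by the network are constant over $(\alpha,\beta)$; hence the network would output the same permutation for any $\delta'\in(\alpha,\beta)$, which is precisely why no sorted-order critical value can lie strictly inside $(\alpha,\beta)$. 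With these two clarifications the argument is sound, and invoking Cole's amortization lemma as a black box is appropriate given the lemma is explicitly attributed to \cite{Cole87}.
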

Consider first the problem of sorting $L_\optdelta(q)$ for some $q\in Q$. By the above technique, we only need to argue that we can determine whether some $\crossingop{\optdelta}{q,q'}{a}$ with $q' \in Q, a\in \{1,2\}$ precedes some $\crossingop{\optdelta}{q,q''}{b}$ with $q''\in Q, b \in \{1,2\}$ in $L_\optdelta(q)$. Note that $\crossingop{\delta}{q,q'}{a}, \crossingop{\delta}{q,q''}{b}$ move continuously on the boundary of $\disk_\delta(q)$ (while $\delta$ varies) and there are only constantly many choices of $\delta$ for which any of the points $\crossingop{\delta}{q,q'}{a}, \crossingop{\delta}{q,q''}{b}, r_q$ coincide (and thus the order might possibly change).\footnote{The important critical values for this step are the $\Oh(|Q|^3) = \Oh(n^6)$ radii of points with three (or more) points of $Q$ on their boundary. See Figure~\ref{fig:criticalvalues} for an illustration of all types of critical values.} By testing for these $\Oh(1)$ critical values of $\delta$, we can determine the order of $\crossingop{\optdelta}{q,q'}{a}, \crossingop{\optdelta}{q,q''}{b}, r_q$ on the boundary of $\disk_\delta(q)$, and thus resolve a comparison of $\crossingop{\optdelta}{q,q'}{a}$ and $\crossingop{\optdelta}{q,q''}{b}$ in the order of $L_\optdelta(q)$ using only a constant number of calls to the decision algorithm. 

Note that by arbitrarily choosing an order of $Q$, we may use Cole's sorting procedure (\lemref{colesorting}) to construct all lists $L_\optdelta(q), q\in Q$ simultaneously (we simply need to adapt the comparison function to compare $\crossingop{\optdelta}{q,q'}{a}, \crossingop{\optdelta}{\tilde{q},q''}{b}$ according to the order of $Q$ if $q\ne \tilde{q}$). Note that in this application of \lemref{colesorting}, we have $N = \sum_{q\in Q} |\crossings{\optdelta}(q)| = \Oh(|Q|^2) = \Oh(n^4)$. 

It follows that in time $\Oh((n^4 + T_\dec(n))\log n)$, we can obtain an interval $(\alpha_2, \beta_2]$ such that $\optdelta \in (\alpha_2, \beta_2]$, while $\beta_2$ is the only value for $\delta$ for which the combinatorial structure of $\arr_\delta$ changes in $(\alpha_2, \beta_2]$. Thus, $\optdelta = \beta_2$, as desired.

The overall running time of the above procedure amounts to $\Oh((n^4 + T_\dec(n)) \log n)$, which concludes the proof of \lemref{paramsearch}.

\section{Grid Reachability Data Structure}
\label{sec:reach-ds}

In this section, we prove \lemref{ds}, which we restate here for convenience. 
\dslemma*

The rough outline is as follows: We obtain the data structure by repeatedly splitting the free-space diagram into smaller blocks. This yields $\Oh(\log n)$ levels of blocks, where in each block we store reachability information from all ``inputs'' to the block (i.e., the lower-left boundary) to all ``outputs'' of the block (i.e., the upper-right boundary). Any change in the matrix $M$ is reflected only in $\Oh(\log n)$ blocks containing this position, thus, we can quickly update the information. This approach was pursued already by Ben Avraham et al.~\cite{avraham2015faster}. 

In addition, however, we need to maintain reachability of all terminals $\term$ to the inputs and from the outputs of each block. Surprisingly, we only need an additional storage of $\Oh(|\term|)$ per block. We show how to maintain this information also under updates and how it can be used by a divide and conquer approach to answer any reachability queries.

To this end, we start with some basic definitions (block structure, identifiers for each position, etc.) in \secref{ds-prepare}. We can then prove the succinct characterization of terminal reachability in \secref{ds-characterization}, which is the key aspect of our data structure. Given this information, we can define exactly what information we store for each block in \secref{ds-information}. We give algorithms computing the information for some block given the information for its children in \secref{ds-computeParentInformation}, which allows us to prove the initialization and update statements (i.e., \ref{enum:ds-init} and \ref{enum:ds-updates} of \lemref{ds}) in \secref{ds-updates}. Finally, \secref{ds-reachqueries} is devoted to the reachability queries, i.e., proving \ref{enum:ds-reachqueries} of \lemref{ds}.

\subsection{Basic Structures and Definitions}
\label{sec:ds-prepare}

Without loss of generality, we may assume that $n = 2^\kappa + 1$ for some integer $\kappa\in \mathbb{N}$. Otherwise, for any $n\times n$ matrix $M$ over $\{0,1\}$, we could define an $n' \times n'$ matrix $M'$ with (1) $n' = 2^\kappa + 1$ for some $\kappa\in \mathbb{N}$ with $n < n' \le 2n$ and (2) setting $M'_{i,j} = M_{i,j}$ for all $(i,j) \in [n]\times [n]$ and  setting $M'_{i,j} = 1$ if and only if $i=j$ for all $(i,j) \in [n']\times [n'] \setminus [n]\times [n]$. Clearly, existence of a monotone 1-path from $(1,1)$ to $(n,n)$ in $M$ is equivalent to existence of a monotone 1-path from $(1,1)$ to $(n',n')$ in $M'$. 

\paragraph{Canonical blocks.}
Let $I,J$ be intervals in $[n]$ with $n = 2^\kappa + 1$. We call $I \times J \subseteq [n] \times [n]$ a \emph{block}. In particular, we only consider blocks obtained by splitting the square $[n]\times [n]$ alternately horizontally and vertically until we are left with $2\times 2$ blocks. Formally, we define $\blocks_0 \coloneqq \{ ([n],[n]) \}$ and construct $\blocks_{\ell+1}$ inductively by splitting each block $B \in \blocks_{\ell}$ as follows: 
\begin{itemize}
\item For $\ell = 2i$ with $0 \le i < \kappa$, we have $B=(I,J)$ with $|I|= |J| = 2^{\kappa - i} + 1$. We split $J$ into intervals $J_1,J_2$, where $J_1$ contains the first $(2^{\kappa-i-1}+1)$ elements in $J$ and $J_2$ contains the last $(2^{\kappa-i-1}+1)$ elements in $J$ (thus $J_1$ and $J_2$ intersect in the middle element of $J$). Add $(I,J_1)$ and $(I, J_2)$ to $\blocks_{\ell+1}$. 
\item For $\ell = 2i + 1$ with $0 \le i < \kappa$, we have $B=(I,J)$ with $|I|= 2^{\kappa-i}+1$ and $|J| = 2^{\kappa-i-1} + 1$. Analogous to above, we split $I$ into two equal-sized intervals $I_1,I_2$, where $I_1$ contains the first $(2^{\kappa-i-1}+1)$ elements in $I$ and $I_2$ contains the last $(2^{\kappa-i-1}+1)$ elements in $I$. Add $(I_1,J)$ and $(I_2, J)$ to $\blocks_{\ell+1}$.
\end{itemize}

We let $\blocks \coloneqq \bigcup_{\ell = 0}^{2\kappa} \blocks_\ell$ be the set of \emph{canonical blocks}, and call each block $B \in \blocks_\ell$ a \emph{canonical block on level $\ell$}. The blocks $B_1 = (I_1,J), B_2 = (I_2, J) \in \blocks_{\ell+1}$ (or $B_1 = (I,J_1), B_2 = (I,J_2) \in \blocks_{\ell+1}$, respectively) obtained from $B= (I,J)\in \blocks_{\ell}$ are called the \emph{children} of $B$. See Figure \ref{fig:canonical_blocks}.

\begin{figure}
\centering
\includegraphics[width=\textwidth]{figures/canonical_blocks.pdf}
\caption{The sets of canonical blocks $\mathcal{B}_0, \mathcal{B}_1, \mathcal{B}_2, \dots, \mathcal{B}_{2\kappa}$. We alternate between horizontal and vertical splits. Note that child blocks overlap at their boundary.}
\label{fig:canonical_blocks}
\end{figure}

\paragraph{Boundaries.}
For any $B = (I,J) \in \blocks$, we denote the lower left boundary of $B$ as $\Bin = \{ \min I \} \times J \cup  I \times \{ \min J \}$, and call each $p \in \Bin$ an \emph{input} of $B$. Analogously, we denote the upper right boundary of $B$ as $\Bout = \{ \max I \} \times J \cup  I \times \{ \max J \}$, and call each $q \in \Bout$ an \emph{output} of $B$. By slight abuse of notation, we define $\sBbound = |\Bin \cup \Bout|$ as the size of the boundary of $B$, i.e., the number of inputs and outputs of $B$.

If $B$ splits into children $B_1, B_2$, we call $\Bmid = \Bout_1 \cap \Bin_2$ the \emph{splitting boundary} of $B$. 

\begin{figure}
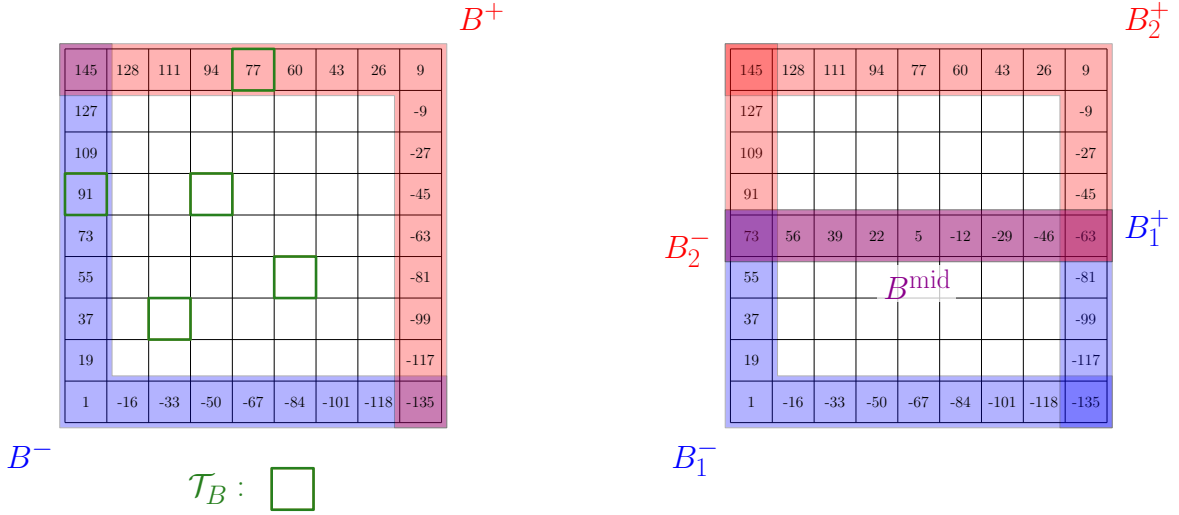

	\begin{subfigure}[b]{0.45\textwidth}
		\includegraphics[width=\textwidth]{figures/blockdefs1.pdf}
\label{fig:blockdefs1}
	\end{subfigure}
	\hfill
	\begin{subfigure}[b]{0.45\textwidth}
		\includegraphics[width=\textwidth]{figures/blockdefs2.pdf}
\label{fig:blockdefs2}
	\end{subfigure}
	\caption{Structure of a block $B$}
	\label{fig:blockdefs}
\end{figure}

\paragraph{Indices.}
To prepare the description of this information, we first define, for technical reasons, \emph{indices} for all positions in $[n] \times [n]$. It allows us to give each position a unique identifier with the property that for any canonical block $B$, the indices yield a local ordering of the boundaries.

\begin{obs}
	Let $\ind : [n] \times [n] \to \mathbb{N}$, where for any point $p = (x,y) \in [n]\times[n]$, we set $\ind(p) \coloneqq (y-x) (2n) + x$. We call $\ind(p)$ the \emph{index of $p$}. This function satisfies the following properties:
\begin{enumerate}
	\item The function $\ind$ is injective, can be computed in constant time, and given $i = \ind(p)$, we can determine $\ind^{-1}(i) \coloneqq p$ in constant time.
\item For any $B \in \blocks$, $\ind$ induces an ordering of $\Bout$ in counter-clockwise order and  an ordering of $\Bin$ in clockwise order.
\end{enumerate}
\end{obs}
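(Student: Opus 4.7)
The plan is to verify the two items of the observation separately, each by a short direct computation.

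For item~1, I will argue injectivity and constant-time evaluation in both directions together. Observe that for $p = (x,y) \in [n] \times [n]$, the defining formula $\ind(p) = (y-x)(2n) + x$ satisfies $\ind(p) \equiv x \pmod{2n}$. Since $x \in [n] \subseteq \{0, 1, \ldots, 2n-1\}$, the $x$-coordinate coincides with the canonical residue of $\ind(p)$ modulo $2n$ and is thus uniquely recovered from $i = \ind(p)$; the remaining coordinate is then $y = x + (i - x)/(2n)$. Both the forward map $\ind$ and the recovery of $\ind^{-1}$ amount to a constant number of arithmetic operations, so both run in $\Oh(1)$ time, and injectivity is immediate from this unique recovery.

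For item~2, I will compute $\ind$ along the four edges of a canonical block $B = I \times J$. Along the right edge, $\ind(\max I, y) = (y - \max I) \cdot 2n + \max I$ is strictly increasing in $y$ with step size $2n$. Along the top edge, $\ind(x, \max J) = (\max J - x) \cdot 2n + x$ is strictly decreasing in $x$ with step size $-(2n-1)$. Analogous formulas hold on the left and bottom edges.

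It remains to stitch the two edges making up $\Bout$ (respectively $\Bin$) together at the corner they share. For $\Bout = \{\max I\} \times J \cup I \times \{\max J\}$, the right-edge index is minimized at $(\max I, \min J)$ and maximized at $(\max I, \max J)$, while the top-edge index is minimized again at $(\max I, \max J)$ and maximized at $(\min I, \max J)$. A direct comparison of the index at $(\max I, \max J - 1)$, $(\max I, \max J)$, and $(\max I - 1, \max J)$ confirms that the transition across the shared corner preserves strict monotonicity, so sorting $\Bout$ by $\ind$ produces the walk $(\max I, \min J) \to (\max I, \max J) \to (\min I, \max J)$, which is a counter-clockwise traversal of $\partial B$. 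By fully symmetric reasoning on $\Bin = \{\min I\} \times J \cup I \times \{\min J\}$, joined at the corner $(\min I, \min J)$, sorting by increasing index traces $(\max I, \min J) \to (\min I, \min J) \to (\min I, \max J)$, a clockwise traversal of $\partial B$. The only point needing attention is bookkeeping at the single shared corner of each boundary, and the explicit edge formulas render that check immediate, so I do not anticipate any real obstacle.
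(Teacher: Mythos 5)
Your proof is correct. The paper states this observation without an explicit proof (referring only to Figure~\ref{fig:blockdefs} for intuition), and your direct computation — recovering $x$ as the residue of $\ind(p)$ modulo $2n$ and then solving for $y$, followed by tracking the sign of the per-step increment of $\ind$ along the two edges of $\Bout$ (respectively $\Bin$) and verifying monotonicity across the shared corner — is exactly the verification the authors implicitly rely on, and it holds (note that $\ind$ can be negative despite the stated codomain $\mathbb{N}$, but your residue argument is unaffected by this).
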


We refer to Figure~\ref{fig:blockdefs} for an illustration of a block $B$, its boundaries, and the indices of all positions.

\subsection{Reachability characterization}
\label{sec:ds-characterization}

Our aim is to construct a data structure $\diagram_{M, \term} = (\diagram_{M, \term}(B))_{B\in \blocks}$, where $\diagram_{M,\term}(B)$ succinctly describes reachability (via monotone 1-paths) between the boundaries $\Bin, \Bout$ and the terminals $\term_B \coloneqq \term \cap B$ inside $B$. In particular, we show that we only require space $\Oh(\sBbound + |\term_B|)$  to represent this information.  

To prepare this, we start with a few simple observations that yield a surprisingly simple characterization of reachability from any terminal to the boundary. 

\paragraph{Compositions of crossing paths.}

We say that we reach $q$ from $p$, written $\trav{p}{q}$, if there is a traversal $T = (t_1, \dots, t_\ell)$ with $t_1 = p$, $t_\ell = q$, and $t_i$ is free for all $1 < i < \ell$ (note that we do not require $t_1$ and $t_\ell$ to be free). We call such a slightly adapted notion of traversal a \emph{reach traversal}. By connecting the points of $T$ by straight lines, we may view $T$ also as a polygonal curve in $\R^2$. The following property is a standard observation for problems related to the Fr\'echet distance.

\begin{obs}\label{obs:crossingpaths}
Let $T_1, T_2$ be reach traversals from $p_1$ to $q_1$ and from $p_2$ to $q_2$, respectively. Then if $T_1$ and $T_2$ intersect, we have $\trav{p_1}{q_2}$ (and, symmetrically, $\trav{p_2}{q_1}$).
\end{obs}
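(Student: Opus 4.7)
The plan is to show that any intersection point of $P(T_1)$ and $P(T_2)$ must be a grid vertex visited by both traversals, so that splicing the prefix of $T_1$ ending at that vertex with the suffix of $T_2$ starting there yields a reach traversal from $p_1$ to $q_2$.

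First I would establish the geometric fact that two unit monotone edges -- segments of length $1$ in direction $(1,0)$, $(0,1)$, or $(1,1)$ -- can only meet at grid points. A short case analysis over the three directions suffices: two parallel edges either coincide at a common grid endpoint or are disjoint; a horizontal and a vertical edge meet at most in a single integer point; and a diagonal edge $(a,b)\to(a+1,b+1)$, parametrized as $(a+t,b+t)$ for $t\in[0,1]$, intersects a horizontal edge $y=d$ only when $t=d-b\in[0,1]$, which forces $d\in\{b,b+1\}$ and hence $t\in\{0,1\}$ (symmetrically for diagonal/vertical pairs). Since unit monotone edges contain no interior lattice points, it follows that every grid point on $P(T_i)$ is actually a vertex of the discrete traversal $T_i$. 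Consequently, any $v\in P(T_1)\cap P(T_2)$ is a grid vertex that is visited by both $T_1$ and $T_2$.

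Given such a common vertex $v$, I would let $T_1'$ be the prefix of $T_1$ from $p_1$ to $v$ and $T_2'$ the suffix of $T_2$ from $v$ to $q_2$. By the concatenation rule for traversals (which requires matching endpoints), $T_1'\concat T_2'$ is a well-defined traversal from $p_1$ to $q_2$. To verify that it is a reach traversal, I would check that every interior position is free: any such position is either interior to $T_1$ (hence free, since $T_1$ is a reach traversal), interior to $T_2$ (hence free), or equal to $v$ itself. Whenever $v$ is an interior vertex of at least one of the original traversals, $v$ is free and we directly conclude $\trav{p_1}{q_2}$.

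The main obstacle will be the degenerate sub-case in which $v$ is an endpoint of both $T_1$ and $T_2$. The configurations $v\in\{p_1,q_2\}$, $v=p_1=p_2$, and $v=q_1=q_2$ are all immediate: $T_2$ or $T_1$ alone already witnesses $\trav{p_1}{q_2}$. The only genuinely delicate case is $v=q_1=p_2$ with $v$ not free and with $\{v\}$ the only intersection; I would address it by choosing the splicing point to be an interior intersection of $P(T_1)$ and $P(T_2)$ whenever one exists, observing that in the remaining ``endpoint-touch'' configuration the observation is applied only in contexts (e.g.\ traversals between block boundaries in \secref{reach-ds}) where $v$ is guaranteed to be free by the surrounding construction. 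The symmetric claim $\trav{p_2}{q_1}$ then follows by exchanging the roles of $T_1$ and $T_2$.
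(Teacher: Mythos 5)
Your approach matches the paper's: both proofs find a common grid vertex on the two polygonal paths and splice the prefix of $T_1$ with the suffix of $T_2$ there, and your explicit case analysis of why unit monotone edges can only meet at lattice points is a cleaner version of what the paper merely asserts in passing. The genuinely interesting part of your writeup is the last paragraph. You correctly isolate $v=q_1=p_2$ with $v$ non-free and $\{v\}$ the only intersection as the one configuration in which the splice $T_1\concat T_2$ is \emph{not} a reach traversal. The paper's own proof is actually incomplete at exactly this point: it claims a free integral intersection exists ``unless $p_1=p_2$ or $q_1=q_2$,'' overlooking the possibilities $p_1=q_2$ (harmless, since a one-position traversal already witnesses $\trav{p_1}{q_2}$) and $q_1=p_2$ (a genuine hole; e.g.\ $T_1=((1,1),(2,2))$, $T_2=((2,2),(3,3))$ with $(2,2)$ and all other interior positions of the $3\times 3$ block non-free makes $\trav{p_1}{q_2}$ false). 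Your resolution -- deferring to the context in which the observation is invoked -- correctly explains why the downstream use in \propref{reachregion} is still sound (there the hypothesis $p'\notin\reachr(p)$ rules out $p'\in\{\sA(p),\sZ(p)\}$, so the offending configuration cannot occur), but it is an argument about usage rather than a proof of the observation as stated. In short: your proof has the same shape as the paper's and is in fact slightly more careful; the degenerate case you flagged is a real gap in the statement, and it is present in the paper's proof as well.
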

\begin{proof}
Let $t \in [n]\times [n]$ be a free position in which $T_1, T_2$ intersect (observe that such a point with integral coordinates must exist unless $p_1 = p_2$ or $q_1 = q_2$; in the latter case, the claim is trivial). Note that $t$ splits $T_1, T_2$ into $T_1 = T^a_1 \concat T^b_1$ and $T_2 = T_2^a \concat T^b_2$ such that $T_1^a, T_2^a$ are reach traversals ending in $t$ and $T^b_1, T_2^b$ are reach traversal starting in $t$. By concatenating $T_1^a$ and  $T_2^b$, we obtain a reach traversal from $p_1$ to $q_2$. Symmetrically, $T^a_2 \concat T^b_1$ proves $\trav{p_2}{q_1}$. 
\end{proof}

Let $B \in \blocks$ and recall that $\ind(\cdot)$ orders $\Bout$ counter-clockwise. For any $p \in B$, we define $\sA(p) \coloneqq \min \{ \ind(q) \mid q\in \Bout, \trav{p}{q} \}$, and analogously $\sZ(p) \coloneqq \max \{ \ind(q) \mid q\in \Bout, \trav{p}{q}\}$ (note that $\sA(p)$ and $\sZ(p)$ correspond to the lowest/rightmost and highest/leftmost pointer, respectively, in \cite[Section 3.2]{AltG95}). These two values define a corresponding \emph{reachability interval} $\intvl(p) \coloneqq [\sA(p),\sZ(p)]$ that contains all $q\in \Bout$ with $\trav{p}{q}$. In the following analysis, we slightly abuse notation by also using $\ind(p)$ to denote the corresponding (unique) position $p\in [n]\times[n]$.

\begin{defn}
Let $p \in B$ with $\infty > \sA(p), \sZ(p) > - \infty$ and fix any reach traversals $T_A, T_Z$ from $p$ to $\sA(p)$ and $\sZ(p)$ such that we can write
\begin{align*}
T_A & = \Pcom \concat P_A',\\
T_Z & = \Pcom \concat P_Z',
\end{align*}
for some polygonal curves $\Pcom, P_A', P_Z'$ with $P_A', P_Z'$ non-intersecting. Let $\face$ be the face enclosed by $P_A', P_Z'$ and the path from $\sA(p)$ to $\sZ(p)$ on $\Bout$ (if $\sA(p) = \sZ(p)$, we let \face be the empty set). We define the \emph{reach region} of $p$ as 
\[
\reachr(p) \coloneqq \face \cup \Pcom. 
\]

\end{defn}

We refer to Figure~\ref{fig:reachregion} for an illustration. Observe that the desired traversals $T_A, T_Z$ for defining $\reachr(p)$ always exist: For any reach traversals $T'_A, T_Z'$ from $p$ to $\sA(p)$ and $\sZ(p)$, respectively, consider the latest point in which $T'_A, T'_B$ intersect, say $t$. We can define reach traversals $T_A$ and $T_Z$ by following $T'_A$ until $t$ and then following the remainder of $T_A'$ or $T_Z'$ to reach $\sA(p)$ or $\sZ(p)$, respectively. These traversals satisfy the conditions by construction. (Strictly speaking, any feasible choice for $T_A, T_Z$ gives a potentially different reach region $\reachr(p)$. However, any fixed choice will be sufficient for our proofs, e.g., choosing lexicographically smallest/largest traversals.)  

The following property generalizes an insightful property of reachability from the inputs to the outputs (cf. \cite[Lemma 10]{AltG95} and \cite[Corollary 4.2]{avraham2015faster}) to a similar property for reachability from arbitrary block positions to the outputs, using the same argument of crossing traversals.  

\begin{figure}
\centering
\includegraphics[width=0.5\textwidth]{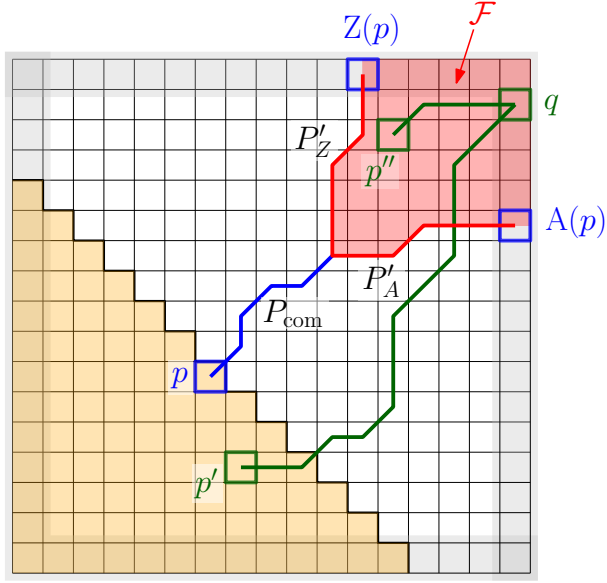}
\caption{Illustration of $\reachr(p)$, Proposition~\ref{prop:reachregion} and Lemma~\ref{lem:totalordermain}: Any reach traversal from $p' \notin \reachr(p)$ must cross $P'_A$ or $P'_Z$ to reach $q$. However, if $\trav{p''}{q}$ but $p'' \in \reachr(p)$, then $q$ might not be reachable from $p$. A sufficient condition for $p' \notin \reachr(p)$ is that $p'\ne p$ and $L(p') \le L(p)$ (indicated by the orange triangular area).}
\label{fig:reachregion}
\end{figure}

\begin{prop}\label{prop:reachregion}
Let $p, p' \in B$, $q \in \Bout$ with $\ind(q)\in \intvl(p)$ and $p' \notin \reachr(p)$. Then $\trav{p'}{q}$ implies $\trav{p}{q}$.
\end{prop}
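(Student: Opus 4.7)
The plan is to show that any reach traversal $T'$ from $p'$ to $q$ has the property that $\mypath(T')$ must intersect $\mypath(T_A) \cup \mypath(T_Z)$, at which point \obsref{crossingpaths} (applied with $T_1 \in \{T_A, T_Z\}$ and $T_2 = T'$) immediately yields $\trav{p}{q}$. First I would dispose of the boundary cases: if $\ind(q) = \sA(p)$ or $\ind(q) = \sZ(p)$, then $\trav{p}{q}$ follows directly from the definitions of $\sA(p)$ and $\sZ(p)$, so we may henceforth assume $\ind(q)$ lies \emph{strictly} between $\sA(p)$ and $\sZ(p)$. In particular, $q$ lies in the interior of the boundary arc along $\Bout$ from $\sA(p)$ to $\sZ(p)$.

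For this main case I would argue topologically. Since monotone paths with both endpoints in $B$ stay within $\overline{B}$, the polygonal curves $\mypath(T_A)$, $\mypath(T_Z)$, and $\mypath(T')$ all lie in $\overline{B}$. The curves $P_A'$, $P_Z'$, together with the arc from $\sA(p)$ to $\sZ(p)$, form a simple closed curve bounding $\face$, and the tail $\Pcom$ attaches the branching point on $\partial \face$ to $p$. Consequently, removing $\mypath(T_A) \cup \mypath(T_Z) = \Pcom \cup P_A' \cup P_Z'$ from $\overline{B}$ separates it into at least two regions: one containing the open face $\face$ together with the interior of the arc, and the other(s) containing $p'$ (which lies in $\overline{B} \setminus \reachr(p)$). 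Since $q$ lies in the interior of the arc, hence in the face-side region, while $p'$ lies in a different region, any continuous curve in $\overline{B}$ from $p'$ to $q$ must cross $\mypath(T_A) \cup \mypath(T_Z)$.

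Finally, once such an intersection is established between $\mypath(T')$ and (say) $\mypath(T_A)$, \obsref{crossingpaths} applied with $T_1 = T_A$ (from $p$ to $\sA(p)$) and $T_2 = T'$ (from $p'$ to $q$) yields $\trav{p}{q}$, by concatenating the prefix of $T_A$ up to the intersection point with the suffix of $T'$ from that point onward. The main obstacle I anticipate is making the topological separation rigorous: this rests on the fact that $P_A'$ and $P_Z'$ are non-intersecting polygonal lines meeting only at the branching point and ending at distinct points $\sA(p)$, $\sZ(p)$ on $\Bout$ (so that $P_A' \cup P_Z'$ together with the arc is indeed a Jordan curve), plus a local neighborhood argument at $q$ ensuring that a small half-disk around $q$ inside $\overline{B}$ lies entirely in the face-side component. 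Once these topological facts are established, the separation—and hence the proposition—follows.
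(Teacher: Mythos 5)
Your proof is correct and follows essentially the same route as the paper: dispose of the boundary cases $\ind(q) \in \{\sA(p), \sZ(p)\}$, then argue that for $\sA(p) < \ind(q) < \sZ(p)$ any reach traversal from $p' \notin \reachr(p)$ to $q$ must cross $\mypath(T_A)$ or $\mypath(T_Z)$, and apply \obsref{crossingpaths}. The paper states the separation step more tersely (``must cross the boundary of $\face$, in particular $\mypath(T_A)$ or $\mypath(T_Z)$''), whereas you spell out the Jordan-curve separation of $\overline{B}$ by $\Pcom \cup P_A' \cup P_Z'$; this is a faithful elaboration of the same idea rather than a different argument.
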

\begin{proof}
The claim holds trivially if $\ind(q) = \sA(p)$ or $\ind(q) = \sZ(p)$. Thus, we may assume that $\sA(p) < \sZ(p)$, which implies that the face \face in $\reachr(p)$ is nonempty with $q \in \face$ and $p' \notin \face$. Hence any reach traversal $T$ from $p'$ to $q$ must cross the boundary of $\face$, in particular, the path $\mypath(T_A)$ or $\mypath(T_Z)$, where $T_A, T_Z$ both originate in $p$. By Observation~\ref{obs:crossingpaths}, this yields $\trav{p}{q}$.
\end{proof}

\paragraph{Reachability Labelling.}

We define a total order on nodes in $B$ that allows us to succinctly represent reachability towards $\Bout$ for any subset $S\subseteq B$ in space $\tOh(|S|+ |\Bout|)$. The key is a labelling $L: [n]\times[n] \to \mathbb{N}$, defined by $L((x,y)) = x+y$, that we call the \emph{reachability labelling}. For an illustration of the following lemma, we refer to Figure~\ref{fig:reachregion}.

\begin{lem}\label{lem:totalordermain}
Let $p=(x,y), p'=(x',y') \in B$ with $L(p') \le L(p)$ and $q\in \Bout$ with $\ind(q)\in \intvl(p)$. Then $\trav{p'}{q}$ implies $\trav{p}{q}$.
\end{lem}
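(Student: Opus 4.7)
The plan is to deduce the lemma from Proposition~\ref{prop:reachregion}. The case $p' = p$ is trivial, so I will assume $p' \ne p$ and aim to show $p' \notin \reachr(p) = \face \cup \Pcom$, after which the proposition immediately yields the conclusion.

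The key observation to exploit is that every step of a reach traversal increases the $x$-coordinate, the $y$-coordinate, or both. Hence for each grid position $p''$ visited by a reach traversal starting at $p$ we have $L(p'') \ge L(p)$, with equality only at $p$ itself; by linearity of $L$, this bound extends to all points on the polygonal path of the traversal. Applied to the traversals $T_A, T_Z$ defining $\reachr(p)$, the curves $\Pcom$, $P_A'$, $P_Z'$ all lie in the closed half-plane $H \coloneqq \{(x'',y'') : x'' + y'' \ge L(p)\}$.

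To conclude $\face \subseteq H$, I still need to verify that the $\Bout$-arc from $\sA(p) = (x_a,y_a)$ to $\sZ(p) = (x_z,y_z)$ lies in $H$. Since these outputs are reachable from $p$, we have $x_a, x_z \ge x$ and $y_a, y_z \ge y$. The counter-clockwise arc consists of (a segment of) the right edge $\{\max I\} \times J$ with $y'' \ge y_a \ge y$, followed by (a segment of) the top edge $I \times \{\max J\}$ with $x'' \ge x_z \ge x$. On the right segment $L(x'',y'') = \max I + y'' \ge x + y = L(p)$, and on the top segment $L(x'',y'') = x'' + \max J \ge x + y = L(p)$. Thus the entire boundary of $\face$ lies in $H$, and since $\face$ is a bounded region, $\face \subseteq H$ as well.

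Combining, any grid point $p' \in \reachr(p)$ satisfies $L(p') \ge L(p)$, so $L(p') < L(p)$ already rules out $p' \in \reachr(p)$. In the remaining case $L(p') = L(p)$, a grid point in $\reachr(p)$ must lie on the topological boundary of $H \cap \reachr(p)$, i.e., on $\Pcom$, $P_A'$, $P_Z'$, or the $\Bout$-arc. The hard part will be the case analysis here, but it is essentially routine: on $\Pcom$ equality occurs only at $p$; on $P_A', P_Z'$ equality forces the point to be the branching endpoint of $\Pcom$, which in turn must equal $p$ (i.e., $\Pcom$ is trivial); and on the $\Bout$-arc the chain $L(p'') = L(p)$, $y'' \ge y_a \ge y$, $\max I \ge x$ forces $y'' = y$ and $\max I = x$, giving $p'' = p$ (symmetrically for the top edge). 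Each sub-case contradicts $p' \ne p$, so $p' \notin \reachr(p)$, and Proposition~\ref{prop:reachregion} finishes the proof.
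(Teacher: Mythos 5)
Your proof is correct and follows the same high-level plan as the paper's: reduce to showing $p' \notin \reachr(p)$ and then invoke \propref{reachregion}. The difference is in the intermediate observation you extract. You note that each step of a reach traversal increases a coordinate, but then keep only the weaker consequence that $\reachr(p)$ lies in the half-plane $\{(x'',y''): x''+y'' \ge L(p)\}$. This forces a separate boundary analysis when $L(p') = L(p)$, walking through $\Pcom$, $P_A'$, $P_Z'$, and the $\Bout$-arc to show each forces $p'' = p$. The paper instead retains the two coordinate inequalities separately: every $r = (r_x, r_y) \in \reachr(p)$ satisfies $r_x \ge x$ \emph{and} $r_y \ge y$. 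Together with $x' + y' = L(p') \le L(p) = x + y$, this immediately pins down $p' = p$, with no boundary case to treat. Your case analysis does check out (including the $\Bout$-arc argument), but it is essentially re-deriving the quadrant containment of $\reachr(p)$ piece by piece after having thrown it away by summing the two inequalities too early; keeping them separate yields the same conclusion in one line.
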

\begin{proof}
The proof idea is to show that $L(p') \le L(p)$ implies that $p' \notin \reachr(p)$, and hence \propref{reachregion} shows the claim.
Note that by monotonicity of reach traversals, any point $r=(r_x,r_y) \in \reachr(p)$  satisfies $r_x \ge x$ and $r_y \ge y$. Thus, $p' \in \reachr(p)$ only if $x' \ge x$, $y' \ge y$, but this together with $x' + y' = L(p') \le L(p) = x+y$ implies $(x',y') = (x,y)$. Summarizing, we either have $p = p'$, which trivially satisfies the claim, or $p'\notin \reachr(p)$, which yields the claim by \propref{reachregion}.
\end{proof}

For any $S\subseteq B$, this labelling enables a surprisingly succinct characterization of which terminals in $S$ have reach traversals to which outputs in $\Bout$ by the following lemma (greatly generalizing a simpler characterization\footnote{In our language, this characterization is as follows: For any $p\in \Bin, q\in \Bout$, we have $\trav{p}{q}$ if and only if $\ind(q)\in \intvl(p)$ and there is some $p'\in \Bin$ with $\trav{p'}{q}$. It is easy to see that this characterization no longer holds if we replace $\Bin$ by an arbitrary subset $S\supseteq \Bin$; our approach instead relies on the reachability labelling to obtain a succinct and algorithmically tractable characterization.} for the special case of $S= \Bin$, cf. \cite[Lemma 10]{AltG95} and~\cite[implicit in Lemma 4.4]{avraham2015faster}). This is one of our key insights.

\begin{corbox}\label{cor:reachlabels}
Let $q\in \Bout$ and define $\ell(q) \coloneqq \min \{ L(p) \mid p \in B, \trav{p}{q}\}$. Then for any $p\in B$, we have 
\[\trav{p}{q} \qquad \text{if and only if} \qquad \ind(q) \in \intvl(p) \text{ and } \ell(q) \le L(p).\] 
\end{corbox}

\begin{proof}
Clearly, $\trav{p}{q}$ implies, by definition of $\sA(p), \sZ(p)$, and $\ell(q)$, that $\sA(p) \le \ind(q) \le \sZ(p)$ and $\ell(q) \le L(p)$.

Conversely, assume that $\ind(q) \in \intvl(p)$ and $\ell(q) \le L(p)$. Take any $p' \in B$ with $\trav{p'}{q}$ and $\ell(q) = L(p')$.  Thus we have $L(p') = \ell(q) \le L(p)$, $\ind(q) \in \intvl(p)$ and $\trav{p'}{q}$, which satisfies the requirements of \lemref{totalordermain}, yielding $\trav{p}{q}$.
\end{proof}

Given this characterization, we obtain a highly succinct representation of reachability information. Specifically, to represent the information which terminals in $S$ have reach traversals to which outputs in~$\Bout$, we simply need to store $\ell(q)$ for all $q\in \Bout$ as well as the interval $\intvl(p)$ for all $p \in S$. Thus, the space required to store this information amounts to only $\Oh(\sBbound + |S|)$, which greatly improves over a naive $\Oh(\sBbound \cdot |S|)$-sized tabulation.

\paragraph{Reverse Information.}

By defining $\Lrev((x,y)) = - L((x,y)) = - x - y$, we obtain a labelling with symmetric properties. In particular, define $\sArev(q) \coloneqq \min \{ \ind(p) \mid p\in \Bin, \trav{p}{q} \}$, $\sZrev(q) \coloneqq \max \{ \ind(p) \mid p\in \Bin, \trav{p}{q}\}$ and the corresponding \emph{reverse reachability interval} $\intvlrev(q)\coloneqq [\sArev(q),\sZrev(q)]$. It is straightforward to prove the following symmetric variant of \corref{reachlabels}.
\begin{cor}\label{cor:reachlabelsrev}
Let $p\in \Bin$ and define $\ellrev(p) \coloneqq \min \{ \Lrev(q) \mid q \in B, \trav{p}{q}\}$. Then for any $q\in B$, we have 
\[\trav{p}{q} \qquad \text{if and only if} \qquad \ind(p) \in \intvlrev(q) \text{ and } \ellrev(p) \le \Lrev(q).\] 
\end{cor}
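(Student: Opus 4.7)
The plan is to prove Corollary~\ref{cor:reachlabelsrev} by a direct dualization of the proof of Corollary~\ref{cor:reachlabels}. The intuition is that the block $B$, the reach relation $\trav{\cdot}{\cdot}$, and the labelling $L$ together enjoy a natural reflection symmetry: reversing the direction of every grid edge (equivalently, replacing $M$ by the matrix $M^\rev$ obtained by flipping both coordinate axes) swaps the roles of $\Bin$ and $\Bout$, sends the anti-diagonal labelling $L(x,y) = x+y$ to $\Lrev = -L$ (up to an additive constant irrelevant to the statement), and reverses $\trav{p}{q}$ into $\trav{q}{p}$. Since the combinatorial objects $\sA(\cdot), \sZ(\cdot), \ell(\cdot)$ and $\sArev(\cdot), \sZrev(\cdot), \ellrev(\cdot)$ are defined by exactly this swap, applying Corollary~\ref{cor:reachlabels} to $M^\rev$ and translating back yields the claim.

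Concretely I would first verify that the index function $\ind$ still orders the (now swapped) boundary consistently in the reversed diagram, so that the ``interval'' condition $\ind(\cdot) \in [\sArev(\cdot), \sZrev(\cdot)]$ makes sense with the same counter-clockwise/clockwise interpretation. Then I would re-derive, almost verbatim, the key auxiliary statements in reversed form: Observation~\ref{obs:crossingpaths} is symmetric in the two traversals and applies with no change; the reverse analog of the reach region, $\reachrev(q)$, is defined as the face bounded by the extremal reach traversals from $\sArev(q)$ and $\sZrev(q)$ to $q$; and the reverse analog of Proposition~\ref{prop:reachregion} says that if $\ind(p) \in [\sArev(q), \sZrev(q)]$ and $p' \notin \reachrev(q)$, then $\trav{p}{p'}$ implies $\trav{p}{q}$. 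The proof is the same crossing-paths argument.

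The reverse analog of Lemma~\ref{lem:totalordermain} then states: for $q, q' \in B$ with $\Lrev(q') \le \Lrev(q)$ (equivalently $L(q') \ge L(q)$) and $\ind(p) \in [\sArev(q), \sZrev(q)]$, the relation $\trav{p}{q'}$ implies $\trav{p}{q}$. To prove it, note that by monotonicity of reach traversals, every point $(r_x, r_y) \in \reachrev(q)$ satisfies $r_x \le q_x$ and $r_y \le q_y$, so $L(q') \ge L(q)$ combined with $q' \in \reachrev(q)$ forces $q = q'$; in the remaining case $q' \notin \reachrev(q)$ and the previous proposition applies. Given this lemma, Corollary~\ref{cor:reachlabelsrev} follows by the identical argument used to derive Corollary~\ref{cor:reachlabels} from Lemma~\ref{lem:totalordermain}: the ``only if'' direction is immediate from the definitions of $\sArev, \sZrev, \ellrev$, while the ``if'' direction picks some $q'$ witnessing $\ellrev(p) = \Lrev(q')$ with $\trav{p}{q'}$ and invokes the reverse lemma.

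The only obstacle is careful bookkeeping: making sure that the orientation of $\ind$ around the boundary, the direction of the inequality $\Lrev(q') \le \Lrev(q)$, and the assignment of the roles ``source'' and ``target'' are all flipped consistently. No new geometric or combinatorial idea is required beyond what is already in the proof of Corollary~\ref{cor:reachlabels}.
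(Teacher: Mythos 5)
Your proof is correct and is exactly what the paper leaves implicit: the authors introduce $\Lrev$, $\sArev$, $\sZrev$, and $\ellrev$ as the time-reversed counterparts and simply state that the symmetric variant of Corollary~\ref{cor:reachlabels} ``is straightforward to prove,'' so your careful dualization (reversed reach region $\reachrev$, reversed versions of Proposition~\ref{prop:reachregion} and Lemma~\ref{lem:totalordermain}, and the same two-line derivation of the corollary) is precisely the intended argument made explicit.

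One small remark in your favor: the statement of Corollary~\ref{cor:reachlabelsrev} as printed in the paper has the roles of $p$ and $q$ accidentally swapped on the right-hand side — it writes ``$\ind(q)\in[\sArev(p),\sZrev(p)]$ and $\ellrev(q)\le\Lrev(p)$'' even though $\ellrev$ is only defined on $\Bin$ and $\sArev,\sZrev$ characterize which inputs reach a given target. The form you derive, $\ind(p)\in[\sArev(q),\sZrev(q)]$ and $\ellrev(p)\le\Lrev(q)$, is the one actually used later in the paper (e.g., in the computation of $\ell_2(q)$ in Section~\ref{sec:ds-computeParentInformation} and in the proof of Lemma~\ref{lem:singlestepreach}), so your version is the correct one.
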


\paragraph{Summary of Reachability Characterization}

As a convenient reference, we collect here the main notation and results introduced in this section.

\begin{sumbox}
For any $p\in B$, the \emph{reachability interval} $\intvl(p)$ is defined as $[\sA(p),\sZ(p)]$ with 
\begin{align*}
\sA(p) & = \min \{ \ind(q) \mid q\in \Bout, \trav{p}{q}\},\\
\sZ(p) & = \max \{ \ind(q) \mid q\in \Bout, \trav{p}{q}\}.
\end{align*}
(Note that $\intvl(p)$ might be empty if $\sA(p) = \infty, \sZ(p) = -\infty$.) For any $q\in \Bout$, its \emph{reachability level}~$\ell(q)$ is defined as 
\[\ell(q) = \min \{ L(p) \mid p \in B, \trav{p}{q}\},\]
where $L((x,y)) = x+y$. For any $p\in B, q\in \Bout$, we have the reachability characterization that  
\[\trav{p}{q} \qquad \text{if and only if} \qquad \ind(q) \in \intvl(p) \text{ and } \ell(q) \le L(p).\] 
For any $q\in B$, we have the \emph{reverse reachability interval} $\intvlrev(q) = [\sArev(q),\sZrev(q)]$ with
\begin{align*}
\sArev(q) & = \min \{ \ind(p) \mid p\in \Bin, \trav{p}{q} \}, \\
\sZrev(q) & = \max \{ \ind(p) \mid p\in \Bin, \trav{p}{q}\}.
\end{align*}
(Again, $\intvlrev(q)$ might be empty if $\sArev(q) = \infty, \sZrev(q) = -\infty$.) For any $p\in \Bin$, its \emph{reverse reachability level} $\ellrev(p)$ is defined as 
\[\ellrev(p) = \min \{ \Lrev(q) \mid q \in B, \trav{p}{q}\},\]
where $\Lrev((x,y)) = -x-y$. For any $p\in \Bin, q\in B$, we have the reachability characterization that  
\[\trav{p}{q} \qquad \text{if and only if} \qquad \ind(p) \in \intvlrev(q) \text{ and } \ellrev(p) \le \Lrev(q).\] 
\end{sumbox}

\subsection{Information stored at canonical block \boldmath $B$}
\label{sec:ds-information}

Using the characterization given in \correfs{reachlabels}{reachlabelsrev}, we can now describe which information we need to store for any canonical block $B\in \blocks$.

\begin{defnbox}\label{def:blockinfo}
Let $B \in \blocks$. The \emph{information stored at $B$} (which we denote as $\diagram_{M, \term}(B)$) consists of the following information: First, we store \emph{forward reachability information} consisting of,
\begin{itemize}
\item for every $p\in \Bin \cup \term_B$, the interval $\intvl(p)$, and
\item for every $q \in \Bout$, the reachability level $\ell(q)$.
\end{itemize}
Symmetrically, we store \emph{reverse reachability information} consisting of,
\begin{itemize}
\item for every $q\in \Bout \cup \term_B$, the interval $\intvlrev(q)$, and
\item for every $p \in \Bin$, the reverse reachability level $\ellrev(p)$.
\end{itemize}
Finally, if $B$ has children $B_1, B_2 \in \blocks$, where $B_1$ is the lower or left sibling of $B_2$, we additionally store
\begin{itemize}
\item an orthogonal range minimization data structure $\ORB$ storing, for each \emph{free} $q\in \Bmid = \Bout_1 \cap \Bin_2$, the value $\ellrev_2(q)$ under the key $(\ind(q), \ell_1(q))$. Here $\ell_1(q)$ denotes the forward reachability level in $B_1$, and $\ellrev_2(q)$ denotes the reverse reachability level in $B_2$. 
\end{itemize}
\end{defnbox}

\subsection{Computing Information at Parent From Information at Children}
\label{sec:ds-computeParentInformation}

We show how to construct the information stored at the blocks quickly in a recursive fashion.

\begin{lem}\label{lem:computeParentInformation}
Let $B \in \blocks$ with children $B_1, B_2$. Given the information stored at $B_1$ and $B_2$, we can compute the information stored at $B$ in time $\Oh( (\sBbound + |\term_B|)\log \sBbound )$.
\end{lem}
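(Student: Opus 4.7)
My plan is to reduce every entry of the information stored at $B$ to $\Oh(1)$ queries on auxiliary two‑dimensional orthogonal range data structures built over $\Bmid$ and the outward portions of the boundaries of $B_1, B_2$. Without loss of generality assume $B$ is split horizontally, with $B_1$ the lower child and $B_2$ the upper child, so that $\Bmid = \Bout_1 \cap \Bin_2$ is a horizontal segment, $\Bin = \Bin_1 \cup (\Bin_2 \setminus \Bmid)$ and $\Bout = (\Bout_1 \setminus \Bmid) \cup \Bout_2$. For every $m \in \Bmid$ the information at the children provides the values $\ell_1(m), \sA_2(m), \sZ_2(m), \ellrev_2(m), \sArev_1(m), \sZrev_1(m)$. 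Using these, I would first build $\ORB$ as specified, together with $\Oh(1)$ additional 2D range structures: for free $m \in \Bmid$, one storing $\sA_2(m)$ (resp.\ $\sZ_2(m)$) under the key $(\ind(m), \ell_1(m))$ for min (resp.\ max) queries; symmetric structures keyed by $(\ind(m), \ellrev_2(m))$ with values $\sArev_1(m), \sZrev_1(m)$; and analogous structures over $\Bout \cap B_1$ and $\Bin \cap B_2$ (keyed by $(\ind(\cdot), \ell_1(\cdot))$ and $(\ind(\cdot), \ellrev_2(\cdot))$, respectively). By the bounds recalled in \secref{orthrange}, all of these can be built in total time $\Oh(\sBbound \log \sBbound)$.

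The core computation is the forward information at $B$. Fix $p \in (\Bin \cup \term_B) \cap B_1$; by monotonicity, any reach traversal from $p$ in $B$ either stays entirely in $B_1$ and ends in $\Bout \cap B_1$, or crosses $\Bmid$ exactly once and ends in $\Bout_2$. Consequently
\[ \sA(p) \,=\, \min\!\Bigl(\min\{\ind(q) : q \in \Bout \cap B_1,\ \travsub{p}{q}{B_1}\},\ \min\{\sA_2(m) : m \in \Bmid \text{ free},\ \travsub{p}{m}{B_1}\}\Bigr), \]
where the second term is obtained by commuting the min over the reachable $q \in \Bout_2$ with the min over $m$, which by definition yields $\sA_2(m)$. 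By \corref{reachlabels} applied to $B_1$, both $\travsub{p}{q}{B_1}$ and $\travsub{p}{m}{B_1}$ are equivalent to $\ind(\cdot) \in [\sA_1(p), \sZ_1(p)]$ together with $\ell_1(\cdot) \le L(p)$, so each of the two inner minima is resolved by a single 2D range‑min query on the prepared auxiliary structures. Replacing min by max yields $\sZ(p)$, so the interval $\intvl(p)$ is obtained in $\Oh(\log \sBbound)$ time. For $p \in (\Bin \cup \term_B) \cap B_2 \setminus B_1$, monotonicity forces every reach traversal from $p$ to stay inside $B_2$, so we may simply copy $\intvl(p) = \intvl_2(p)$ (using $\Bout_2 \subseteq \Bout$). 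The reachability level $\ell(q)$ for $q \in \Bout$ is handled in a symmetric fashion: $\ell(q) = \ell_1(q)$ for $q \in \Bout \cap B_1$, while for $q \in \Bout_2$ we have $\ell(q) = \min\bigl(\ell_2(q),\ \min\{\ell_1(m) : m \in \Bmid \text{ free},\ \travsub{m}{q}{B_2}\}\bigr)$, and via \corref{reachlabelsrev} applied to $B_2$ the inner min becomes the 2D range‑min query on keys $(\ind(m), \ellrev_2(m))$ over the rectangle $[\sArev_2(q), \sZrev_2(q)] \times (-\infty, \Lrev(q)]$.

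The reverse information $\intvlrev, \ellrev$ is produced by completely symmetric arguments, with $B_1 \leftrightarrow B_2$ and forward/reverse characterizations interchanged. Summing up, each element of $\Bin \cup \Bout \cup \term_B$ triggers $\Oh(1)$ orthogonal range queries, each of cost $\Oh(\log \sBbound)$, on top of the $\Oh(\sBbound \log \sBbound)$ preprocessing, which yields the claimed bound $\Oh((\sBbound + |\term_B|)\log \sBbound)$. The main point requiring care—and the heart of the argument—is the min/max commutation that factors the "through $\Bmid$" reachability into a single range query whose values are the precomputed $\sA_2(m), \sZ_2(m), \ellrev_2(m), \ldots$; without it, a direct implementation would iterate over all $m \in \Bmid$ (or all $q \in \Bout_2$) per queried point and destroy the time budget. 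Verifying this factoring, and the correct instantiation of \correfs{reachlabels}{reachlabelsrev} to rewrite each "$\trav{}{}$ in $B_i$" constraint as a rectangle in the appropriate $(\ind, \ell)$ or $(\ind, \ellrev)$ space, is the main technical content of the proof.
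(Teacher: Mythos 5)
Your proposal follows essentially the same route as the paper's proof: decompose reachability from a point $p$ (or to a point $q$) into the "stays in one child" case and the "crosses the splitting boundary $\Bmid$" case, factor the latter by commuting the two nested minima so that the inner one collapses to a precomputed child quantity ($\sA_2(m)$, $\sZ_2(m)$, $\ell_1(m)$, etc.), then rewrite the remaining reachability constraint inside a single child via Corollary~\ref{cor:reachlabels} (resp.\ Corollary~\ref{cor:reachlabelsrev}) as an axis-parallel rectangle in $(\ind,\ell)$-space, so each entry of the parent's information is a constant number of 2D orthogonal range queries on $\Oh(1)$ structures built over $\Bmid$ and the child boundaries. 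This is exactly the paper's argument (up to the cosmetic choice of a horizontal rather than vertical split and a slightly different bookkeeping of which auxiliary structures are enumerated up front), and your running-time accounting matches as well.
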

\begin{proof}
Without loss of generality, we assume that $B_1, B_2$ are obtained from $B$ by a vertical split (the other case is analogous) -- let $B_\lt, B_\rt$ denote the left and right child, respectively.
As a convention, we equip the information stored at $B_\lt, B_\rt$ with the subscript $\lt, \rt$, respectively, and write the information stored at $B$ without subscript. Furthermore, we let $\Bmidfree$ denote the set of \emph{free} positions of the splitting boundary $\Bmid = \Bout_\lt \cap \Bin_\rt$.

\paragraph{Computation of $\intvl(p)$.}
Let $p \in \Bin \cup \term_B$ be arbitrary. We first explain how to compute $\sA(p)$ (see Figure~\ref{fig:intervalcomp} for an illustration).
If $p \in B_\rt$, then $\sA(p) = \sA_\rt(p)$, since by monotonicity any $q\in \Bout$ with $\trav{p}{q}$ satisfies $q\in \Bout_\rt$. Thus, it remains to consider $p\notin B_\rt$. 

\begin{figure}
\centering
\includegraphics[width=0.8\textwidth]{figures/interval_computation.pdf}
\caption{Computation of $\intvl(p)$. To determine the smallest (largest) reachable index on $\Bout \cap B_\rt$,  we optimize, over all $j \in \Bmid$ with $\trav{p}{j}$, the smallest (largest) reachable index $\sA_\rt(j)$ ($\sZ_\rt(j)$) on $\Bout_\rt$.
	In this diagram, bright (dark) cells show free (non-free) positions. $\Bmid$ is the boundary shared by $B_l$ (left of $\Bmid$) and $B_r$ (right of $\Bmid$); see Figure \ref{fig:blockdefs} as a reminder of how we split boxes.
}
\label{fig:intervalcomp}
\end{figure}

We claim that for $p\notin B_\rt$, we have $A(p) = \min\{A_1(p), A_2(p)\}$,  where
\begin{align*}
A_1(p) & \coloneqq \min_{\substack{q\in \Bout \cap B_\lt,\\ \trav{p}{q}}} \ind(q) \\
A_2(p) & \coloneqq \min_{\substack{j \in \Bmidfree,\\\trav{p}{j}}} \; \min_{\substack{q\in \Bout \cap B_\rt,\\ \trav{j}{q}}} \ind(q)
\end{align*}
Indeed, this follows since each path starting in $p \in B_\lt$ and ending in $\Bout$ must end in $B_\lt$, or cross $\Bmid$ at some free $j \in \Bmid$ and end in $B_\rt$.

To compute $A_1(p)$ note that \corref{reachlabels} yields $A_1(p) = \min \{ \ind(q) \mid q\in \Bout \cap B_\lt, \ind(q) \in [\sA_\lt(p),\sZ_\lt(p)], \ell_\lt(q) \le L(p) \}$, which can be expressed as an orthogonal range minimization query. 

Likewise, to compute $A_2(p)$, note that $\Bout \cap B_\rt = \Bout_\rt$. Thus,
\[ \sA_2(p) = \min_{\substack{j \in \Bmidfree,\\ \trav{p}{j}}} \; \min_{\substack{q\in \Bout_\rt,\\\trav{j}{q}}} \; \ind(q) = \min_{\substack{j \in \Bmidfree,\\ \trav{p}{j}}} \; \sA_\rt(j) = \min_{\substack{j \in \Bmidfree,\\ \ind(j) \in \intvl_\lt(p), \ell_\lt(j) \le L(p)}} \sA_\rt(j), \]
where the second and last equalities follow from the definition of $\sA_\rt$ and \corref{reachlabels}, respectively. It follows that we can compute $\sA_2(p)$ using a simple orthogonal range minimization query.

Switching the roles of minimization and maximization, we obtain the analogous statements for computing $\sZ(p)$. We summarize the resulting algorithm for computing the reachability intervals $\intvl(p)$ for all $p\in \Bin \cup \term_B$ formally in Algorithm~\ref{alg:computeNewIntervals}. Its correctness follows from the arguments above.

Let us analyze the running time of Algorithm~\ref{alg:computeNewIntervals}: Observe that $|\Bmidfree| \le \sBbound$. Thus, we can construct the orthogonal range data structures $\OR_\sA$, $\OR_\sZ$, and $\OR_\tp$ in time $\Oh(\sBbound \log \sBbound)$ (see \secref{orthrange}). For each $p \in \Bin \cup \term_B$, we perform at most a constant number of two-dimensional orthogonal range minimization/maximization queries, which takes time $\Oh(\log \sBbound)$, followed by constant-time computation. The total running time amounts to $\Oh((\sBbound + |\term_B|) \log \sBbound)$.

\begin{algorithm} 
\begin{algorithmic}[1]
\State Build $\OR_{\sA}$ storing $\sA_\rt(j)$ under the key $(\ind(j), \ell_\lt(j))$ for all $j \in \Bmidfree$ (for minimization queries) 
\State Build $\OR_{\sZ}$ storing $\sZ_\rt(j)$ under the key $(\ind(j), \ell_\lt(j))$ for all $j \in \Bmidfree$ (for maximization queries) 
\State Build $\OR_{\tp}$ storing $\ind(q)$ under the key $(\ind(q), \ell_\lt(q))$ for all $q \in \Bout \cap B_\lt$ (for both queries) 
\For{$p \in (\Bin \cup \term_B)$}
\If{$p \in B_\rt$}
\State $\intvl(p) \gets \intvl_\rt(p)$
\Else
\State $\sA_1(p) \gets \OR_\tp.\min([\sA_\lt(p),\sZ_\lt(p)]\times (-\infty, L(p)])$
\State $\sA_2(p) \gets \OR_\sA.\min([\sA_\lt(p), \sZ_\lt(p)]\times (-\infty, L(p)])$
\State $\sA(p) \gets \min\{\sA_1(p), \sA_2(p)\}$
\Algblankline
\State $\sZ_1(p) \gets \OR_\tp.\max([\sA_\lt(p),\sZ_\lt(p)]\times (-\infty, L(p)])$
\State $\sZ_2(p) \gets \OR_\sZ.\max([\sA_\lt(p), \sZ_\lt(p)]\times (-\infty, L(p)])$ 
\State $\sZ(p) \gets \max\{\sZ_1(p), \sZ_2(p)\}$
\EndIf
\EndFor
\end{algorithmic}
\caption{Computing $\intvl(p) = [\sA(p), \sZ(p)]$ for all $p\in \Bin \cup \term_B$.}
\label{alg:computeNewIntervals}
\end{algorithm}

\paragraph{Computation of  $\ell(q)$.}
Let $q \in \Bout$ be arbitrary. If $q \in B_\lt$, then $\ell(q) = \ell_\lt(p)$, since by monotonicity every $p\in B$ with $\trav{p}{q}$ is contained in $B_\lt$. Thus, we may assume that $q \notin B_\lt$.

We claim that for $q\notin B_\lt$, we have $\ell(q) = \min\{\ell_1(q), \ell_2(q)\}$,  where
\begin{align*}
\ell_1(q) & \coloneqq \min_{\substack{p\in B_\rt,\\ \trav{p}{q}}} L(p) \\
\ell_2(q) & \coloneqq \min_{\substack{j \in \Bmidfree,\\\trav{j}{q}}} \; \min_{\substack{p\in B_\lt,\\ \trav{p}{j}}} L(p)
\end{align*}
Indeed, this follows since each path starting in $B$ and ending in $q \in B_\rt$ must start in $B_\rt$, or start in $B_\lt$ and cross $\Bmid$ at some free $j \in \Bmid$.

Observe that the definition of $\ell_1(q)$ coincides with the definition of $\ell_\rt(q)$. Thus it only remains to compute $\ell_2(q)$. We write
\[ \ell_2(q) = \min_{\substack{j \in \Bmidfree,\\ \trav{j}{q}}} \; \min_{\substack{p\in B_\lt,\\\trav{p}{j}}} \; L(p) = \min_{\substack{j \in \Bmidfree,\\ \trav{j}{q}}} \; \ell_\lt(j) = \min_{\substack{j \in \Bmidfree,\\ \ind(j) \in \intvlrev_\rt(q), \ellrev_\rt(j) \le \Lrev(q)}} \ell_\lt(j), \]
where the second and last equalities follow from the definition of $\ell_\lt(j)$ and \corref{reachlabelsrev}, respectively. It follows that we can compute $\ell_2(p)$ using a simple orthogonal range minimization query.
For an illustration of $\ell_2(q)$, we refer to Figure~\ref{fig:levelcomp}.

We summarize the resulting algorithm for computing the reachability levels $\ell(q)$ for all $q\in \Bout$ formally in Algorithm~\ref{alg:computeNewReachLevels}. Its correctness follows from the arguments above.

\begin{figure}
\centering
\includegraphics[width=0.8\textwidth]{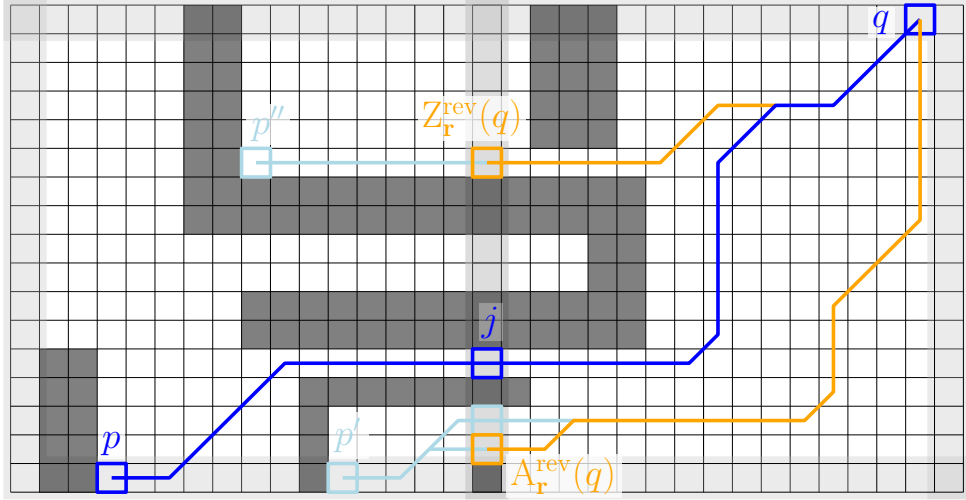}
\caption{Computation of $\ell(q)$. To determine the smallest label of a position in $B_\lt$ reaching $q$, we optimize, over all $j \in \Bmid$ with $\trav{j}{q}$, the smallest label $\ell_\lt(j)$ of a position $p\in B_\lt$ reaching~$j$.}
\label{fig:levelcomp}
\end{figure}

To analyze the running time of Algorithm~\ref{alg:computeNewReachLevels}, observe that $|\Bmidfree| \le \sBbound$. Thus, we can construct $\OR_\ell$ in time $\Oh(\sBbound \log\sBbound)$ (see \secref{orthrange}). For each $q\in \Bout$, we then perform at most one minimization query to $\OR_\ell$ in time $\Oh(\log \sBbound)$, followed by a constant-time computation. Thus, the total running time amounts to $\Oh(\sBbound \log \sBbound)$.  

\begin{algorithm} 
\begin{algorithmic}[1]
\State Build $\OR_\ell$ storing $\ell_\lt(j)$ under the key $(\ind(j), \ellrev_\rt(j))$ for all $j \in \Bmidfree$ (for minimization queries)
\For{$q \in \Bout$}
\If{$q \in B_\lt$}
\State $\ell(q) \gets \ell_\lt(q)$
\Else
\State $\ell_2(q) \gets \OR_{\ell}.\min([\sArev_\rt(q),\sZrev_\rt(q)]\times (-\infty, \Lrev(q)])$
\State $\ell(q) \gets \min\{\ell_\rt(q), \ell_2(q)\}$
\EndIf
\EndFor
\end{algorithmic}
\caption{Computing $\ell(q)$ for all $q\in \Bout$.}
\label{alg:computeNewReachLevels}
\end{algorithm}

\paragraph{Computation of reverse information.}
Switching the direction of reach traversals (which switches roles of inputs and outputs, $B_\lt$ and $B_\rt$, etc.) as well as $L$ and $\Lrev$, we can use the same algorithms to compute the reverse reachability information in the same running time of $\Oh((\sBbound + |\term_B|)\log \sBbound)$. 

\paragraph{Computation of $\ORB$.}

Finally, we need to construct the two-dimensional orthogonal range minimization data structure $\ORB$: Recall that $\ORB$ stores, for each $q \in \Bmidfree$, the value $\ellrev_\rt(q)$ under the key $(\ind(q), \ell_\lt(q))$ for minimization queries. Since $|\Bmidfree| \le \sBbound$, this can be done in time $\Oh(\sBbound \log \sBbound)$ (cf.~\secref{orthrange}).

\paragraph{Summary}
In summary, we can compute the information stored at $B$ (according to Definition~\ref{def:blockinfo}) from the information stored at $B_1$ and $B_2$ in time $\Oh((\sBbound + |\term_B|) \log \sBbound)$, as desired.
\end{proof}

\subsection{Initialization and Updates}
\label{sec:ds-updates}

We show how to construct our reachability data structure (using \lemref{computeParentInformation} that shows how to compute the information stored at some canonical block $B$ given the information stored at both children). Specifically, the following lemma proves \ref{enum:ds-init} of \lemref{ds}.

\begin{lem}\label{lem:ds-construct}
We can construct $\diagram_{M, \term}$ in time $\Oh(n^2+ |\term| \log^2 n)$.
\end{lem}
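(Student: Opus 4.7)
The plan is to build $\diagram_{M,\term}$ bottom-up in the canonical block hierarchy, starting from the leaves (level $2\kappa$) and proceeding toward the root (level $0$). For each leaf block $B\in \blocks_{2\kappa}$ the block itself is $2 \times 2$, so both $\sBbound$ and $|\term_B|$ are $\Oh(1)$, and we can directly enumerate all reach traversals inside $B$ in constant time to compute $\intvl(p), \ell(q), \intvlrev(q), \ellrev(p)$ for all relevant positions $p,q$; since leaves have no children, no $\ORB$ needs to be built. As a preprocessing step we also bucket every terminal $t\in \term$ into the $\Oh(1)$ leaf blocks containing it. Since there are $\Oh(n^2)$ leaf blocks and each terminal touches $\Oh(1)$ of them, this initialization takes $\Oh(n^2+|\term|) = \Oh(n^2)$ time.

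For every internal block $B\in \blocks_\ell$ with $\ell<2\kappa$, I would then invoke \lemref{computeParentInformation} on its two children $B_1, B_2$, which have already been processed in a preceding step. The lemma produces all fields of $\diagram_{M,\term}(B)$, including $\ORB$, in time $\Oh((\sBbound + |\term_B|)\log \sBbound)$, so correctness of the construction follows immediately from the correctness of \lemref{computeParentInformation} and the (trivial) correctness of the base case.

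The heart of the argument is the running time analysis. At level $\ell$ there are $\Oh(2^\ell)$ canonical blocks, each with $\sBbound = \Oh(n/2^{\lfloor \ell/2 \rfloor})$ and $\log \sBbound = \Oh(\kappa - \lfloor \ell/2 \rfloor)$. Summing the boundary contribution over all internal blocks yields
\[
\sum_{B\in \blocks} \sBbound \log \sBbound
\;=\; \Oh\!\left( n \sum_{i=0}^{\kappa} 2^i (\kappa - i) \right)
\;=\; \Oh(n \cdot 2^\kappa)
\;=\; \Oh(n^2),
\]
where I use $\sum_{i\le \kappa} 2^i(\kappa-i) = 2^\kappa \sum_{j\ge 0} j\, 2^{-j} = \Oh(2^\kappa)$. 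For the terminal contribution, every $t\in \term$ is contained in at most $\Oh(1)$ canonical blocks per level (siblings overlap only on a single shared row or column), so $\sum_{B \in \blocks_\ell} |\term_B| = \Oh(|\term|)$. Multiplying by $\log \sBbound \le \log n$ and summing over the $\Oh(\log n)$ levels yields $\Oh(|\term|\log^2 n)$. Adding the $\Oh(n^2)$ base case gives the claimed total of $\Oh(n^2 + |\term|\log^2 n)$.

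The main obstacle is precisely this boundary analysis: a naive bound that multiplies the total boundary mass $\Oh(n^2)$ by a global $\log n$ factor only gives $\Oh(n^2 \log n)$. The improvement to $\Oh(n^2)$ relies on the observation that $\log \sBbound$ shrinks proportionally as we descend in the hierarchy, so the weighted sum $\sum_B \sBbound \log \sBbound$ remains geometric and telescopes to $\Oh(n^2)$ rather than picking up an extra logarithmic factor.
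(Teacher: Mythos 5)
Your proof is correct and follows essentially the same approach as the paper: bottom-up construction via Lemma~\ref{lem:computeParentInformation}, a $\Oh(n^2)$ base case at the $2\times 2$ leaves, and a level-by-level sum where the boundary term telescopes geometrically because $\log\sBbound$ shrinks linearly while $\sBbound$ shrinks geometrically, and the terminal term uses that each terminal appears in $\Oh(1)$ blocks per level. The only cosmetic difference is that you reindex the boundary sum via $j=\kappa-i$ to expose $\sum_j j\,2^{-j}$, whereas the paper bounds the same sum directly.
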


\begin{proof}
We use the obvious recursive algorithm to build $\diagram_{M, \term}$ in a bottom-up fashion using \lemref{computeParentInformation}. Recall that $n = 2^\kappa + 1$ for some $\kappa \in \mathbb{N}$. Note that for the blocks $B\in \blocks_{2\kappa}$ in the lowest level, we can compute the information stored in $B$ in constant time, which takes time $\Oh(|\blocks_{2\kappa}|) = \Oh(n^2)$ in total.

It remains to bound the running time to compute $\diagram_{M, \term}(B)$ for $B\in \blocks_{\ell}$ for $0 \le \ell < 2\kappa$. 
Observe that this running time is bounded by $\Oh(\sum_{\ell=0}^{2\kappa-1} \sum_{B \in \blocks_{\ell}} c_B)$ by \lemref{computeParentInformation}, where $c_B \coloneqq \sBbound \log \sBbound + |\term_B| \log \sBbound$.

Let $0 \le \ell < 2\kappa$. By construction, we have  $|\blocks_{\ell}| = 2^{\ell}$. Furthermore, for any $B\in \blocks_{\ell}$, observe that its side lengths are bounded by $2^{\kappa - \lfloor \ell/2 \rfloor} + 1$, and thus $\sBbound \le 4 \cdot 2^{\kappa - \lfloor \ell/2 \rfloor} \le 2^{\kappa- \ell/2 + 3}$. Hence, we may compute
\begin{align}
\sum_{\ell = 0}^{2\kappa-1} \sum_{B \in \blocks_{\ell}} \sBbound \log \sBbound & \le \sum_{\ell = 0}^{2\kappa-1} |\blocks_{\ell}| 2^{\kappa-\ell/2+3} (\kappa-\ell/2+3) \nonumber\\
 & = \sum_{\ell = 0}^{2\kappa-1} 2^{\kappa+\ell/2+3} (\kappa-\ell/2+3) \label{eq:shortcut}\\
 & \le 2\left(\sum_{i = 0}^{\kappa} 2^{\kappa+i+3} (\kappa-i+3)\right) \nonumber\\
 & = 2(2^{2(\kappa + 3)} - 2^{\kappa+3}(\kappa + 5)) = \Oh(2^{2\kappa}) = \Oh(n^2). \nonumber
\end{align}
Furthermore, we have
\[ \sum_{\ell = 0}^{2\kappa-1} \sum_{B \in \blocks_{\ell}} |\term_B| \log \sBbound  \le \sum_{\ell = 0}^{2\kappa-1} 4|\term| (\kappa - \ell/2 + 3) = \Oh(|\term|\kappa^2) = \Oh(|\term|\log^2 n), \]
where we used that $\sum_{B\in \blocks_\ell} |\term_B| \le 4|\term|$ (as any position in $[n]\times [n]$ is shared by at most 4 blocks at the same level).
In total, we obtain a running time bound of $\Oh(n^2 + |\term|\log^2 n)$, as desired.
\end{proof}

With very similar arguments, we can prove \ref{enum:ds-updates} of \lemref{ds}.

\begin{lem} 
Let $M,M'$ be any $n\times n$ 0-1-matrices differing in at most $k$ positions and $\term, \term' \subseteq [n]\times [n]$ be any sets of terminals of size $k$. Given the data structure $\diagram_{M,\term}$, the set $\term'$, as well as the set $\Delta$ of positions in which $M$ and $M'$ differ,  we can update $\diagram_{M,\term}$ to $\diagram_{M',\term'}$ in time $\Oh(n\sqrt{k} \log n + k \log^2 n)$.
\end{lem}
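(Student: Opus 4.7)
The plan is to mirror the bottom-up construction from \lemref{ds-construct}, but to recompute only the information stored at canonical blocks whose stored data actually changes, and to charge the cost carefully per level.

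Let $D \coloneqq \Delta \cup (\term \triangle \term')$; note that $|D| \le 3k$. Call a canonical block $B \in \blocks$ \emph{dirty} if $B \cap D \neq \emptyset$, i.e., $B$ contains a position whose matrix entry or terminal status changes. First I would observe that if $B$ is not dirty, then $\diagram_{M,\term}(B)$ is already equal to $\diagram_{M',\term'}(B)$: the matrix entries inside $B$ are unchanged, $\term_B = \term'_B$, and by the proof of \lemref{computeParentInformation} the stored interval/level information and the data structure $\ORB$ depend only on the matrix restricted to $B$ and on $\term_B$. Hence only dirty blocks need to be touched. Note moreover that for each $p \in D$, there are at most $4$ canonical blocks containing $p$ at any fixed level (since at most two horizontal and two vertical index intervals of a given size may contain $p$), so at level $\ell$ the number of dirty blocks is at most $\min(|\blocks_\ell|,\; 4|D|) = \Oh(\min(2^\ell, k))$.

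The algorithm then proceeds as follows. First, collect the set of dirty blocks in time $\Oh(k \log n)$ by enumerating, for each $p \in D$, the $\Oh(\log n)$ canonical blocks containing $p$. Next, process dirty blocks in order of decreasing level (i.e., bottom-up): for leaf blocks in $\blocks_{2\kappa}$ recompute $\diagram_{M',\term'}(B)$ in $\Oh(1)$ time directly from the new matrix entries, and for every other dirty block $B$ invoke \lemref{computeParentInformation} using the (already updated) information at its two children. Since every dirty block's children are either non-dirty (and thus already hold the correct new information, inherited from $\diagram_{M,\term}$) or dirty and processed earlier, this yields $\diagram_{M',\term'}$ correctly.

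The only real work is the running-time analysis, which mirrors the computation \eqref{eq:shortcut} in the proof of \lemref{ds-construct} but restricted to dirty blocks. By \lemref{computeParentInformation}, recomputing one dirty block $B$ costs $\Oh((\sBbound + |\term'_B|)\log \sBbound)$. For the terminal term, at each level $\sum_{B \in \blocks_\ell} |\term'_B| \le 4|\term'| = \Oh(k)$ (each position lies in at most $4$ blocks per level), and each $\log \sBbound$ factor is $\Oh(\log n)$, so summing over the $\Oh(\log n)$ levels gives $\Oh(k \log^2 n)$. For the boundary term, I split the sum at the threshold level $\ell^* \coloneqq \lceil \log_2 k \rceil$. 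For $\ell < \ell^*$, I use the trivial bound of $2^\ell$ dirty blocks, so the contribution at level $\ell$ is $\Oh(2^\ell \cdot 2^{\kappa - \ell/2} \log n) = \Oh(2^{\kappa + \ell/2}\log n)$, summing geometrically to $\Oh(n\sqrt{k}\log n)$. For $\ell \ge \ell^*$, I use the bound of $\Oh(k)$ dirty blocks, so the contribution at level $\ell$ is $\Oh(k \cdot 2^{\kappa - \ell/2}\log n)$, summing geometrically to $\Oh(k \cdot 2^{\kappa - \ell^*/2}\log n) = \Oh(n\sqrt{k}\log n)$. Adding both parts yields the claimed bound $\Oh(n\sqrt{k}\log n + k\log^2 n)$.

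The one step that needs the most care is verifying that non-dirty blocks really have their stored information unchanged (so that \lemref{computeParentInformation} may be invoked with mixed ``old'' and ``new'' children without loss of correctness); this reduces to the observation above, combined with the fact that all pieces of $\diagram_{M,\term}(B)$ (the intervals $\intvl(p)$, $\intvlrev(q)$, the levels $\ell(q)$, $\ellrev(p)$, and the data structure $\ORB$) are functions solely of $M\vert_B$ and $\term_B$.
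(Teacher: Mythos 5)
Your proposal is correct and follows essentially the same approach as the paper: identify the blocks that intersect the changed positions and terminal sets, propagate updates bottom-up via \lemref{computeParentInformation}, and bound the cost by splitting the levels at roughly $\log k$ so that on low levels you charge all blocks and on high levels you charge the $\Oh(k)$ dirty blocks. The only cosmetic difference is that you use $\Delta\cup(\term\triangle\term')$ rather than the paper's $\Delta\cup\term\cup\term'$; both are $\Oh(k)$ and yield the same bound.
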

\begin{proof}
Set $X\coloneqq \Delta \cup \term \cup \term'$ and note that $|X| = \Oh(k)$. Observe that for any $B \in \blocks$ with $B\cap X= \emptyset$, we have $\diagram_{M, \term}(B) = \diagram_{M', \term'}(B)$, since the information stored at this block does not depend on any changed entry in $M$ and does not contain any of the old or new terminals. Thus, we only need to update $\diagram_{M,\term}(B)$ to $\diagram_{M',\term'}(B)$ for all $B\in \blocks$ with $B \cap X \ne \emptyset$. We do this by computing the information for these blocks in a bottom-up fashion analogously to \lemref{ds-construct}. Specifically, for any lowest-level block $B\in \blocks_{2\kappa}$ with $B\cap X \ne \emptyset$, we can compute the information stored in $B$ in constant time. Since there are at most $4|X|$ such blocks, this step takes time $\Oh(|X|) = \Oh(k)$ in total.

It remains to bound the running time to compute $\diagram_{M, \term}(B)$ for $B\in \blocks_{\ell}$ with $B\cap X \ne \emptyset$, where $0 \le \ell < 2\kappa$. For any such $B$, we let again $c_B \coloneqq \sBbound \log \sBbound + |\term_B| \log \sBbound$.
Observe that the running time for the remaining task is thus bounded by $\Oh(\sum_{\ell=0}^{2\kappa-1} \sum_{B \in \blocks_{\ell}, B \cap X \ne \emptyset} c_B)$ by \lemref{computeParentInformation}.

We do a case distinction into $0 \le \ell < \bell$ and $\bell \le \ell < 2\kappa$ where $\bell \coloneqq \lfloor \log k \rfloor$. For the first case, we bound
\begin{align*}
\sum_{\ell = 0}^{\bell-1} \sum_{\substack{B \in \blocks_{\ell},\\ B \cap X \ne \emptyset}} \sBbound \log \sBbound & \le \sum_{\ell = 0}^{\bell-1} \sum_{B \in \blocks_{\ell}} \sBbound \log \sBbound \\
 & \le \sum_{i = 0}^{\bell - 1} 2^{\kappa+i/2+3} (\kappa-i/2+3) \\
 & \le \left(\sum_{i = 0}^{\bell - 1} 2^{i/2}\right) 2^{\kappa+3}\kappa  = (1+\sqrt{2}) (2^{\bell/2}-1)  2^{\kappa+3} \kappa  = \Oh(\sqrt{k} n \log n),
\end{align*}
where the second inequality is derived as in~\eqref{eq:shortcut}. Recall that for any $0 \le \ell < 2\kappa$, there are at most $4|X|$ blocks $B\in \blocks_{\ell}$ with $B\cap X \ne \emptyset$ and for any $B\in \blocks_\ell$, we have $\sBbound \le 2^{\kappa-\ell/2+3}$. We compute
\begin{align*}
\sum_{\ell = \bell}^{2\kappa-1} \sum_{\substack{B \in \blocks_{\ell},\\ B \cap X \ne \emptyset}} \sBbound \log \sBbound & \le \sum_{\ell = \bell}^{2\kappa-1} 4|X| 2^{\kappa-\ell/2+3} (\kappa-\ell/2+3)\\
 & \le  4|X|2^{\kappa-\bell/2+3}(\kappa+3) \cdot \sum_{\ell = 0}^{2\kappa-\bell- 1} 2^{-\ell/2}  \\
 & = \Oh\left(|X|\frac{n}{\sqrt{k}}\log n\right) = \Oh(\sqrt{k}n\log n). 
\end{align*}
Furthermore, as in the proof of \lemref{ds-construct}, we again compute
\[ \sum_{\ell = 0}^{2\kappa-1} \sum_{\substack{B \in \blocks_{\ell},\\B \cap X \ne \emptyset}} |\term_B| \log \sBbound  \le \sum_{\ell = 0}^{2\kappa-1} 4|\term| (\kappa - \ell/2 + 3) = \Oh(|\term|\kappa^2) = \Oh(|\term|\log^2 n). \]
Thus, in total we obtain a running time of $\Oh(k + n\sqrt{k}\log n + |\term|\log^2 n) = \Oh(n \sqrt{k}\log n + k \log^2 n)$.
\end{proof}

\subsection{Reachability queries}
\label{sec:ds-reachqueries}

It remains to show how to use the information stored at all canonical blocks to answer reachability queries quickly. Specifically, the following lemma proves \ref{enum:ds-reachqueries} of \lemref{ds}.

\begin{lem}\label{lem:reachqueries}
Given $\diagram_{M,\term} = (\diagram_{M,\term}(B))_{B \in \blocks}$, we can answer reachability queries for $F\subseteq \term$ in time $\Oh( |\term| \log^3 n)$.
\end{lem}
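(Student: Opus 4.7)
The plan is a top-down divide-and-conquer over the canonical block hierarchy. For each visited block $B$, I compute ``effective'' versions $\tilde\sA, \tilde\sZ, \tilde\ell$ (and their reverse analogs) of the stored reachability summaries, defined by the same minima and maxima as in the \lemref{ds} data structure but with reach traversals taken in the \emph{effective graph} where positions in $F \cap B$ are traversable in addition to $M$-free positions. Since \lemref{totalordermain} uses only monotonicity of reach traversals (which is preserved when activating further positions), the characterization in \corref{reachlabels} applies verbatim to $\tilde\sA, \tilde\sZ, \tilde\ell$. The final answer is then the root-level test $\ind((n,n)) \in [\tilde\sA((1,1)), \tilde\sZ((1,1))]$ and $\tilde\ell((n,n)) \le L((1,1))$.

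The efficiency crux is that whenever $F \cap B = \emptyset$, the effective summaries equal the stored ones and no recomputation is needed. Thus the recursion descends only into blocks that contain an activated terminal; since each $t \in F$ belongs to $\Oh(\log n)$ ancestor blocks, there are only $\Oh(|F| \log n) \le \Oh(k \log n)$ such blocks. Moreover, by propagating ``demand'' from the root downward, I evaluate the effective summaries only at $\Oh(|F \cap B|) + \Oh(1)$ relevant positions per visited block, namely the activated terminals and the constantly-many query endpoints forwarded by the parent's own evaluation, rather than at all $\Theta(\sBbound)$ boundary positions (which would be unaffordable at high-level blocks).

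At each visited block $B$ with children $B_1, B_2$ and splitting boundary $\Bmid$, I compute an effective summary at a relevant position by invoking Algorithms~\ref{alg:computeNewIntervals} and~\ref{alg:computeNewReachLevels}, with two adjustments: (i) the stored child quantities $\sA_1, \sZ_1, \ell_1, \ellrev_2$ are replaced by their recursively computed effective versions $\tilde\sA_1, \tilde\sZ_1, \tilde\ell_1, \tilde\ellrev_2$, and (ii) activated terminals on $\Bmid$, i.e., $F \cap \Bmid$, are treated as additional admissible crossings beyond those indexed by $\ORB$. Adjustment (ii) is handled by separately iterating over $F \cap \Bmid$, contributing $\Oh(|F \cap B|)$ aggregate overhead per block. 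Each relevant-position evaluation performs a constant number of $\Oh(\log n)$-time range queries on $\ORB$, for $\Oh(\log n)$ per evaluation; summing over the $\Oh(|F \cap B| + 1)$ evaluations in each of the $\Oh(k \log n)$ visited blocks, and absorbing a final $\Oh(\log n)$ factor for maintaining summaries across the $\Oh(\log n)$ levels, yields the claimed $\Oh(k \log^3 n)$ total.

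The main obstacle will be the careful bookkeeping needed to keep the number of relevant-position evaluations at $\Oh(|F \cap B|) + \Oh(1)$ per visited block, rather than proportional to the boundary size $\sBbound$. This requires propagating demands top-down through the recursion so that each recursive child call is queried only for the specific effective summaries actually needed by the parent's orthogonal range queries on $\ORB$, rather than materializing full augmented boundary tables. Implementing adjustment~(ii) without exceeding the $\Oh(k \log n)$ total activated-terminal budget across all blocks is similarly delicate, as each $t \in F$ that happens to lie on a splitting boundary contributes to the ``extra crossing'' overhead at each of its $\Oh(\log n)$ ancestor blocks.
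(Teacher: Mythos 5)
There is a genuine gap in the proposal, and it is exactly where you flag a "delicate" point.

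Your plan is to compute effective summaries $\tilde\sA, \tilde\sZ, \tilde\ell$ (reachability where positions in $F$ are additionally traversable) at a small number of demanded positions per visited block, reusing the stored orthogonal range structure $\ORB$ with "two adjustments." But activating a single terminal $f \in F \cap B_1$ can change $\ell_{B_1}(j)$ (and similarly $\sA_{B_2}(j), \ellrev_{B_2}(j)$) for $\Theta(\sBbound)$ boundary positions $j \in \Bmid$: every output of $B_1$ reachable from $f$ whose previous minimum label exceeded $L(f)$ gets a new, smaller value. The stored $\ORB$ is keyed on the original $\ell_\lt(j)$ and stores the original $\ellrev_\rt(j)$, so range queries against it no longer compute effective quantities. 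Your adjustment (i) ("replace stored child quantities by effective versions") is therefore not a local patch but would require rebuilding $\ORB$ with effective keys and values over all $j \in \Bmid$ -- $\Theta(\sBbound \log n)$ work per visited high-level block, which blows the claimed $\Oh(k\log^3 n)$ bound. Adjustment (ii) handles $F \cap \Bmid$, but the problematic changes come from terminals strictly interior to a child, not on the splitting boundary. Put differently: to evaluate even one effective $\tilde\sA(p)$ by running Algorithm~\ref{alg:computeNewIntervals}, you need effective $\tilde\ell_\lt(j)$ and $\tilde\sA_\rt(j)$ across a range of $j$'s, not just at $\Oh(|F\cap B|+1)$ demanded positions; the demand does not stay bounded as it propagates down.

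The paper avoids this entirely by never computing effective summaries. Algorithm~\ref{alg:reach} maintains a \emph{set of terminals} $F$-reachable from the start, recursing bottom-up through the children and stitching across $\Bmid$ with $\singleStepReach$ (\lemref{singlestepreach}). That subroutine works only with the \emph{stored} $\ORB$ (valid, since it asks about reachability in $M$ alone, not in the effective graph), plus two fresh small range structures built on the $\Oh(|\term_B|)$ sources and targets, exploiting a decomposition of $\Bmid$ into $\Oh(|\term_B|)$ reach-equivalent intervals. This keeps the work per visited block at $\Oh(|\term_B|\log^2 n)$ without ever touching $\Theta(\sBbound)$ boundary positions, which is the step your proposal is missing. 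Your characterization step (that \corref{reachlabels} holds verbatim in the effective graph) is fine, and your count of $\Oh(k\log n)$ visited blocks is fine; the unresolved issue is the per-block cost once summaries start drifting from the stored ones.
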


Recall that we aim to determine whether there is a monotone path in $M$ using only positions $(i,j)$ with $M_{i,j} = 1$ or $(i,j) \in F$, i.e., we view $F$ as a set of \emph{free terminals} (typically,  $(i,j) \in F$ is a \emph{non-free} position). In this section we assume, without loss of generality, that $(1,1), (n,n) \in \term_B$ (whenever we construct/update to the data structure $\diagram_{M, \term}$, we may construct/update to $\diagram_{M, \term \cup \{(1,1),(n,n)\}}$ in the same asymptotic running time).

For any block $B\in \blocks$, $S\subseteq F \subseteq \term_B$, we define the function $\reach(B,S,F)$ that returns the set
\[ R \coloneqq \{ t\in F \mid \exists f_1, \dots, f_\ell \in F: f_1 \in S, f_\ell = t, \travseq{f_1}{f_2}{f_\ell}\}, \]
i.e., we interpret $S$ as a set of admissible \emph{starting positions} for a reach traversal and ask for the set of positions reachable from $S$ using only free positions or free terminals. We call any such position \emph{$F$-reachable from $S$}. (Recall that in the definition of $\trav{p}{q}$, only the intermediate points on a reach traversal from $p$ and $q$ are required to be free, while the endpoints $p$ and $q$ are allowed to be non-free.)

We show that $\reach(B,S,F)$ can be computed in time $\Oh( |\term_B| \log^3 n)$. Given this, we can answer any reachability query in the same asymptotic running time: the reachability query asks whether there is a sequence $f_1, \dots, f_\ell \in F\cup \{(1,1),(n,n)\}$ such that (i)  $f_1 = (1,1)$ and $f_\ell = (n,n)$, (ii) both $(1,1)$ and $(n,n)$ are free positions or contained in $F$ and (iii) $\travseq{f_1}{f_2}{f_\ell}$. Since (ii) can be checked in constant time, it remains to determine whether 
\[(n,n) \in \reach([n]\times[n],\; \{(1,1)\},\; F \cup \{(1,1), (n,n)\}).\]

\subsubsection{Computation of $\reach(B,S,F)$}

To compute $\reach(B,S,F)$, we work on the recursive block structure of $\diagram_{M, \term}$. Specifically, consider any canonical block $B \in \blocks$ (containing some free terminal) with children $B_1, B_2$. The (somewhat simplified) approach is the following: We first (recursively) determine all free terminals that are $F$-reachable from $S$ in $B_1$ and call this set $R_1$. Then, we determine all free terminals in $B_2$ that are (directly) reachable from $R_1$ and call this set $T_2$. Finally, we (recursively) determine all free terminals in $B_2$ that are $F$-reachable from $T_2 \cup (S\cap B_2)$ and call this set $R_2$. The desired set of free terminals that are $F$-reachable from $S$ is then $R_1\cup R_2$.  The main challenge in this process is the computation of the set $T_2$; this task is solved by the following lemma.

\begin{lem}\label{lem:singlestepreach}
Let $B \in \blocks$ be a block with children $B_1,B_2$. Given $S \subseteq B_1 \setminus \Bmid$ and $F \subseteq B_2 \setminus \Bmid$ with $S,F \subseteq \term_B$, we can compute the set 
\[T = \{ t\in F \mid \exists s \in S: \trav{s}{t} \} \]
in time $\Oh( |\term_B| \log^2 n)$. We call this procedure $\singleStepReach(B,S,F)$.
\end{lem}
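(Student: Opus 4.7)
The plan is as follows. Since $S \subseteq B_1 \setminus \Bmid$ and $F \subseteq B_2 \setminus \Bmid$ are separated by the splitting boundary $\Bmid$ and neither endpoint lies on $\Bmid$, any reach traversal from $s \in S$ to $t \in F$ must cross $\Bmid$ at some \emph{intermediate} position, which is thus required to be free and hence lies in $\Bmidfree$. So $\trav{s}{t}$ holds iff there exists $j \in \Bmidfree$ with $\trav{s}{j}$ (as a reach traversal in $B_1$) and $\trav{j}{t}$ (in $B_2$).

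Next, I would apply Corollary~\ref{cor:reachlabels} inside $B_1$ (noting $s \in \term_{B_1}$ and $j \in \Bmid \subseteq \Bout_1$) and its reverse counterpart Corollary~\ref{cor:reachlabelsrev} inside $B_2$ (noting $t \in \term_{B_2}$ and $j \in \Bmid \subseteq \Bin_2$) to rewrite both conditions as a conjunction of range constraints on the triple $(\ind(j),\ell_1(j),\ellrev_2(j))$: namely,
\[
    \ind(j) \in [\max\{\sA_1(s), \sArev_2(t)\},\, \min\{\sZ_1(s), \sZrev_2(t)\}],\qquad \ell_1(j) \le L(s),\qquad \ellrev_2(j) \le \Lrev(t).
\]
All quantities appearing here are already stored in $\diagram_{M,\term}(B_1)$ and $\diagram_{M,\term}(B_2)$. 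In particular, for any fixed pair $(s,t)$, existence of a witness $j$ can be decided by a single query to the precomputed data structure $\ORB$ (which stores $\ellrev_2(j)$ keyed on $(\ind(j), \ell_1(j))$ for $j \in \Bmidfree$) in $\Oh(\log n)$ time, by checking whether
\[\ORB.\min\!\bigl([\,\max\{\sA_1(s), \sArev_2(t)\}, \min\{\sZ_1(s), \sZrev_2(t)\}\,] \times (-\infty, L(s)]\bigr) \le \Lrev(t).\]

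The naive enumeration of all $|S| \cdot |F|$ pairs costs $\Oh(|\term_B|^2 \log n)$, which exceeds the budget. To achieve the claimed $\Oh(|\term_B| \log^2 n)$ bound, I would avoid pair enumeration by batching the queries through a segment tree built on $\ind(\Bmid)$. Each $s \in S$ is inserted, with annotation $L(s)$, at the $\Oh(\log n)$ canonical nodes whose ranges decompose $[\sA_1(s), \sZ_1(s)]$; each $t \in F$ is inserted, with annotation $\Lrev(t)$, analogously for its interval. For every pair $(s,t)$ whose intervals overlap there is a segment-tree node $v$ at which the interval of one side has a canonical piece entirely contained in the interval of the other side, so the $\ind$-range in the joint query collapses to $v$'s range. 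The residual condition on $j \in \Bmidfree$ with $\ind(j)$ in $v$'s range becomes the 2D constraint $\ell_1(j) \le L(s)$ and $\ellrev_2(j) \le \Lrev(t)$, which can be resolved via an auxiliary 2D range-min structure at $v$ (a standard segment-tree-over-range-tree). Iterating over $t \in F$ and, for each, performing $\Oh(\log n)$ such lookups (one per canonical node touched by $t$'s interval) then yields $\Oh(\log^2 n)$ time per $t$, and thus $\Oh(|F|\log^2 n) \le \Oh(|\term_B|\log^2 n)$ time overall; since each $s$ and $t$ touches $\Oh(\log n)$ canonical nodes, the annotation phase also fits within $\Oh(|\term_B| \log^2 n)$.

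The main obstacle is exactly the coupling of the $\ind$-range endpoints $\max\{\sA_1(s), \sArev_2(t)\}$ and $\min\{\sZ_1(s), \sZrev_2(t)\}$, which depends on \emph{both} $s$ and $t$; the segment-tree decomposition by canonical intervals is what decouples these, turning the query into a disjunction over $\Oh(\log n)$ nodes at each of which only one of the two intervals imposes a nontrivial $\ind$-constraint.
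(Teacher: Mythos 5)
Your opening reduction is exactly the paper's: a reach traversal from $s \in S$ to $t \in F$ must cross $\Bmid$ at a free position $j$, and applying Corollaries~\ref{cor:reachlabels} and~\ref{cor:reachlabelsrev} turns $\trav{s}{t}$ into a system of range constraints on $(\ind(j), \ell_\lt(j), \ellrev_\rt(j))$ parameterized by $(s,t)$. Your observation that the naive per-pair enumeration is too slow, and that the coupling of the $\ind$-range endpoints is the obstacle, is also on point.

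However, the segment-tree batching you propose has two gaps that would need to be closed before the time bound holds. First, you build the segment tree on $\ind(\Bmid)$ and augment each of its nodes with a 2D range-min structure over the $j$'s in that node's range. Since $|\Bmid| = \Theta(\sBbound)$ can be as large as $\Theta(n)$, just constructing this hierarchy already costs $\Omega(\sBbound \log \sBbound)$, which is not bounded by $\Oh(|\term_B|\log^2 n)$ when $|\term_B| \ll \sBbound$. Second, and more fundamentally, your per-$t$ query visits only the $\Oh(\log n)$ canonical nodes of $t$'s interval $\intvlrev_\rt(t)$, which handles only the case where a canonical piece of $\intvlrev_\rt(t)$ sits inside some $\intvl_\lt(s)$. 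Your own decoupling observation also requires the symmetric case, where a canonical piece of $\intvl_\lt(s)$ sits inside $\intvlrev_\rt(t)$ (take for instance $\intvl_\lt(s)$ strictly nested in $\intvlrev_\rt(t)$ with offset endpoints so that none of $t$'s canonical pieces lands inside $\intvl_\lt(s)$). That symmetric case requires iterating over $s$'s rather than $t$'s and outputs a \emph{set} of newly reachable $t$'s, which re-raises the question of how to avoid re-reporting the same $t$ across multiple $s$'s within the time budget.

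The paper avoids both issues by partitioning $\Bmidfree$ not according to a fixed binary segment tree but into $\Oh(|S|+|F|) = \Oh(|\term_B|)$ \emph{$(S,F)$-reach-equivalent} intervals $J_1,\dots,J_\ell$, obtained by cutting at the $\Oh(|\term_B|)$ endpoints $\sA_\lt(s),\sZ_\lt(s),\sArev_\rt(t),\sZrev_\rt(t)$. On each such $J$, every $s$-interval and every $t$-interval either contains $J$ or is disjoint from $J$, so the coupled endpoint problem disappears entirely: for each $J$ one computes $L^J = \max\{L(s) : J \subseteq \intvl_\lt(s)\}$ by a single 2D max-query, then $\ell^J = \min\{\ellrev_\rt(j) : \ind(j) \in J,\ \ell_\lt(j) \le L^J\}$ by a single query to the stored $\ORB$, and finally reports $\{t : J \subseteq \intvlrev_\rt(t),\ \ell^J \le \Lrev(t)\}$ via a \emph{decremental} range-reporting structure, deleting each reported $t$ so the total reporting cost telescopes to $\Oh(|\term_B|\log^2 n)$ across all $J$. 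This deletion trick is what bounds the reporting time and is missing from your sketch. If you want to salvage the segment-tree route, you would need to (a) build the tree only over the $\Oh(|\term_B|)$ distinct interval endpoints, (b) query the precomputed $\ORB$ rather than building auxiliary per-node structures, and (c) handle the symmetric containment case with a delete-after-report mechanism, at which point the construction effectively collapses to the paper's.
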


We postpone the proof of this lemma to \secref{singlestepreach} and first show how this yields an algorithm for $\reach$, and thus, for reachability queries. 

\begin{proof}[Proof of \lemref{reachqueries}] 
We claim that Algorithm~\ref{alg:reach} computes $R$ in time $\Oh( |\term| \log^3 n)$.

\begin{algorithm} 
\begin{algorithmic}[1]
\Function{$\reach$}{$B, S, F$}
\If{$F = \emptyset$}
\State \Return $\emptyset$
\ElsIf{$B$ is a $2\times 2$ block}
\State Compute $R$ by checking all possibilities
\State \Return $R$
\Else \Comment{$B$ splits into child blocks $B_1, B_2$}
\State $S_1 \gets S \cap B_1$, $S_2 \gets S \cap B_2$
\State $R_1 \gets \reach(B_1, S_1, F \cap B_1)$
\State $T_2 \gets \singleStepReach(B, R_1 \setminus \Bmid, F \setminus B_1)$
\State $R_2 \gets \reach(B_2, S_2 \cup T_2 \cup (R_1\cap \Bmid), F \cap B_2)$
\State \Return $R_1 \cup R_2$
\EndIf
\EndFunction
\end{algorithmic}
\caption{Computing $\reach(B,S,F)$ for $B\in \blocks$, $S\subseteq F \subseteq \term_B$.}
\label{alg:reach}
\end{algorithm}

To ease the analysis, we introduce the shorthand that $\travsub{s}{t}{F}$ if and only if there are $f_1, \dots, f_\ell \in F$ with $f_1 = s, f_\ell = t$ and $\travseq{f_1}{f_2}{f_\ell}$, i.e., $t$ is $F$-reachable from $s$. 

We show that Algorithm~\ref{alg:reach} computes $R = \{ t \in F \mid \exists s\in S, \travsub{s}{t}{F} \}$ inductively: The base case for $2\times 2$ blocks $B$ holds trivially. Otherwise, by inductive assumption, we have  
\[ R_1 = \{ t\in F \cap B_1 \mid \exists s \in S \cap B_1, \travsub{s}{t}{F\cap B_1}\}.\]
Note that by definition of $\singleStepReach$, we furthermore have
\[ T_2 = \{ t\in F\setminus B_1 \mid \exists s \in R_1 \setminus \Bmid, \trav{s}{t}  \}. \]
Finally, by inductive assumption,
\begin{align*}
R_2  = \; &  \{ t \in F\cap B_2 \mid \exists s \in S \cap B_2, \travsub{s}{t}{F\cap B_2} \} \; \cup \\
& \{ t \in F\cap B_2 \mid \exists s \in T_2, \travsub{s}{t}{F\cap B_2} \} \; \cup \\
& \{ t \in F\cap B_2 \mid \exists s \in R_1\cap \Bmid, \travsub{s}{t}{F\cap B_2} \}.
\end{align*}
First, we show that any $t\in R_1 \cup R_2$ is contained in $R$: If $t \in R_1$, then there is some $s\in S \cap B_1 \subseteq S$ with $\travsub{s}{t}{F\cap B_1}$ (trivially implying $\travsub{s}{t}{F}$), and thus $t\in R$. Likewise, if $t\in T_2$, then there is some $t' \in R_1\setminus \Bmid$ with $\trav{t'}{t}$. Since $t'\in R_1$, there must exist some $s\in S$ with $\travsub{s}{t'}{F}$. Thus $\travsub{s}{t'}{F}$ and $\trav{t'}{t}$ yields $\travsub{s}{t}{F}$ and $t\in R$. Finally, if $t\in R_2$, there exists some $t'$ with $\travsub{t'}{t}{F\cap B_2}$ and either $t' \in S\cap B_2$, $t' \in T_2$, or $t' \in R_1\cap \Bmid$. In all these cases, there is some $s\in S$ with $\travsub{s}{t'}{F}$. Hence $\travsub{s}{t'}{F}$ and $\travsub{t'}{t}{F\cap B_2}$ imply $\travsub{s}{t}{F}$, placing $t$ in $R$. 

We proceed to show the converse direction that any $t\in R$ is contained in $R_1 \cup R_2$: Let $s\in S$ with $\travsub{s}{t}{F}$. If $t\in F \cap B_1$, then $\travsub{s}{t}{F}$ is equivalent to $\travsub{s}{t}{F \cap B_1}$ and $s\in S\cap B_1$ (by monotonicity). Thus, $t \in R_1$. It only remains to consider the case that $t\in F\setminus B_1$. If $s\in S\cap B_2$, then again my monotonicity $\travsub{s}{t}{F\cap B_2}$ must hold, which implies $t\in R_2$. Otherwise, we have $s \in S\setminus B_2$. Since additionally $t\in F\setminus B_1$, there must exist either (1) some $r \in F\cap \Bmid$ with $\travsub{s}{r}{F\cap B_1}$ and $\travsub{r}{t}{F\cap B_2}$ or (2) some $t'\in F \setminus B_2$, $t'' \in F \setminus B_1$ with $\trav{\travsub{s}{t'}{F\cap B_1}}{\travsub{t''}{t}{F\cap B_2}}$ (by monotonicity). For (1), note that $r\in R_1$ (as shown above), and thus $t\in R_2$. For (2), note that $t' \in R_1 \setminus B_2 = R_1 \setminus \Bmid$ (as $s\in S\cap B_1,t'\in F\setminus B_2$ and $\travsub{s}{t'}{F\cap B_1}$), $t'' \in T_2$ (as $t' \in R_1 \setminus \Bmid$, $t''\in F\setminus B_1$ and $\trav{t'}{t''}$) and finally $t\in R_2$ (as $t'' \in T_2$ and $\travsub{t''}{t}{F\cap B_2}$), as desired.

We analyze the running time of a call $\reach(B_0, S_0, F_0)$. Let $T(B) = \Oh( |\term_B| \log^2 n) $ denote the running time of $\singleStepReach(B,S,F)$ for arbitrary $S, F$. Observe that the running time of $\reach(B_0, S_0, F_0)$ is bounded by 
\begin{equation} \label{eq:reachbound}
\sum_{\substack{B\in \blocks,\\ \term_B \ne \emptyset}} \Oh(T(B)), 
\end{equation}
as for $\term_B = \emptyset$, we have $F \subseteq \term_B = \emptyset$, which is a base case of $\reach(\cdot)$. To bound the above term, fix any $\ell$, and note that for any $t \in \term$, there are at most 4 level-$\ell$ blocks $B\in \blocks_\ell$ with $t\in \term_B$ (if $t$ is on the boundary of some block $B \in \blocks_\ell$, it is shared between different blocks; however, any position is shared by at most 4 blocks). Thus $\sum_{B \in \blocks_\ell} |\term_B| \le 4|\term|$. Thus, \eqref{eq:reachbound} is bounded by
\[ \sum_{\ell=0}^{2\log(n-1)}  \sum_{\substack{B\in \blocks_\ell,\\ \term_B \ne \emptyset}} \Oh(|\term_B| \log^2 n ) =  \sum_{\ell=0}^{2\log(n-1)} \Oh(|\term|\log^2 n) = \Oh(|\term|\log^3 n).\] 

\end{proof}

\subsubsection{Computation of $\singleStepReach(B,S,F)$} \label{sec:singlestepreach}
It remains to prove \lemref{singlestepreach} to conclude the proof of \lemref{reachqueries}.
\begin{proof}[Proof of \lemref{singlestepreach}]
 
Consider $B\in \blocks$. We only consider the case that $B$ is split vertically (the other case is symmetric); let $B_\lt, B_\rt$ denote its left and right sibling, respectively.  
Let $S \subseteq B_\lt \setminus \Bmid$, $F\subseteq B_\rt \setminus \Bmid$ with $S,F \subseteq \term_B$ be arbitrary. We use notation (subscripts $\lt, \rt$, etc.) as in the proof of \lemref{computeParentInformation}. 

Observe that for any $s\in S, f\in F$, we have that $\trav{s}{f}$ if and only if there exists some $j\in \Bmidfree$ with $\trav{s}{j}$ and $\trav{j}{f}$. To introduce some convenient conventions, let $\Jmid = \{j_1, \dots, j_N\}$, where $j_1, \dots, j_N$ is the sorted sequence of $\ind(q)$ with $q\in \Bmidfree$. We call $J \subseteq \Jmid$ an interval of $\Jmid$ if $J = \{j_a, j_{a+1}, \dots, j_b\}$ for some $1 \le a \le b \le N$ and write it as $J = [j_a, j_b]_{\Jmid}$ (i.e., $[j_a, j_b]_{\Jmid}$ simply disregards any indices in $[j_a, j_b]$ representing positions not in $\Bmidfree$).

Consider any interval $J$ of $\Jmid$ with the property that for all $s\in S$ we either have $J \cap \intvl_\lt(s) = J$ or $J\cap \intvl_\lt(s) = \emptyset$ and for all $f\in F$ we either have $J\cap \intvl^\rev_\rt(f) = J$ or $J\cap \intvl^\rev_\rt(f) = \emptyset$. We call such a $J$ an \emph{$(S,F)$-reach-equivalent} interval. Note that by splitting $\Jmid$ right before and right after all points $\sA(s), \sZ(s)$ with $s\in S$ and $\sArev(f), \sZrev(f)$ with $f\in F$, we obtain a partition of $\Jmid$ into $(S,F)$-reach-equivalent intervals $J_1, \dots, J_\ell$ with $\ell = \Oh(|S \cup F|) = \Oh(|\term_B|)$.\footnote{To be more precise, we start with the partition $\partJ$ consisting of the singleton $\Jmid$. We then iterate over any point $j$ among $\sA(s), \sZ(s), s\in S$ and $\sArev(f), \sZrev(f), f\in F$, and replace the interval $J=[j_a,j_b]_\Jmid \in \partJ$ containing $j$ by the three intervals $[j_a, j)_{\Jmid}, \{j\}, (j, j_b]_{\Jmid}$, where the first and the last interval may be empty.}

\begin{claim}
Let $J$ be an $(S,F)$-reach-equivalent interval $J$. Let $R^J$ be the set of $t\in F$ reachable from $S$ via $J$, i.e., $R^J \coloneqq \{ t\in F \mid \exists s \in S, j \in J: \trav{\trav{s}{j}}{t} \}$. Define 
\[ \ell^J \coloneqq \min_{\substack{j \in J,\\\exists s \in S: \trav{s}{j}}} \ellrev_\rt(j).\]
We have 
\begin{equation}\label{eq:RJ}
R^J = \{ t \in F \mid J\subseteq \intvlrev_\rt(t), \ell^J \le \Lrev(t)\}. 
\end{equation}
\end{claim}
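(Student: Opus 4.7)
My plan is to establish the set equality in two directions, using Corollary \ref{cor:reachlabelsrev} applied within $B_\rt$ as the bridge between reach traversals in $B_\rt$ and the stored data $\intvlrev_\rt$ and $\ellrev_\rt$, together with the defining property of an $(S,F)$-reach-equivalent interval.

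For the forward inclusion, I will pick $t \in R^J$ witnessed by some $s \in S$ and $j \in J$ such that $\trav{s}{j}$ holds in $B_\lt$ and $\trav{j}{t}$ holds in $B_\rt$. The key step is to apply Corollary \ref{cor:reachlabelsrev} to the reach traversal from input $j \in \Bin_\rt$ to $t$ inside $B_\rt$; this yields both $\ind(j) \in \intvlrev_\rt(t)$ and $\ellrev_\rt(j) \le \Lrev(t)$. The reach-equivalence of $J$ upgrades the first fact to $J \subseteq \intvlrev_\rt(t)$, since $J \cap \intvlrev_\rt(t)$ contains $j$ and hence cannot be empty. The second fact gives $\ell^J \le \ellrev_\rt(j) \le \Lrev(t)$ because $j$ belongs to the minimization set defining $\ell^J$ (with $s$ witnessing the condition $\trav{s}{j}$).

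For the reverse inclusion, starting from $t \in F$ satisfying $J \subseteq \intvlrev_\rt(t)$ and $\ell^J \le \Lrev(t)$, I will let $j^* \in J$ be the minimizer in the definition of $\ell^J$, so that $\ellrev_\rt(j^*) = \ell^J$ and some $s^* \in S$ satisfies $\trav{s^*}{j^*}$ in $B_\lt$. Since $j^* \in J \subseteq \intvlrev_\rt(t)$ and $\ellrev_\rt(j^*) \le \Lrev(t)$, Corollary \ref{cor:reachlabelsrev} now runs in the opposite direction and produces $\trav{j^*}{t}$ in $B_\rt$. Because $j^* \in \Jmid$ is a free position on the splitting boundary $\Bmid$, concatenating the $B_\lt$-traversal from $s^*$ to $j^*$ with the $B_\rt$-traversal from $j^*$ to $t$ yields a valid reach traversal from $s^*$ to $t$ through $j^*$, certifying $t \in R^J$.

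The main step to handle with care is the reverse inclusion: it is where the definition of $\ell^J$ pays off, since we must exhibit a single index $j^* \in J$ that both comes equipped with an on-ramp from some $s^* \in S$ and satisfies the level condition required to reach $t$ inside $B_\rt$. Everything else is routine bookkeeping, including verifying that $\intvlrev_\rt(t)$ and $\ellrev_\rt(j^*)$ are well-defined stored quantities, which follows from $t \in F \subseteq \term_B \cap B_\rt = \term_{B_\rt}$ and $j^* \in \Jmid \subseteq \Bin_\rt$.
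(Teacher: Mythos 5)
Your proposal is correct and follows essentially the same route as the paper's proof: both directions hinge on applying Corollary~\ref{cor:reachlabelsrev} inside $B_\rt$ and on the $(S,F)$-reach-equivalence of $J$, and the direction showing $\supseteq$ picks the minimizer $j^*$ in the definition of $\ell^J$ together with its witness $s^*$, exactly as the paper does. The only difference is cosmetic bookkeeping (verifying the stored quantities are well-defined), which is a welcome addition but not a change in strategy.
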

\begin{proof}
See Figure~\ref{fig:singlestepreach} for an illustration.
Indeed, for any $t \in F$ with $J\subseteq \intvlrev_\rt(t)$ and $\ell^J \le \Lrev(t)$, consider any $j \in J$ with $\ellrev_\rt(j) = \ell^J$ and $\trav{s}{j}$ for some $s \in S$. Then we have $j \in J \subseteq \intvlrev_\rt(t)$ and $\ellrev_\rt(j) = \ell^J \le \Lrev(t)$. Thus by \corref{reachlabelsrev}, $\trav{j}{t}$, which together with $\trav{s}{j}$ implies $\trav{\trav{s}{j}}{t}$, as desired. For the converse, let $t \in F$ with $\trav{\trav{s}{j}}{t}$ for some $s\in S, j \in J$. Then by definition of $\ell^J$, we obtain $\ell^J \le \ellrev_\rt(j)$. Furthermore, by \corref{reachlabelsrev}, $\trav{j}{t}$ implies that $j \in \intvlrev_\rt(t)$ with $\Lrev(t) \ge \ellrev_\rt(j) \ge \ell^J$. Note that $j \in \intvlrev_\rt(t)$ implies $J \subseteq \intvlrev_\rt(t)$ (as $J$ is $(S,F)$-reach-equivalent), thus we obtain that $J \subseteq \intvlrev_\rt(t)$ and $\ell^J \le \ellrev_\rt(j)$, as desired.
\end{proof}
\begin{figure}
\centering
\includegraphics[width=0.8\textwidth]{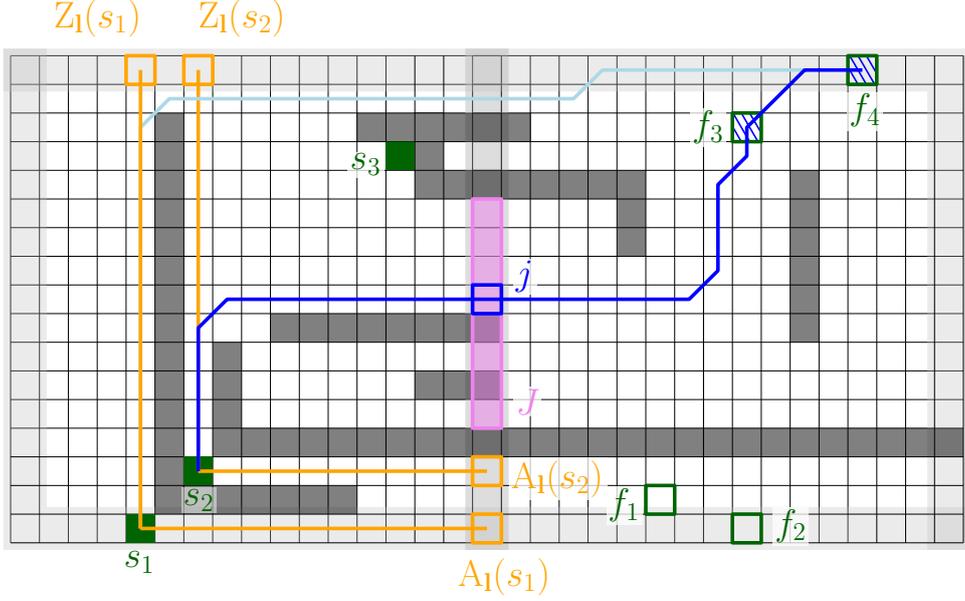}
\caption{Computation of $R^J$ for an $(S,F)$-reach equivalent interval $J$. Intuitively, we first determine, among indices in $J$ reachable from some $s\in S$, the index $j\in J$ with the best reachability towards $F$. We then determine all $f\in F$ reachable from $j$ (indicated by hatched boxes).}\label{fig:singlestepreach}
\end{figure}

Thus, after computing $\ell^J$, an orthogonal range reporting query can be used to report all $t\in F$ reachable from $S$ via $J$. To compute $\ell^J$, we observe that for any $j \in J$, we have
\[
\exists s \in S: \trav{s}{j}  \quad \stackrel{\text{Cor.\ \ref{cor:reachlabels}}}{\iff} \quad \exists s \in S: j \in \intvl_\lt(s), \ell_\lt(j) \le L(s) \quad \iff \quad \ell_\lt(j) \le \max_{\substack{s \in S,\\ j \in \intvl_\lt(s)}} L(s) =: L^j. \]
Noting (by $(S,F)$-reach-equivalence of $J$) that $j \in \intvl_\lt(s)$ if and only if $J \subseteq \intvl_\lt(s)$, we have that $L^j$ is independent of $j\in J$, and, in particular, equal to 
\begin{equation}
L^J \coloneqq \max_{\substack{s\in S,\\ J \subseteq \intvl_\lt(s)}} L(s), \label{eq:LJ}
\end{equation}
which can be computed by a single orthogonal range minimization query. Equipped with this value, we may determine $\ell^J$ as
\begin{equation}\label{eq:ellJ}
\ell^J = \min_{\substack{j \in J,\\\ell_\lt(j) \le L^J}} \ellrev_\rt(j).
\end{equation}
Note that given $\ell^J$, we may determine $R^J$ by a single orthogonal range reporting query; by ~\eqref{eq:RJ}.

We obtain the algorithm specified in Algorithm~\ref{alg:ssreach}, whose correctness we summarize as follows: in the $i$-th loop, we consider the  $i$-th $(S,F)$-reach-equivalent interval $J_i \eqqcolon [a_i, b_i]_{\Jmid}$ in the above partition of $\Jmid$. Observe that we determine $L^{J_i}$ according to its definition in~\eqref{eq:LJ}, and $\ell^{J_i}$ according to~\eqref{eq:ellJ}. Finally, we include in the set $R_i$ all elements of $R^{J_i}$ that have not yet been reported in a previous iteration, exploiting~\eqref{eq:RJ}. 

\begin{algorithm} 
\begin{algorithmic}[1]
\Function{$\singleStepReach$}{$B, S, F$}
\State Compute a partition of $\Jmid$ into $(S,F)$-reach-equivalent intervals $J_1, \dots, J_\ell$
\State Build $\OR_{S}$ storing $L(s)$ under the key $(\sA_\lt(s), \sZ_\lt(s))$ for all $s \in S$ (for maximization queries) 
\State Build $\OR_{F}$ storing $\ind(f)$ under the key $(\sArev_\rt(f), \sZrev_\rt(f), \Lrev(f))$ for all $f \in F$ \Statex \hfill (for decremental range reporting queries) 
\State \emph{Recall:} (precomputed) $\OR_B$ stores $\ellrev_2(q)$ under the key $(\ind(q), \ell_1(q))$ for all $q\in \Bmidfree$ \Statex \hfill (for minimization queries)
\For{$i= 1, \dots, \ell$} \Comment consider $J = [a_i,b_i]_{\Jmid}$
\State $L^J \gets \OR_S.\max((-\infty, a_i]\times [b_i, \infty))$
\State $\ell^J \gets \ORB.\min([a_i, b_i]\times (-\infty, L^J])$
\State $R_i \gets \OR_F.\mathrm{report}((-\infty, a_i] \times [b_i, \infty) \times [\ell^J, \infty))$
\State $\OR_F.\mathrm{delete}(R_i)$
\EndFor 
\State \Return $\bigcup_{i=1}^{\ell} R_i$
\EndFunction
\end{algorithmic}
\caption{Computing $\singleStepReach(B,S,F)$ for $B\in \blocks$, $S\subseteq B_\lt \setminus \Bmid,  F \subseteq B_\rt\setminus \Bmid$.}
\label{alg:ssreach}
\end{algorithm}

Let us analyze the running time.
Recall that $\ell = \Oh(|S \cup F|) = \Oh(|\term_B|)$. Thus, we can compute $J_1, \dots, J_\ell$ in time $\Oh(\ell \log \ell) = \Oh( |\term_B| \log n)$. Furthermore, as discussed in \secref{orthrange}, we can build $\OR_{S}$ in time $\Oh(|S| \log |S|) = \Oh(|\term_B| \log n)$ to support maximization queries in time $\Oh(\log |S|) = \Oh(\log n)$. Also, we can build $\OR_F$ in time $\Oh(|F|\log^2 |F|) = \Oh(|\term_B| \log^2 |\term_B|)$ to support deletions in time $\Oh(\log^2 |F|) = \Oh(\log^2 |\term_B|)$ and queries in time $\Oh(\log^2 |F| + k) = \Oh(\log^2 |\term_B| + k)$ where $k$ denotes the number of reported elements. Observe that $\ORB$ is already precomputed, as it belongs to the information stored at block $B$ (see Definition~\ref{def:blockinfo}).

We perform $\ell = \Oh(|\term_B|)$ iterations of the following form: First, we make a query to $\OR_S$ running in time $\Oh(\log n)$, followed by a query to $\OR_B$ running in time $\Oh(\log \sBbound) = \Oh(\log n)$. Then we obtain a set $R_i$ by a reporting query to $\OR_F$ running in time $\Oh(\log^2 |\term_B| + |R_i|)$. Afterwards, we delete all reported elements, which takes time $\Oh(|R_i| \log^2 |\term_B|)$. Thus, the total running time is bounded by $\Oh(|\term_B| \log n + \sum_{i=1}^\ell |R_i| \log^2 |\term_B|)$. Observe that we report each element in $\term_B$ at most once, which results in  $\sum_{i=1}^\ell |R_i| \le |\term_B|$. Hence, the total running time is bounded by $\Oh(|\term_B| (\log n + \log^2 |\term_B|)) = \Oh(|\term_B| \log^2 n)$, as desired.

\end{proof}

\newcommand{\abs}[1]{\left\vert #1 \right\vert}
\newcommand{\card}[1]{\left\vert #1 \right\vert}
\newcommand{\pione}{\ensuremath{\pi_F(v_1)}}
\newcommand{\sigmaone}{\ensuremath{\sigma_F(v_2)}}
\newcommand{\pitwo}{\ensuremath{\pi_{F'}(v_1)}}
\newcommand{\sigmatwo}{\ensuremath{\sigma_{F'}(v_2)}}
\newcommand{\pithree}{\ensuremath{\pi_G(v_3)}}
\newcommand{\sigmathree}{\ensuremath{\sigma_G(v_4)}}
\newcommand{\pifour}{\ensuremath{\pi_{G'}(v_3)}}
\newcommand{\sigmafour}{\ensuremath{\sigma_{G'}(v_4)}}
\newcommand{\piOR}{\ensuremath{\pi_\mathrm{OR}}}
\newcommand{\sigmaOR}{\ensuremath{\sigma_\mathrm{OR}}}
\newcommand{\deltar}{\ensuremath{2 + \frac{1}{4}\epsilon}}
\newcommand{\taurange}{\ensuremath{[-\frac{1}{4} \cdot \epsilon, (N^2 - \frac{3}{4}) \cdot \epsilon]}}
\newcommand{\taurangesqr}{\ensuremath{\taurange \times \taurange}}

\section{Conditional Lower Bound} \label{sec:lowerbound}

In this section we prove a lower bound of $n^{4-o(1)}$ for the discrete Fréchet distance under translation for two curves of length $n \in \mathbb{R}^2$ conditional on the Strong Exponential Time Hypothesis, or more precisely the $4$-OV Hypothesis. To this end, we reduce $4$-OV to the discrete Fréchet distance under translation.

Let us first have a closer look at $4$-OV. Given four sets of $N$ vectors $V_1, \dots, V_4 \subseteq \{0, 1\}^D$, the $4$-OV problem can be expressed as
\begin{equation} \label{eq:ov}
	\exists v_1 \in V_1, \dots, v_4 \in V_4 \;\forall j \in [D] \;\exists i \in \{1, \dots, 4\}: v_i[j] = 0.
\end{equation}
Recall from the introduction that we encode choosing the vectors $v_1, \dots, v_4$ by the canonical translation $\tau = (\tau_1, \tau_2) = (h_1 \cdot \epsilon, h_2 \cdot \epsilon)$ with $h_1, h_2 \in \{0, \dots, N^2 - 1\}$ for some constant $\epsilon > 0$ which is sufficiently small. To be concrete, let
\[
	\epsilon \coloneqq 0.001 / N^4
\]
for the remaining section. Choosing $v_1 \in V_1$ and $v_2 \in V_2$, we define
\[
h_1 \coloneqq h(v_1, v_2) \coloneqq \ind(v_1) + \ind(v_2) \cdot N,
\]
where $\ind(v_i)$ is the index of vector $v_i$ in the set $V_i$; similarly for $v_3 \in V_3, v_4 \in V_4$ we define $h_2 \coloneqq h(v_3,v_4)$. To perform the reduction, we want to construct two curves $\pi$ and $\sigma$ whose Fréchet distance decision for some $\delta$ is equivalent to the following expression, which is equivalent to (\ref{eq:ov}):
\begin{equation} \label{eq:reduction1}
	\exists \tau \in \mathbb{R}^2 \;\forall j \in [D] \;\exists i \in \{1, 2\}, v \in V_{2i-1}, v' \in V_{2i}: (v[j] = 0 \lor v'[j] = 0) \land (h(v,v') \cdot \epsilon = \tau_i).
\end{equation}
The expressions (\ref{eq:ov}) and (\ref{eq:reduction1}) are equivalent as the three quantifiers encode the same choices and we evaluate if there exists a zero in one of the chosen vectors. In (\ref{eq:reduction1}) we additionally need to make sure that the translation chosen by the outermost quantifier indeed is consistent with the vectors that are chosen by the innermost quantifier, which is done by requiring $h(v,v') \cdot \epsilon = \tau_i$.

We can further transform this expression to make it easier to create gadgets for the reduction:
\[
	\exists \tau \in [0,(N^2-1) \cdot \epsilon] \times [0,(N^2-1) \cdot \epsilon]\colon \;\bigwedge_{j \in [D]} \quad\quad \bigvee_{\mathclap{\substack{i \in \{1,2\} \\ v \in V_{2i-1}, v' \in V_{2i}: \\ v[j] = 0 \text{ or } v'[j] = 0}}} [h(v,v') \cdot \epsilon = \tau_i].
\]
According to this formula, we will construct gadgets. However, we cannot exactly ensure the equality $h(v,v') \cdot \epsilon = \tau_i$. Therefore, we resort to an approximate equality which still fulfills the intended usage of mapping translations to vector choices. The approximate values just snap to the closest canonical values. The gadgets we construct are the following:
\begin{itemize}
	\item \emph{Translation gadget:} It ensures that $\tau \in \taurangesqr$, \ie, we are always close to the points in the $\epsilon$-grid of translations that choose our vectors $v_1, \dots, v_4$.
	\item \emph{OV-dimension gadget:} AND over all $j \in [D]$.
	\item \emph{OR gadget:} The big OR in the formula.
	\item \emph{Equality gadget:} This gadget is only traversable if the two vectors it was created for correspond to $\tau$, \ie, it ensures that $h(v,v') \cdot \epsilon \approx \tau_i$.
\end{itemize}

We use the above mentioned gadgets as follows. The constructed curves $\pi$ and $\sigma$ start with the translation gadget consisting of the curves $\pi^{(0)}, \sigma^{(0)}$. They are followed by $D$ different parts that form the OV-dimension gadget. Each of the $D$ parts is an OR gadget and we call the respective curves $\pi^{(j)}$ and $\sigma^{(j)}$ for $j \in [D]$. Each of the OR gadgets $(\pi^{(j)}, \sigma^{(j)})$ contains several equality gadgets. We will use different variations of the equality gadget (one for each set of vectors $V_1, \dots, V_4$) but they are all of very similar structure. We need four different types of equality gadgets because for a certain $v_i \in V_i$ a part of the gadget is only inserted if $v_i[d] = 0$. Thus, if we traverse an equality gadget later, we know that it corresponds to one zero entry and also to the current translation. See Figure \ref{fig:whole_reduction} for an overview of the whole construction.

\begin{figure}
	\centering
	\includegraphics[width=\textwidth]{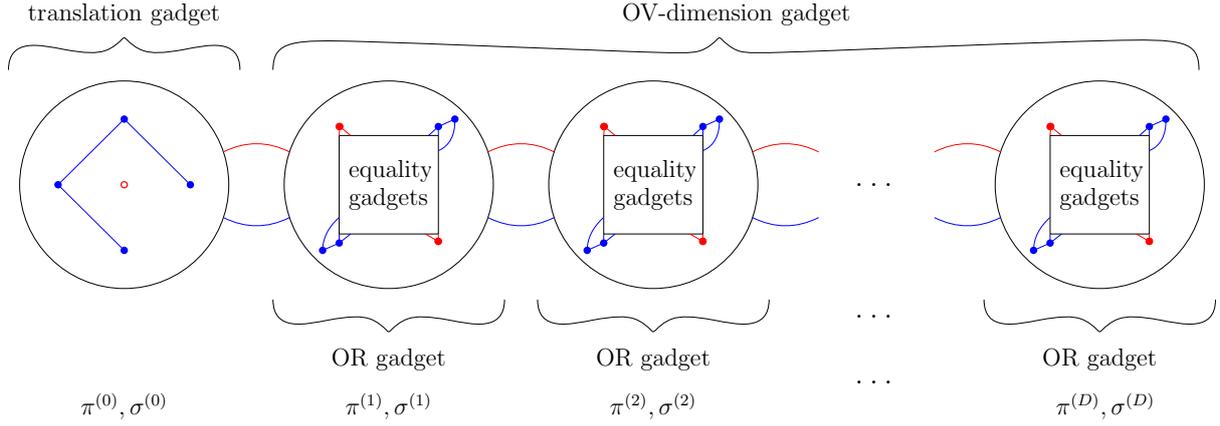}
	\caption{Overview of how the different gadgets are used in the curves that result from the reduction. We use one translation gadget, one OV-dimension gadget, $D$ OR gadgets, and $\Oh(ND)$ equality gadgets.}
	\label{fig:whole_reduction}
\end{figure}

Without loss of generality, assume that for all dimensions $j \in [D]$ at least one vector in $V_1 \cup \dots \cup V_4$ contains a $0$ in dimension $j$.
Now we give the detailed construction of the gadgets and the proofs of correctness. The instance of the discrete Fréchet distance under translation that we construct in the reduction uses a threshold distance of $\delta = \deltar$, \ie, we want to know for the constructed curves $\pi$ and $\sigma$ if their discrete Fréchet distance under translation is not more than $\delta$.

\begin{wrapfigure}{R}{0.25\textwidth}
	\vspace{-10pt}
	\centering
	\includegraphics[width=0.18\textwidth]{figures/translation_gadget.pdf}
	\caption{Translation gadget}
	\label{fig:translation_gadget}
	\vspace{-30pt}
\end{wrapfigure}

\paragraph{Translation Gadget.}
First we have to restrict the possible translations. To this end, we build a gadget to ensure $\tau \in \mathcal{T}$ for
\[
	\mathcal{T} \coloneqq \taurangesqr.
\]
This is realized by a gadget where curve $\pi^{(0)}$ consists of only one vertex and curve $\sigma^{(0)}$ consists of four vertices:
\begin{equation*}
	\hspace{-3cm} \begin{split}
	\pi^{(0)} &\coloneqq \langle (0,0) \rangle,\\
	\sigma^{(0)} &\coloneqq \langle (2-(N^2 - 1) \epsilon, 0), (0,2 - (N^2 - 1) \epsilon), (-2, 0), (0, -2) \rangle.
\end{split}
\end{equation*}
This gadget is sketched in Figure \ref{fig:translation_gadget}. 

\begin{lem} \label{lem:translation_gadget}
	Given two curves $\pi = \pi^{(0)} \circ \pi'$ and $\sigma = \sigma^{(0)} \circ \sigma'$ (with $\pi^{(0)}, \sigma^{(0)}$ as defined above), such that each $p \in \pi^{(0)}$ has distance greater than $10$ to each $p' \in \pi'$, the following holds:
\begin{enumerate}[label=(\roman*)]
	\item if $\tau \in [0, (N^2 - 1) \epsilon] \times [0, (N^2 - 1) \epsilon]$, then $\delta_F(\pi^{(0)}, \sigma^{(0)} + \tau) \leq \delta$
	\item if $\delta_F(\pi, \sigma + \tau) \leq \delta$, then $\tau \in \taurangesqr$
\end{enumerate}
\end{lem}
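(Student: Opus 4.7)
The plan is to reduce both (i) and (ii) to analyzing the four inequalities $\|\sigma^{(0)}_i+\tau\|\le\delta$ for $i\in[4]$. Since $\pi^{(0)}$ consists of the single vertex $(0,0)$, the only traversal of $(\pi^{(0)},\sigma^{(0)}+\tau)$ pairs that vertex with each of the four vertices of $\sigma^{(0)}+\tau$ in order, so
\[ \dF(\pi^{(0)},\sigma^{(0)}+\tau) \;=\; \max_{i\in[4]} \|\sigma^{(0)}_i+\tau\|, \]
and both directions boil down to controlling these four Euclidean distances (with $\delta = 2+\epsilon/4$).

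For (i), I would verify each inequality directly. Under $\tau_1,\tau_2\in[0,(N^2-1)\epsilon]$, for $i=3$ we have $\|\sigma^{(0)}_3+\tau\|^2 = (\tau_1-2)^2+\tau_2^2 \le 4+((N^2-1)\epsilon)^2$, and symmetrically for $i=4$. For $i=1$ the first coordinate of $\sigma^{(0)}_1+\tau$ lies in $[2-(N^2-1)\epsilon,\,2]$, so the same squared bound applies; likewise for $i=2$. With $\epsilon = 0.001/N^4$, the error $((N^2-1)\epsilon)^2 \le 10^{-6}/N^4$ is much smaller than $\epsilon$, so each squared distance lies strictly below $\delta^2 = 4+\epsilon+\epsilon^2/16$.

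For (ii), the crux is to show that any valid traversal $T$ of $(\pi,\sigma+\tau)$ achieving $\dF\le\delta$ pairs $\pi_1=(0,0)$ with exactly the four vertices of $\sigma^{(0)}$. Let $k$ be the largest index with $(1,k)\in T$. Since $(1,1)\in T$, we get $\|\sigma_1+\tau\|\le\delta$ and hence the crude bound $\|\tau\| \le \|\sigma_1\|+\delta < 5$. If $k<4$, the successor of $(1,k)$ in $T$ must be $(2,k)$ or $(2,k+1)$ (position $(1,k+1)$ is ruled out by maximality of $k$). In the first subcase, combining $\|\pi_1-\sigma_k-\tau\|\le\delta$ and $\|\pi_2-\sigma_k-\tau\|\le\delta$ yields $\|\pi_1-\pi_2\|\le 2\delta<8$, violating the distance-$8$ separation of $\pi_2$ from $\pi^{(0)}$. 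In the second subcase, $\|\pi_2-\sigma_{k+1}-\tau\|\le\delta$ gives $\|\pi_2-\sigma_{k+1}\|\le\delta+\|\tau\|<8$, and since $\sigma_{k+1}\in\sigma^{(0)}$ this again violates separation. Symmetrically, if $k\ge 5$ then $(1,1),(1,5)\in T$ (monotonicity prevents skipping over $(1,5)$) force $\|\sigma_5-\sigma_1\|\le 2\delta<8$, contradicting the separation of $\sigma_5$ from $\sigma^{(0)}$. Hence $k=4$, and the four resulting inequalities $\|\sigma^{(0)}_i+\tau\|\le\delta$ directly extract $\tau_1\ge 2-\delta = -\epsilon/4$ (from $i=3$) and $\tau_1\le\delta-2+(N^2-1)\epsilon = (N^2-3/4)\epsilon$ (from $i=1$), with analogous bounds for $\tau_2$ from $i\in\{2,4\}$, establishing $\tau\in\taurangesqr$.

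The only conceptually nontrivial step is the case analysis in (ii) forcing $\pi_1$ to be paired with all and only the vertices of $\sigma^{(0)}$; the rest is elementary algebra. The key leverage is the gap between the crude bound $\|\tau\|<5$ derivable merely from pairing $\pi_1$ with $\sigma_1$ and the distance-$8$ separation of the remaining points, which leaves no slack for the traversal to leave the prefix of $\pi$ too early or advance through the prefix of $\sigma$ too fast without exceeding $\delta\approx 2$.
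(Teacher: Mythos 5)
Your proposal is correct and takes essentially the same route as the paper: for (i) reduce to the four point distances and bound the error term using $\epsilon\le N^{-4}$, and for (ii) first derive a crude bound on $\|\tau\|$ from pairing $\pi_1^{(0)}$ with $\sigma_1^{(0)}$, then use the distance-$8$ separation to force the traversal to match $\pi_1^{(0)}$ against all four vertices of $\sigma^{(0)}$, extracting the bounds on $\tau_1,\tau_2$ from those four inequalities. The only difference is that the paper compresses the traversal-structure argument into a single sentence (``we have to stay in $\pi^{(0)}$ while traversing $\sigma^{(0)}$'') and uses the infinity norm as a lower bound, whereas you make the case analysis on the successor of the last position $(1,k)$ fully explicit — a more careful but conceptually identical argument.
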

\begin{proof}
	We start with showing (i), so assume $\tau \in [0, (N^2 - 1) \epsilon] \times [0, (N^2 - 1) \epsilon]$. Note that the maximal distance $\max_{q \in \sigma^{(0)}} \max_\tau \norm{\pi^{(0)} - (q+\tau)}$ is an upper bound on $\delta_F(\pi^{(0)}, \sigma^{(0)}+\tau)$. By a simple calculation we obtain the desired result:
\[
	\max_{q \in \sigma^{(0)}} \max_\tau \norm{\pi^{(0)} - (q+\tau)} < \sqrt{2^2 + \epsilon^2 N^4} < \sqrt{2^2 + \epsilon + \frac{1}{16}\epsilon^2} = 2 + \frac{1}{4} \epsilon,
\]
where we used $\epsilon \leq N^{-4}$.
	
Now we prove (ii). Note that the start points of $\pi$ and $\sigma + \tau$ have to be in distance $\leq \delta$, thus $\tau \in [-5, 5]^2$ (using a very rough estimate). As all points of $\pi'$ are further than $10$ from any point in $\pi^{(0)}$ and thus all points on the postfix $\pi'$ are further than $\delta$ from $\sigma^{(0)}+\tau$, we have to stay in $\pi^{(0)}$ while traversing $\sigma^{(0)}$. Thus, the following inequalities hold for $\tau_i > (N^2 - \tfrac{3}{4})\epsilon$ or $\tau_i < -\tfrac{1}{4}\epsilon$ and $i \in \{1,2\}$ (where $\norm{v}_\infty$ denotes the infinity norm of $v$):
\[
	\delta_F(\pi, \sigma + \tau) \geq \delta_F(\pi^{(0)}, \sigma^{(0)} + \tau) \geq \max_{i \in [4]}\left\{\norm{\pi_1^{(0)} - (\sigma_i^{(0)} + \tau)}_\infty\right\} > \delta,
\]
which is the contrapositive of (ii).
\end{proof}

\paragraph{OV-dimension Gadget.}
For every $4$-OV dimension $j \in [D]$, we construct separate gadgets $\pi^{(1)}, \dots, \pi^{(D)}$ for $\pi$ and $\sigma^{(1)}, \dots, \sigma^{(D)}$ for $\sigma$. We want to connect these gadgets in a way that the two curves are in distance not more than $\delta$ if and only if all gadgets have distance not more than $\delta$ for a given translation $\tau$. This is done by simply placing the gadgets in distance greater than $\delta + N^2 \cdot \epsilon$ from each other and concatenating them.

\begin{lem} \label{lem:ov_dimension_gadget}
	Given a translation $\tau \in \mathcal{T}$ and curves $\pi = \pi^{(1)}, \dots, \pi^{(D)}$ and $\sigma = \sigma^{(1)}, \dots, \sigma^{(D)}$ where for all $j \in [D]$ all points of $\pi^{(j)}$ are further than $\delta + 2N^2 \cdot \epsilon$ from each point of $\sigma^{(j')}$ with $j \neq j'$, then $\delta(\pi, \sigma + \tau) \leq \delta$ if and only if $\delta_F(\pi^{(j)}, \sigma^{(j)} + \tau) \leq \delta$ for all $j \in [D]$.
\end{lem}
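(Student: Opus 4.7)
The plan is to leverage the small magnitude of the translation $\tau$ combined with the large separation between non-corresponding gadget parts to decompose any valid traversal gadget-by-gadget.

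First I would handle the easier ``if'' direction: assuming each $\delta_F(\pi^{(j)}, \sigma^{(j)}+\tau) \le \delta$, pick valid traversals $T^{(j)}$ of $\pi^{(j)}, \sigma^{(j)}+\tau$ for threshold $\delta$. Reindexing to account for the offsets of $\pi^{(j)}, \sigma^{(j)}$ inside the concatenated curves, the concatenation $T^{(1)} \concat \cdots \concat T^{(D)}$ is a valid traversal of $\pi, \sigma+\tau$ for $\delta$: each $T^{(j)}$ ends at the last vertex pair of gadget~$j$ and $T^{(j+1)}$ starts at the first vertex pair of gadget~$j+1$, so the seam between them corresponds to a $(1,1)$-step in the concatenated indexing, and every position along the concatenation lies within some individual $T^{(j)}$ and is hence at distance at most $\delta$.

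For the harder ``only if'' direction, the key observation is a norm bound on $\tau$. Since $\tau \in \mathcal{T} = \taurangesqr$, we have $\|\tau\| \le \sqrt{2}\, N^2 \epsilon < 2 N^2 \epsilon$. Combined with the separation hypothesis, for any $j \ne j'$, any vertex $p$ of $\pi^{(j)}$ and any vertex $q$ of $\sigma^{(j')}$ satisfy $\|p - (q + \tau)\| \ge \|p - q\| - \|\tau\| > \delta$. Thus any valid traversal $T$ of $\pi, \sigma+\tau$ for threshold $\delta$ never pairs a vertex of $\pi^{(j)}$ with a (translated) vertex of $\sigma^{(j')}$ for $j \ne j'$. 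I would then partition $T$ according to the gadget index $j$ of its positions, obtaining (by monotonicity of $T$) contiguous sub-traversals $T^{(j)}$. Since $T$ begins at $(1,1)$ (in gadget~$1$) and ends at the final position (in gadget~$D$), each $T^{(j)}$ is a valid traversal of $\pi^{(j)}, \sigma^{(j)}+\tau$ for threshold $\delta$, yielding the claimed inequality for every $j$.

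The main obstacle I anticipate is showing that the transition from gadget~$j$ to gadget~$j+1$ occurs precisely at the corner position pairing the last vertex of $\pi^{(j)}$ with the last vertex of $\sigma^{(j)}$, followed by a $(1,1)$-step into the first vertex pair of gadget~$j+1$. One must rule out, e.g., the case where $T$ reaches the last vertex of $\pi^{(j)}$ paired with some non-last vertex of $\sigma^{(j)}$ and then performs a $(1,0)$-step; such a step would pair a vertex of $\pi^{(j+1)}$ with a non-last vertex of $\sigma^{(j)}$, contradicting the no-mismatch property just established. A symmetric argument handles $(0,1)$-steps. This case analysis is routine once the no-mismatch property is in place, but it is the step whose details warrant careful writing.
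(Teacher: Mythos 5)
Your proof is correct and follows essentially the same approach as the paper: use the bound $\|\tau\| < 2N^2\epsilon$ to transfer the separation hypothesis to the translated curves, then argue that any valid traversal must proceed gadget-by-gadget and can be decomposed into valid traversals of each pair $\pi^{(j)}, \sigma^{(j)}+\tau$ (the paper phrases the ``only if'' direction as a contrapositive with a dichotomy on whether the traversal ever pairs mismatched gadgets, but this is the same argument). Your handling of the seam transition is slightly more explicit than the paper's, which is a fine improvement in rigor but not a different route.
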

\begin{proof}
First, note that whatever $\tau$ we choose in the given range, $\sigma^{(j)} + \tau$ is still in distance greater than $\delta$ from every $\pi^{(j')}$ with $j' \neq j$.

Now, assume that for all $j \in [D]$ the curves $\pi^{(j)}, \sigma^{(j)} + \tau$ have distance at most $\delta$. Then we can traverse the gadgets in order and do simultaneous jumps between them. Thus, also the distance of the whole curves $\pi$ and $\sigma + \tau$ is at most $\delta$. For the other direction, assume that for at least one $j \in [D]$ the distance between $\pi^{(j)}$ and $\sigma^{(j)} + \tau$ is greater than $\delta$. On the one hand, if we do not traverse simultaneously (\ie, at one point the traversal is in $\pi^{(j)}$ and $\sigma^{(j')}$ for $j \neq j'$), then due to large distances of $\pi^{(j)}, \sigma^{(j')} + \tau$ for $j \neq j'$, we have distance greater than $\delta$ for this traversal. On the other hand, if we traverse $\pi^{(j)}$ and $\sigma^{(j)}$ together for all $j$, we also have distance greater than $\delta$ due to the gadget with distance greater than $\delta$.
\end{proof}

\noindent
For the remaining gadgets we define for convenience:
\[
	\eta \coloneqq 3 \cdot N^2 \epsilon.
\]

\paragraph{Equality Gadget.}
An equality gadget $F(v_1,v_2)$ for the vectors $v_1 \in V_1, v_2 \in V_2$ is a pair of two line segments, $\pione$ and $\sigmaone$, see Figure \ref{fig:vector_gadget}:
\begin{equation*}
	\begin{split}
		\pione &\coloneqq \langle (1 + \epsilon \cdot \ind(v_1), -1-\eta), (-1 + \epsilon \cdot \ind(v_1), 1+\eta) \rangle,
	\\
	\sigmaone &\coloneqq \langle (- 1 - \epsilon \cdot \ind(v_2) \cdot N, -1-\eta), (1 - \epsilon \cdot \ind(v_2) \cdot N, 1+\eta) \rangle.
	\end{split}
\end{equation*}
Note that this gives us $N^2$ different gadgets consisting of $2N$ different line segments. We later use the line segments $\pione$ in $\pi$ and the line segments $\sigmaone$ in $\sigma$ where they can be combined to form an equality gadget.

\begin{figure}
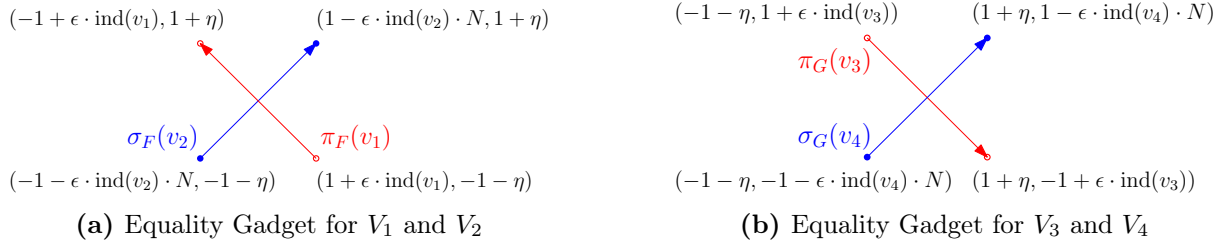

	\begin{subfigure}[b]{0.48\textwidth}
		\includegraphics[width=\textwidth]{figures/vector_gadget.pdf}
		\caption{Equality Gadget for $V_1$ and $V_2$}
		\label{fig:vector_gadget}
	\end{subfigure}
	\hfill
	\begin{subfigure}[b]{0.48\textwidth}
		\includegraphics[width=\textwidth]{figures/vector_gadget_2.pdf}
		\caption{Equality Gadget for $V_3$ and $V_4$}
		\label{fig:vector_gadget_2}
	\end{subfigure}
	\caption{The equality gadgets for $F$ and $G$. The equality gadgets $F'$ and $G'$ are simply shifted.}
	\label{fig:vector_gadgets}
\end{figure}

\begin{lem} \label{lem:equality_gadget}
	Given curves $\pione, \sigmaone$ for some $v_1 \in V_1$ and $v_2 \in V_2$, and given a translation $\tau \in \mathcal{T}$, the following properties hold:
\begin{enumerate}[label=(\roman*)]
	\item if $\tau_1 = \epsilon \cdot (\ind(v_1) + \ind(v_2) \cdot N)$, then $\delta_F(\pione, \sigmaone + \tau) \leq \delta$
	\item if $\delta_F(\pione, \sigmaone + \tau) \leq \delta$, then $\abs{\epsilon \cdot (\ind(v_1) + \ind(v_2) \cdot N) - \tau_1} \leq \frac{1}{3}\epsilon$
\end{enumerate}
	
\end{lem}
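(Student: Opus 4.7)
The plan is to first reduce the computation of $\delta_F(\pione, \sigmaone+\tau)$ to a single traversal. Since both curves have length~$2$, only three traversals exist: the diagonal $(1,1),(2,2)$, and the two L-shaped ones through $(1,2)$ or $(2,1)$. The two L-shaped traversals each visit a position pairing a low-$y$ vertex with a high-$y$ vertex of the other curve, so I will compute $\|\pi_1-(\sigma_2+\tau)\|$ and $\|\pi_2-(\sigma_1+\tau)\|$ and show their $y$-components have absolute value at least $2+2\eta-|\tau_2|$. Using $\eta = 3N^2\epsilon$ and $\tau_2 \in [-\tfrac{\epsilon}{4},(N^2-\tfrac34)\epsilon]$, this quantity is at least $2 + \tfrac{\epsilon}{4} +\Omega(N^2\epsilon) > \delta$, so neither L-shaped traversal is valid. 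Consequently, $\delta_F(\pione,\sigmaone+\tau)\le\delta$ is equivalent to the two diagonal distances $\|\pi_i-(\sigma_i+\tau)\|$, $i\in\{1,2\}$, both being at most~$\delta$.

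Next, I will carry out the diagonal distance computation. Setting $d := \epsilon(\ind(v_1)+\ind(v_2)\cdot N)-\tau_1$, a direct substitution shows
\[
\|\pi_1-(\sigma_1+\tau)\|^2 = (2+d)^2 + \tau_2^2,\qquad \|\pi_2-(\sigma_2+\tau)\|^2 = (2-d)^2 + \tau_2^2.
\]
For property~(i), the hypothesis $\tau_1 = \epsilon(\ind(v_1)+\ind(v_2)\cdot N)$ gives $d=0$, so both distances equal $\sqrt{4+\tau_2^2}$. Using $|\tau_2|\le N^2\epsilon$ and $\epsilon\le 1/N^4$, I get $\tau_2^2 \le N^4\epsilon^2 \le \epsilon \le \epsilon + \tfrac{\epsilon^2}{16} = \delta^2-4$, hence both distances are at most $\delta = 2+\tfrac{\epsilon}{4}$.

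For property~(ii), the assumption $\delta_F\le\delta$ together with the first paragraph gives $(2+d)^2+\tau_2^2\le\delta^2$ and $(2-d)^2+\tau_2^2\le\delta^2$, hence $(2+d)^2\le\delta^2$ and $(2-d)^2\le\delta^2$. Since $\tau_1 \in [-\tfrac{\epsilon}{4},(N^2-\tfrac34)\epsilon]$ and $\epsilon(\ind(v_1)+\ind(v_2)\cdot N)\in[0,(N^2-1)\epsilon]$, we have $|d|\le N^2\epsilon\le 1/N^2 \ll 1$, so $2\pm d > 0$. Taking square roots then yields $2+d\le 2+\tfrac{\epsilon}{4}$ and $2-d\le 2+\tfrac{\epsilon}{4}$, i.e.\ $|d|\le\tfrac{\epsilon}{4}<\tfrac{\epsilon}{3}$, which is precisely the desired bound. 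The only mildly delicate step is ensuring that $\eta$ is simultaneously large enough (to separate L-shaped traversals from being valid) and small enough (so that the range $\mathcal{T}$ of $\tau_2$ does not already violate the diagonal bound); both are guaranteed by the choice $\eta = 3N^2\epsilon$ with $\epsilon \le 1/N^4$.
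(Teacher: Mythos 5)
Your proof is correct and follows essentially the same approach as the paper's: rule out the non-diagonal traversals by the $y$-separation of $2+2\eta$ (the paper just says "far from ... due to $\eta \ge N^2\epsilon$" where you compute this explicitly), then analyze the two diagonal distances $(2\pm d)^2+\tau_2^2$. Your square-root argument in~(ii) even gives the slightly sharper bound $|d|\le \epsilon/4$, whereas the paper expands $(2+\Delta)^2$ and drops the $\Delta^2$ term to obtain the weaker but sufficient bound $\epsilon/3$.
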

\begin{proof}
	To prove (i), it suffices to give a valid traversal. We traverse $\pione = (p_1, p_2)$ and $\sigmaone = (q_1, q_2)$ simultaneously. Thus, we just want an upper bound on the distance between the (translated) first points $p_1,q_1+\tau$ and the distance between the (translated) second points $p_2,q_2+\tau$ to get an upper bound on $\delta_F(\pione, \sigmaone + \tau)$. These distances are
\begin{align*}
	\begin{alignedat}{3}
		\norm{p_1 - (q_1 + \tau)}^2 &= (2 + \epsilon \cdot \ind(v_1) + \epsilon \cdot \ind(v_2) \cdot N - \tau_1)^2 + \tau_2^2 &&= 4 + \tau_2^2 &&\leq 4 + \epsilon + \frac{1}{16}\epsilon^2 = \delta^2,\\
		\norm{p_2 - (q_2 + \tau)}^2 &= (-2 + \epsilon \cdot \ind(v_1) + \epsilon \cdot \ind(v_2) \cdot N - \tau_1)^2 + \tau_2^2 &&= 4 + \tau_2^2 &&\leq \delta^2,
	\end{alignedat}
\end{align*}
where we used $\abs{\tau_2} \leq N^2 \epsilon$ and thus $\tau_2^2 \leq N^4 \epsilon^2 \leq \epsilon$ since $\epsilon \leq N^{-4}$.
Both distances are at most $\delta$ and thus the discrete Fréchet distance is at most $\delta$ as well.

For proving (ii), first note that the first (respectively second) point of $\pione$ is far from the second (respectively first) point of $\sigmaone$, due to $\eta \geq N^2 \epsilon$. Thus, we have to traverse the gadget simultaneously. Let $\Delta \coloneqq \epsilon \cdot \ind(v_1) + \epsilon \cdot \ind(v_2) \cdot N - \tau_1$, it remains to show that $\Delta \leq \frac{1}{3} \epsilon$. For $p_1, q_1$ we then get
\[
\def\arraystretch{1.3}
\begin{array}{rrcl}
	\norm{p_1 - (q_1 + \tau)}^2  = & (2 + \epsilon \cdot \ind(v_1) + \epsilon \cdot \ind(v_2) \cdot N - \tau_1)^2 + \tau_2^2 & \leq & (2 + \frac{1}{4} \epsilon)^2 \\
		\Leftrightarrow & (2 + \Delta)^2 + \tau_2^2 & \leq & 4 + \epsilon + \frac{1}{16} \epsilon^2 \\
		\Leftrightarrow & 4 + 4\Delta + \Delta^2 + \tau_2^2 & \leq & 4 + \epsilon + \frac{1}{16} \epsilon^2 \\
		\Rightarrow & 4\Delta & \leq & \epsilon + \frac{1}{16} \epsilon^2\\
		\Rightarrow & \Delta & \leq & \frac{1}{4}\epsilon + \frac{1}{64}\epsilon^2 \leq \frac{1}{3} \epsilon.
\end{array}
\]
The last inequality follows from plugging in $\epsilon = 0.001/N^4$ and using the fact that $N \geq 1$.
With a similar calculation for $p_2, q_2$ we obtain that $\Delta \geq -\frac{1}{3}\epsilon$, and thus $\abs{\Delta} \leq \frac{1}{3}\epsilon$.
\end{proof}

Now we introduce three gadgets which have the same properties as the equality gadget but are slightly different. The aim is to have four types of gadgets which are pairwise further than a discrete Fréchet distance of $\delta$ apart such that we can use them together in one big OR expression.

\paragraph{Shifted Equality Gadget.}
As described in the introduction of this section, we want to use the curves $\pione, \sigmaone$ in case $v_1[j] = 0$ and we need an additional gadget for $v_2[j] = 0$. However, those two gadgets should not be too close such that the curves cannot be matched but also not too far such that the OR gadget (which we introduce later) still works. Thus, we introduce another gadget $F'(v_1,v_2)$ which consists of a pair of curves $\pitwo, \sigmatwo$ that are just shifted versions of $\pione, \sigmaone$; shifted by $N^2 \epsilon$ in the first dimension. More formally,
\begin{equation*}
\begin{split}
	\pitwo &\coloneqq \pione + (N^2 \epsilon, 0), \\
	\sigmatwo &\coloneqq \sigmaone + (N^2 \epsilon, 0).
\end{split}
\end{equation*}
Before proving the desired properties, we introduce the remaining two variants of the equality gadget.

\paragraph{Equality Gadget for $V_3$ and $V_4$.}
The above introduced equality gadgets only work for vectors in $V_1$ and $V_2$ but we also need a gadget for vectors in $V_3$ and $V_4$. Therefore, we introduce the gadget $G(v_3, v_4)$, which is a mirrored equality gadget consisting of the curves $\pithree$ and $\sigmathree$, see Figure \ref{fig:vector_gadget_2}:
\begin{equation*}
	\begin{split}
		\pithree &\coloneqq \langle (-1-\eta, 1 + \epsilon \cdot \ind(v_3)), (1+\eta, -1 + \epsilon \cdot \ind(v_3)) \rangle,
	\\
	\sigmathree &\coloneqq \langle (-1-\eta, - 1 - \epsilon \cdot \ind(v_4) \cdot N), (1+\eta, 1 - \epsilon \cdot \ind(v_4) \cdot N) \rangle.
	\end{split}
\end{equation*}

\paragraph{Shifted Equality Gadget for $V_3$ and $V_4$.} We define $G'(v_3,v_4)$ similarly to $F'(v_1,v_2)$, \ie, we shift the curves of $G$ by $N^2 \epsilon$, but in contrast to $F'$ we shift it in the second dimension. More formally:
\begin{equation*}
\begin{split}
	\pifour &\coloneqq \pithree + (0, N^2 \epsilon), \\
	\sigmafour &\coloneqq \sigmathree + (0, N^2 \epsilon).
\end{split}
\end{equation*}

\noindent
Due to the similar structure of the curve pairs of $F(v_1,v_2)$ and $F'(v_1,v_2), G(v_3,v_4), G'(v_3,v_4)$, analogous statements to Lemma \ref{lem:equality_gadget} also hold for the curve pairs from $F'(v_1,v_2)$, $G(v_3,v_4)$, and $G'(v_3,v_4)$. Specifically, we have:

\begin{lem} \label{lem:equality_gadget2}
	Given curves $\pitwo, \sigmatwo$ for some $v_1 \in V_1$ and $v_2 \in V_2$, and given a translation $\tau \in \mathcal{T}$, the following properties hold:
\begin{enumerate}[label=(\roman*)]
	\item if $\tau_1 = \epsilon \cdot (\ind(v_1) + \ind(v_2) \cdot N)$, then $\delta_F(\pitwo, \sigmatwo + \tau) \leq \delta$
	\item if $\delta_F(\pitwo, \sigmatwo + \tau) \leq \delta$, then $\abs{\epsilon \cdot (\ind(v_1) + \ind(v_2) \cdot N) - \tau_1} \leq \frac{1}{3}\epsilon$
\end{enumerate}
\end{lem}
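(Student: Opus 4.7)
The plan is to derive this statement directly from \lemref{equality_gadget} by exploiting that $\pitwo$ and $\sigmatwo$ are obtained from $\pione$ and $\sigmaone$ by applying the \emph{same} translation $(N^2\epsilon, 0)$ to both curves. Since the discrete Fréchet distance is invariant under simultaneous translation of both input curves, I will first establish the identity
\[ \delta_F(\pitwo, \sigmatwo + \tau) = \delta_F\big(\pione + (N^2\epsilon,0),\; \sigmaone + (N^2\epsilon,0) + \tau\big) = \delta_F(\pione, \sigmaone + \tau), \]
where the last equality uses that adding $(N^2\epsilon,0)$ to every vertex of both curves leaves the pairwise distances $\|p - (q+\tau)\|$ for matched points $p,q$ unchanged, and hence the value of the Fréchet distance unchanged.

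Given this identity, both claims (i) and (ii) follow immediately from the corresponding parts of \lemref{equality_gadget} applied to $\pione, \sigmaone$ with the \emph{same} translation $\tau \in \mathcal{T}$. Specifically, for (i), the hypothesis $\tau_1 = \epsilon \cdot (\ind(v_1) + \ind(v_2) \cdot N)$ yields $\delta_F(\pione, \sigmaone + \tau) \le \delta$ by \lemref{equality_gadget}(i), and hence $\delta_F(\pitwo, \sigmatwo + \tau) \le \delta$. For (ii), if $\delta_F(\pitwo, \sigmatwo + \tau) \le \delta$, then $\delta_F(\pione, \sigmaone + \tau) \le \delta$, and \lemref{equality_gadget}(ii) gives the desired bound $|\epsilon \cdot (\ind(v_1) + \ind(v_2) \cdot N) - \tau_1| \le \tfrac{1}{3}\epsilon$. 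No step here poses a real obstacle; the main subtlety is merely to verify that the translation-invariance identity preserves the condition $\tau \in \mathcal{T}$, which it does trivially since $\tau$ is unchanged.
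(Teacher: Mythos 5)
Your proof is correct. The paper does not actually spell out an argument for this lemma---it merely remarks that analogous statements ``hold due to the similar structure'' of $F'$ to $F$---so your proof makes precise what the paper leaves implicit, and the translation-invariance of $\dF$ under simultaneous translation of both curves is indeed the cleanest way to do so. One small remark: this invariance argument is specific to $F'$, which is a genuine translate of $F$; for the analogous Lemmas~\ref{lem:equality_gadget3} and~\ref{lem:equality_gadget4} (gadgets $G$ and $G'$), one would instead invoke invariance of $\dF$ under a coordinate reflection (swapping the two axes) together with the corresponding swap of $\tau_1$ and $\tau_2$, since $G$ is obtained from $F$ by reflecting across the diagonal rather than translating.
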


\begin{lem} \label{lem:equality_gadget3}
	Given curves $\pithree, \sigmathree$ for some $v_3 \in V_3$ and $v_4 \in V_4$, and given a translation $\tau \in \mathcal{T}$, the following properties hold:
\begin{enumerate}[label=(\roman*)]
	\item if $\tau_2 = \epsilon \cdot (\ind(v_3) + \ind(v_4) \cdot N)$, then $\delta_F(\pithree, \sigmathree + \tau) \leq \delta$
	\item if $\delta_F(\pithree, \sigmathree + \tau) \leq \delta$, then $\abs{\epsilon \cdot (\ind(v_3) + \ind(v_4) \cdot N) - \tau_2} \leq \frac{1}{3}\epsilon$
\end{enumerate}
\end{lem}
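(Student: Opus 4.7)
My plan is to follow the exact same two-step strategy used in the proof of \lemref{equality_gadget}, exploiting the fact that $\pithree, \sigmathree$ are obtained from $\pione, \sigmaone$ by a reflection that swaps the roles of the two coordinates. In particular, the $\eta$-offsets now live in the first coordinate (separating the two vertices of each curve horizontally by roughly $2+2\eta$), while the information about $\ind(v_3), \ind(v_4)$ and the relevant translation component $\tau_2$ lives entirely in the second coordinate. So the whole argument transports verbatim with $(x,y) \leftrightarrow (y,x)$ and $\tau_1 \leftrightarrow \tau_2$.

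For part (i), I would exhibit the simultaneous traversal $(p_1,q_1),(p_2,q_2)$ of $\pithree$ and $\sigmathree + \tau$. Under the hypothesis $\tau_2 = \epsilon \cdot (\ind(v_3)+\ind(v_4)\cdot N)$, the second-coordinate differences of $p_1 - (q_1+\tau)$ and $p_2-(q_2+\tau)$ equal exactly $2$ and $-2$ respectively, while the first-coordinate difference is just $\tau_1$. Since $\tau \in \mathcal{T}$ gives $|\tau_1| \le N^2 \epsilon$, and $\epsilon \le N^{-4}$ yields $\tau_1^2 \le \epsilon$, I obtain $\|p_i - (q_i+\tau)\|^2 \le 4 + \epsilon + \tfrac{1}{16}\epsilon^2 = \delta^2$, so both pointwise distances are at most $\delta$, which bounds the discrete \Fr distance.

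For part (ii), I first argue that the only feasible traversal is the simultaneous one: the first vertex of $\pithree$ and the second of $\sigmathree + \tau$ differ in the first coordinate by at least $2 + 2\eta - |\tau_1| > 2\eta - N^2 \epsilon \ge 5N^2 \epsilon$, which together with the large second-coordinate offset between them makes their distance far exceed $\delta$ (and symmetrically for the other diagonal pair). Hence both vertex pairs of the simultaneous traversal must satisfy the distance-$\delta$ bound. Setting $\Delta \coloneqq \epsilon \cdot \ind(v_3) + \epsilon \cdot \ind(v_4) \cdot N - \tau_2$, the first pair gives $(2+\Delta)^2 + \tau_1^2 \le (2+\tfrac14\epsilon)^2$, hence $4\Delta \le \epsilon + \tfrac{1}{16}\epsilon^2$ and thus $\Delta \le \tfrac13 \epsilon$. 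The analogous inequality from the second pair, namely $(-2+\Delta)^2 + \tau_1^2 \le (2+\tfrac14\epsilon)^2$, yields $\Delta \ge -\tfrac13 \epsilon$. Combined, $|\Delta| \le \tfrac13\epsilon$, which is the claim.

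The only part requiring actual verification (rather than citing symmetry) is that the $\eta$-separation is indeed in the correct coordinate to force a simultaneous traversal, and that the perturbation budget $\tau_1^2 \le \epsilon$ coming from $\tau \in \mathcal{T}$ is small enough to close the algebra --- both of which are built into the construction by design. I do not expect any genuine obstacle; the proof is essentially a coordinate-swap of \lemref{equality_gadget}.
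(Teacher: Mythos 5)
Your proposal is correct and matches the paper's intent exactly: the paper itself gives no explicit proof of this lemma, stating only that ``analogous statements to Lemma~\ref{lem:equality_gadget} also hold'' for $G(v_3,v_4)$, and your coordinate-swap $(x,y) \leftrightarrow (y,x)$, $\tau_1 \leftrightarrow \tau_2$ is precisely the right way to carry this out, with the algebra transporting verbatim. One phrase in part (ii) is loose, however: after correctly noting the first-coordinate gap is at least $2 + 2\eta - |\tau_1|$, you discard the additive $2$ to arrive at the bound $5N^2\epsilon$, and then appeal to a ``large second-coordinate offset'' to close the argument. The second-coordinate offset between $p_1$ and $q_2 + \tau$ is $\epsilon\cdot\ind(v_3) + \epsilon\cdot\ind(v_4)\cdot N - \tau_2$, which is \emph{small} (of order $N^2\epsilon$), not large; fortunately the first coordinate alone already suffices, since $2 + 2\eta - |\tau_1| \ge 2 + 5N^2\epsilon > 2 + \tfrac14\epsilon = \delta$. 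Tighten that chain of inequalities (keeping the $2$) and drop the claim about the second coordinate, and the proof is airtight.
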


\begin{lem} \label{lem:equality_gadget4}
	Given curves $\pifour, \sigmafour$ for some $v_3 \in V_3$ and $v_4 \in V_4$, and given a translation $\tau \in \mathcal{T}$, the following properties hold:
\begin{enumerate}[label=(\roman*)]
	\item if $\tau_2 = \epsilon \cdot (\ind(v_3) + \ind(v_4) \cdot N)$, then $\delta_F(\pifour, \sigmafour + \tau) \leq \delta$
	\item if $\delta_F(\pifour, \sigmafour + \tau) \leq \delta$, then $\abs{\epsilon \cdot (\ind(v_3) + \ind(v_4) \cdot N) - \tau_2} \leq \frac{1}{3}\epsilon$
\end{enumerate}
\end{lem}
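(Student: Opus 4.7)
The plan is to derive this lemma as an immediate corollary of \lemref{equality_gadget3} by exploiting the fact that simultaneously translating both curves by the same vector leaves all pairwise distances, and hence the discrete \Fr distance, unchanged. Concretely, by definition we have $\pifour = \pithree + c$ and $\sigmafour = \sigmathree + c$ where $c = (0, N^2 \epsilon)$. For any translation $\tau \in \mathcal{T}$ and any indices $i,j$,
\[
\norm{\pifour_i - (\sigmafour_j + \tau)} = \norm{(\pithree_i + c) - (\sigmathree_j + c + \tau)} = \norm{\pithree_i - (\sigmathree_j + \tau)}.
\]
Since the pointwise distances between the two pairs of curves agree position by position, every traversal realises the same maximum distance under either pair, and therefore
\[
\dF(\pifour, \sigmafour + \tau) \;=\; \dF(\pithree, \sigmathree + \tau).
\]

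Given this identity, both claims follow directly from \lemref{equality_gadget3}: for part (i), if $\tau_2 = \epsilon \cdot (\ind(v_3) + \ind(v_4)\cdot N)$, then \lemref{equality_gadget3}(i) gives $\dF(\pithree,\sigmathree + \tau) \le \delta$, and by the identity the same bound holds for $\dF(\pifour, \sigmafour + \tau)$. For part (ii), if $\dF(\pifour, \sigmafour + \tau) \le \delta$, then by the identity $\dF(\pithree, \sigmathree + \tau) \le \delta$, and \lemref{equality_gadget3}(ii) yields the desired inequality $|\epsilon \cdot (\ind(v_3) + \ind(v_4)\cdot N) - \tau_2| \le \tfrac{1}{3}\epsilon$.

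There is no substantive obstacle here: the only point worth verifying is the translation-invariance identity, which is a one-line computation. If one preferred an independent argument instead, the proof would copy verbatim the structure of the proof of \lemref{equality_gadget}, with roles of the first and second coordinates swapped (as already done for \lemref{equality_gadget3}) and with the center of each curve shifted by $N^2 \epsilon$ in the second coordinate; the same bounds $\tau_1^2 \le N^4 \epsilon^2 \le \epsilon$ (now used to absorb the horizontal slack) and $\delta = 2 + \tfrac{1}{4}\epsilon$ yield the same $\tfrac{1}{3}\epsilon$ tolerance. However, the reduction to \lemref{equality_gadget3} is strictly shorter and avoids duplicating calculations.
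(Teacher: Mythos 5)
Your proof is correct. The paper itself gives no explicit proof of \lemref{equality_gadget4}; it merely asserts that ``analogous statements'' to \lemref{equality_gadget} hold for $F'$, $G$, and $G'$, which implicitly suggests repeating the calculation of \lemref{equality_gadget} with the obvious modifications (coordinates swapped, everything shifted by $N^2\epsilon$). You instead observe that $\pifour$ and $\sigmafour$ are obtained from $\pithree$ and $\sigmathree$ by the \emph{same} translation $c=(0,N^2\epsilon)$, so all pairwise distances $\norm{\pifour_i-(\sigmafour_j+\tau)}$ agree with $\norm{\pithree_i-(\sigmathree_j+\tau)}$ term by term, and hence $\dF(\pifour,\sigmafour+\tau)=\dF(\pithree,\sigmathree+\tau)$. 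Both parts of the lemma then fall out of \lemref{equality_gadget3} with no new computation. This is a genuinely different and cleaner route: it exploits the invariance of the discrete Fréchet distance under simultaneous translation of both curves, which makes the relationship between $G$ and $G'$ (and likewise between $F$ and $F'$, giving \lemref{equality_gadget2} from \lemref{equality_gadget}) structurally transparent rather than buried in a repeated estimate. The one limitation is that this trick does not similarly reduce \lemref{equality_gadget3} itself to \lemref{equality_gadget}, since $G$ is a reflection rather than a translation of $F$; that step still requires its own (symmetric) calculation. But for the statement you were asked to prove, your argument is complete, correct, and shorter than what the paper implies.
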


\noindent
We now show that all subcurves of different equality gadgets are pairwise further apart than $\delta$. Here we say that the curves $\pione$ and $\sigmaone$ have \emph{type $F$}. Similarly, the other curves constructed above have type $F'$, $G$, or $G'$.

\begin{lem} \label{lem:equality_gadgets_far}
  For any vectors $v_1 \in V_1, \dots, v_4 \in V_4$ and any translation $\tau \in \mathcal{T}$, for any curves $\pi \in \{ \pione,\pitwo, \pithree, \pifour \}$ and $\sigma \in \{ \sigmaone, \sigmatwo, \sigmathree, \sigmafour \}$ of different type, we have $\delta_F(\pi, \sigma+\tau) > \delta$.
\end{lem}
\begin{proof}
	We first consider $\pitwo$ and $\sigmaone$. Consider the first point of $\sigmaone$ which we call~$q$. This point is further than $2 + N^2 \epsilon$ from both points of $\pitwo$. When translating $\sigma$ with $\tau \in \mathcal{T}$, the distance is still greater than $2 + \tfrac{3}{4} \epsilon > \delta$. Thus, $\sigmaone$ and $\pitwo$ are in discrete Fréchet distance greater than $\delta$ for any valid $\tau$.
	
	Similarly, consider $\pione$ and $\sigmatwo$, and let $p$ be the second point of $\pione$. The point $p$ has distance greater than $2 + \epsilon$ from $\sigmatwo$. With translation $\tau \in \mathcal{T}$ this distance is still greater than $2 + \tfrac{3}{4}\epsilon > \delta$ and thus $\pione$ and $\sigmatwo$ are in discrete Fréchet distance greater than $\delta$ for any valid~$\tau$.
The proof for types $G$ and $G'$ is symmetric.

Now we prove the lemma for types $F$ and $G$. First note that every point of $\pione$ is in distance $1+\eta$ of the first coordinate axis and every point of $\sigmathree$ is in distance $1+\eta$ of the second coordinate axis. Additionally, no point of $\pione$ is closer than $1 - 2 N^2 \epsilon$ to the second coordinate axis while no point of $\sigmathree$ is closer than $1 - 2 N^2 \epsilon$ to the first coordinate axis. This means that every point of $\pione$ is in distance at least $2 + \eta - 2 N^2 \epsilon = 2 + N^2 \epsilon$ of any point of $\sigmathree$. Even with translation this distance is at least $2 + \tfrac{3}{4} \epsilon > \delta$. Thus, also the discrete Fréchet distance is greater than $\delta$. The proofs for the remaining cases are symmetric.
\end{proof}

\noindent
We moreover observe that our equality gadgets lie in very restricted regions.
Specifically, call a curve \emph{diagonal} if all of its vertices are in $R_1 \cup R_2$ with
\[
	R_1 \coloneqq [-1 - 2\eta, -1 + 2\eta]^2, \; R_2 \coloneqq [1 - 2\eta, 1 + 2\eta]^2,
\]
and we call it \emph{anti-diagonal} if all of its vertices are contained in $R_3 \cup R_4$ with
\[
	R_3 \coloneqq [-1 - 2\eta, -1 + 2\eta] \times [1 - 2\eta, 1 + 2\eta], \; R_4 \coloneqq [1 - 2\eta, 1 + 2\eta] \times [-1 - 2\eta, -1 + 2\eta].
\]
See Figure \ref{fig:abstract_or_gadget}. Also, note that the order in which the curves visit the regions is not specified in the definition of (anti-)diagonal.
\begin{obs} \label{obs:diagonal}
    The constructed curves $\pione, \pitwo, \pithree, \pifour$ are anti-diagonal, and the curves $\sigmaone, \sigmatwo, \sigmathree, \sigmafour$ are diagonal.
\end{obs}
\begin{proof}
We observe that each coordinate of a vertex of any of these curves differs from $1$ or $-1$ by at most $\eps N^2 + \max\{\eta, \eps N^2\}$, by bounding $0 \le \ind(v_i) \le N$. Recalling $\eta = 3 \cdot N^2 \epsilon$, we have $\eps N^2 \le \eta$. Therefore, any coordinate differs from $1$ or $-1$ by at most $2\eta$, that is, each coordinate lies in $R_1 \cup R_2 \cup R_3 \cup R_4$. The general shape of being (anti-)diagonal can be inferred from Figure~\ref{fig:vector_gadgets}.
\end{proof}
We are now ready to describe the last gadget. For proving its correctness, we will essentially only use the diagonal and anti-diagonal property of the curves.

\paragraph{OR Gadget.} We construct an OR gadget over diagonal and anti-diagonal curves which we will later apply to equality gadgets. Before introducing the gadget itself, we define various auxiliary points whose meaning will become clear later. Here we keep notation close to \cite{Bringmann14}, although the details of our construction are quite different.
\begin{equation*}
	\begin{split}
		&s_1 \coloneqq \left(-\tfrac{1}{4}, -\tfrac{1}{4}\right),\; t_1 \coloneqq \left(\tfrac{1}{4}, \tfrac{1}{4}\right),\; r_1 \coloneqq \left(\tfrac{99}{100}, -\tfrac{5}{4}\right),\; r_1' \coloneqq \left(-\tfrac{99}{100}, \tfrac{5}{4}\right),\\
		&s_2 \coloneqq (0,0),\; s_2^* \coloneqq \left(-\tfrac{3}{2}, -\tfrac{3}{2}\right),\; t_2^* \coloneqq \left(\tfrac{3}{2}, \tfrac{3}{2}\right),\; t_2 \coloneqq (0,0),\; r_2 \coloneqq \left(-\tfrac{99}{100}, -\tfrac{5}{4}\right),\; r_2' \coloneqq \left(\tfrac{99}{100}, \tfrac{5}{4}\right).
	\end{split}
\end{equation*}

\begin{figure}
	\centering
	\includegraphics[width=.8\textwidth]{figures/or_gadget.pdf}
	\caption{The OR gadget for general diagonal and anti-diagonal curves.}
	\label{fig:abstract_or_gadget}
\end{figure}

\noindent
Now, given diagonal curves $\hat{\sigma}^1, \dots, \hat{\sigma}^\ell$ and anti-diagonal curves $\hat{\pi}^1, \dots, \hat{\pi}^k$, we define the two curves of the OR gadget as
\begin{equation*}
\begin{split}
	\piOR &\coloneqq \mathop{\bigcirc}\limits_{i \in [k]} s_1 \circ r_1 \circ \hat{\pi}^i \circ r_1' \circ t_1,\\
	\sigmaOR &\coloneqq s_2 \circ s_2^* \circ (\mathop{\bigcirc}_{j \in [\ell]} r_2 \circ \hat{\sigma}^j \circ r_2') \circ t_2^* \circ t_2.
\end{split}
\end{equation*}
See Figure \ref{fig:abstract_or_gadget} for a visualization. Now let us prove correctness of the gadget.
\begin{lem} \label{lem:abstract_or_gadget}
	Given an OR gadget over diagonal curves $\hat{\sigma}^1, \dots, \hat{\sigma}^\ell$ and anti-diagonal curves $\hat{\pi}^1, \dots, \hat{\pi}^k$, for any translation $\tau \in \mathcal{T}$ we have $\delta_F(\piOR, \sigmaOR + \tau) \leq \delta$ if and only if $\delta_F(\hat{\pi}^i, \hat{\sigma}^j + \tau) \leq \delta$ for some $i,j$.
\end{lem}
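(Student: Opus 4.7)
The plan is to prove both directions via a ``distance table'' that captures, for any $\tau \in \mathcal{T}$ and under $\delta = 2 + \tfrac{1}{4}\epsilon$, which pairs among the junction points $\{s_1, r_1, r_1', t_1\} \subseteq \piOR$ and $\{s_2, s_2^*, r_2, r_2', t_2^*, t_2\} \subseteq \sigmaOR$, together with the diagonal body region (where the $\hat\sigma^j$ live) and the anti-diagonal body region (where the $\hat\pi^i$ live), are within distance $\delta$. I will first verify by direct coordinate calculation, using $\|\tau\|_\infty \le N^2\epsilon$, $\eta = 3N^2\epsilon$, and $\epsilon \le 10^{-3}N^{-4}$, that: $s_2$ and $t_2$ are within $\delta$ of every $\piOR$ point; $s_1$ is within $\delta$ of every $\sigmaOR$ point except $t_2^*$ (symmetrically $t_1$ matches all but $s_2^*$); $r_1$ matches exactly $\{s_2, r_2, t_2\}$ and $r_1'$ exactly $\{s_2, r_2', t_2\}$; neither $r_1$ nor $r_1'$ matches any diagonal body point, nor do $r_2, r_2'$ match any anti-diagonal body point; and $s_2^*, t_2^*$ match no body point.

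For the ``$\Leftarrow$'' direction, given $i^*, j^*$ with $\delta_F(\hat\pi^{i^*}, \hat\sigma^{j^*}+\tau) \le \delta$, I will stitch together a valid traversal in seven phases: (i) advance $\piOR$ from $s_1^{(1)}$ to $s_1^{(i^*)}$ while $\sigmaOR$ stays at $s_2$; (ii) keeping $\piOR$ at $s_1^{(i^*)}$, advance $\sigmaOR$ through $s_2^*$ and the first $j^*-1$ $\hat\sigma$-blocks up to $r_2^{(j^*)}$; (iii) jointly step to $(r_1^{(i^*)}, r_2^{(j^*)})$ using $r_1 \leftrightarrow r_2$; (iv) use the assumed valid traversal of $(\hat\pi^{i^*}, \hat\sigma^{j^*}+\tau)$ to traverse both bodies jointly, finishing with a joint step into $(r_1^{\prime(i^*)}, r_2^{\prime(j^*)})$; (v) advance $\piOR$ to $t_1^{(i^*)}$ using $t_1 \leftrightarrow r_2'$; (vi) keeping $\piOR$ at $t_1^{(i^*)}$, advance $\sigmaOR$ through the remaining $\hat\sigma$-blocks, $t_2^*$, and $t_2$; (vii) advance $\piOR$ through the remaining $\hat\pi$-blocks to $t_1^{(k)}$ while $\sigmaOR$ stays at $t_2$. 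Each step is immediately feasible from the distance table.

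For the ``$\Rightarrow$'' direction, assume a valid traversal $T$ exists. Since $s_2^*$ is feasible only against $s_1$-points, while $\sigmaOR$ is at $s_2^*$ the curve $\piOR$ must sit at a single $s_1^{(i_1)}$ (it cannot advance past this point since all intermediate points $r_1^{(i_1)}, \hat\pi^{i_1}, r_1^{\prime(i_1)}, t_1^{(i_1)}$ are infeasible against $s_2^*$); symmetrically $t_2^*$ pins $\piOR$ to some $t_1^{(i_2)}$ with $i_1 \le i_2$. Between these events $T$ must pass through $r_1^{(i_1)}$, at which moment $\sigmaOR$ must be at some $r_2^{(j_1)}$ (the only non-endpoint $\sigmaOR$-point matching $r_1$). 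From the joint state $(r_1^{(i_1)}, r_2^{(j_1)})$, neither advancing $\piOR$ alone (landing at $(\hat\pi^{i_1}_1, r_2^{(j_1)})$, infeasible) nor advancing $\sigmaOR$ alone (landing at $(r_1^{(i_1)}, \hat\sigma^{j_1}_1)$, infeasible) is allowed, so $T$ must take a joint step to $(\hat\pi^{i_1}_1, \hat\sigma^{j_1}_1)$; a symmetric forced joint step occurs at exit, from $(\hat\pi^{i_1}_{\mathrm{end}}, \hat\sigma^{j_1}_{\mathrm{end}})$ to $(r_1^{\prime(i_1)}, r_2^{\prime(j_1)})$. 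Throughout the body traversal, $\sigmaOR$ cannot leave $\hat\sigma^{j_1}$ (since $r_2^{\prime(j_1)}$ is infeasible against every anti-diagonal point), so the restriction of $T$ between the joint entry and the joint exit is a full valid Fréchet traversal of $(\hat\pi^{i_1}, \hat\sigma^{j_1}+\tau)$, yielding $\delta_F(\hat\pi^{i_1}, \hat\sigma^{j_1}+\tau) \le \delta$.

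The main obstacle I foresee is the distance table itself: several critical pairs lie at distance $\approx 2$, with only $O(N^{-2})$ of room compared to $\delta - 2 = \epsilon/4 = O(N^{-4})$. Each ``infeasible'' inequality must survive both the $\eta$-slack of the body region and the translation $\tau$; for example, showing that $r_1$ fails to match any diagonal body point requires the $\approx 0.005$ baseline gap in $\|r_1 - (-1,-1)\| \approx 2.0057$ to not be closed by the body slack $\eta$ and translation $\|\tau\|_\infty$, which is precisely why the specific choice $\epsilon \le 10^{-3}N^{-4}$ (and hence $\eta, \|\tau\|_\infty \le O(N^{-2})$) suffices.
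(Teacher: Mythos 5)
Your proof is correct and follows essentially the same two-part strategy as the paper: for the "if" direction you stitch together the same traversal (the paper's nine-step description is your seven-phase plan, with minor regrouping), and for the "only if" direction you reconstruct the forced traversal by pinning $\piOR$ at an $s_1$-copy while $\sigmaOR$ is at $s_2^*$, pinning $\sigmaOR$ at an $r_2$-copy when $\piOR$ reaches $r_1$, forcing the joint entry/exit steps, and concluding with a valid sub-traversal of some $(\hat\pi^i,\hat\sigma^j)$. Your "distance table" is just an explicit tabulation of the pairwise distance checks that the paper compresses into "one can check"; it correctly identifies the tightest infeasible pair ($r_1$ vs.\ the diagonal body near $(-1,-1)$ at baseline $\approx\sqrt{4.0226}\approx 2.0056$), and your observation that this gap must survive the $\eta$- and $\tau$-slack (both $O(N^{-2})$ under $\epsilon\le 10^{-3}N^{-4}$) is the right reason the construction works for $N$ sufficiently large.
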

\begin{proof}
	We first observe that for none of the auxiliary points $p \in \piOR$ and $q \in \sigmaOR$ we have that $\norm{p - q} \in [1.99, 2.01]$. This can be verified by calculating all distances, but we omit this due to readability of the proof. Also observe that $\mathcal{T} \subset [-0.001, 0.001]$ and $\delta \in [2, 2.001]$.
	It follows from the above observations that the translation $\tau$ does not change whether auxiliary points are closer than $\delta$ or not. Thus, we can ignore the translation for distances between auxiliary points in this proof. For reference we state which auxiliary points are closer than $\delta$ for all $\tau \in \mathcal{T}$. For each auxiliary point in \piOR{} we list its close auxiliary points in \sigmaOR:
\begin{equation*}
	\begin{split}
		s_1:& \quad s_2, s_2^*, t_2, r_2, r_2', \\
		t_1:& \quad s_2, t_2, r_2, r_2', \\
		r_1:& \quad s_2, t_2, r_2, \\
		r_1':& \quad s_2, t_2, r_2'. \\
	\end{split}
\end{equation*}
All other pairs are in distance greater than $\delta$. Note that for the remainder of the proof, we do not have to consider the specific value for $\tau$ anymore.

We first show that if $\delta_F(\hat{\pi}^i, \hat{\sigma}^j + \tau) \leq \delta$ for some $i,j$, then $\delta_F(\piOR, \sigmaOR + \tau) \leq \delta$ by giving a valid traversal. We start in $s_1, s_2+\tau$. Then we traverse $\piOR$ until the copy of $s_1$ which comes before the subcurve $\hat{\pi}^i$. While staying in $s_1$, we traverse $\sigmaOR+\tau$ until we reach the copy of $r_2+\tau$ right before the subcurve $\hat{\sigma}^j + \tau$. We then do one step on $\piOR$ to $r_1$. Now we step to the first nodes of $\hat{\pi}^i$ and $\hat{\sigma}^j + \tau$ simultaneously, and then traverse these two subcurves in distance $\delta$, which is possible due to $\delta_F(\hat{\pi}^i, \hat{\sigma}^j + \tau) \leq \delta$. We then step to the copies of $r_1'$ and $r_2' + \tau$ simultaneously.
We then step to $t_1$ on $\piOR$, while staying at $r_2' + \tau$ in $\sigmaOR + \tau$.
Subsequently, while staying in $t_1$, we traverse $\sigmaOR + \tau$ until we reach its last point, namely $t_2 + \tau$. Now we can traverse the remainder of $\piOR$. One can check that this traversal stays within distance $\delta$.

We now show that if $\delta_F(\piOR, \sigmaOR + \tau) \leq \delta$, then there exist $i, j$ such that $\delta_F(\hat{\pi}^i, \hat{\sigma}^j + \tau) \leq \delta$. Pick any valid traversal for which $\delta_F(\piOR, \sigmaOR + \tau) \leq \delta$. We reconstruct in the following how it passed through $\piOR$ and $\sigmaOR + \tau$. Consider the point when $s_2^* + \tau$ is reached. At that point, we have to be in some copy of $s_1$ as this is the only type of node of $\piOR$ which is in distance at most $\delta$ from $s_2^* + \tau$. Let $\hat{\pi}^i$ be the subcurve right after this copy of $s_1$. When we step to the copy of $r_1$ right after this $s_1$, there are only three types of nodes from $\sigmaOR + \tau$ in distance $\delta$: $s_2 + \tau, t_2 + \tau, r_2 + \tau$. Note that we already passed $s_2 + \tau$, and we cannot have reached $t_2 + \tau$ yet, as $t_2^* + \tau$ is neither in reach of $s_1$ nor $r_1$. Thus, we are in $r_2 + \tau$. Let the curve right after $r_2 + \tau$ be $\hat{\sigma}^j + \tau$. The only option now is to do a simultaneous step to the first nodes of $\hat{\pi}^i$ and $\hat{\sigma}^j + \tau$. Now, consider the point when either $r_1'$ or $r_2' + \tau$ is first reached. All points of $\hat{\pi}^i$ are far from $r_2' + \tau$ and all points of $\hat{\sigma}^j + \tau$ are far from $r_1'$ and thus we have to be in $r_1'$ and $r_2' + \tau$ at the same time. This implies that we traversed $\hat{\pi}^i$ and $\hat{\sigma}^j + \tau$ from the start to the end nodes in distance $\delta$ and therefore $\delta_F(\hat{\pi}^i, \hat{\sigma}^j + \tau) \leq \delta$.
\end{proof}

\paragraph{Assembling $\pi^{(j)}$ and $\sigma^{(j)}$.} Now we can apply the OR gadget to the equality gadgets in the following way. For each of the $D$ dimensions we construct an OR gadget. The OR gadget for dimension $j \in [D]$ contains as anti-diagonal curves all $\pione$ with $v_1[j] = 0$, all $\pitwo$, all $\pithree$ with $v_3[j] = 0$, and all $\pifour$; and as diagonal curves it contains all $\sigmaone$, all $\sigmatwo$ with $v_2[j] = 0$, all $\sigmathree$, and all $\sigmafour$ with $v_4[j] = 0$. Note that these curves fulfill the requirements stated in Observation \ref{obs:diagonal} for usage in the OR gadget as (anti-)diagonal curves. We denote the resulting curves by $\pi^{(j)}$ and $\sigma^{(j)}$, and we write $H(j) = (\pi^{(j)}, \sigma^{(j)})$. This yields the following lemma.

\begin{lem} \label{lem:or_gadget}
	Given a $4$-OV instance $V_1, \dots, V_4$, and consider the corresponding OR gadget $H(j) = (\pi^{(j)}, \sigma^{(j)})$ for some $j \in [D]$. It holds that:
\begin{enumerate}[label=(\roman*)]
	\item For any vectors $v_1 \in V_1, \dots, v_4 \in V_4$ with $v_1[j] \cdot v_2[j] \cdot v_3[j] \cdot v_4[j] = 0$ we have $\delta_F(\pi^{(j)}, \sigma^{(j)} + \tau) \leq \delta$ for $\tau = ((\ind(v_1) + \ind(v_2) \cdot N) \cdot \epsilon, (\ind(v_3) + \ind(v_4) \cdot N) \cdot \epsilon)$.

	\item If $\delta_F(\pi^{(j)}, \sigma^{(j)} + \tau) \leq \delta$ for some $\tau \in \mathcal{T}$, then
	\begin{itemize}
		\item $\exists v_1 \in V_1, v_2 \in V_2: v_1[j] \cdot v_2[j] = 0 \enspace \text{and} \enspace \abs{\epsilon \cdot (\ind(v_1) + \ind(v_2) \cdot N) - \tau_1} \leq \frac{1}{3}\epsilon$
\item[] or
		\item $\exists v_3 \in V_3, v_4 \in V_4: v_3[j] \cdot v_4[j] = 0 \enspace \text{and} \enspace \abs{\epsilon \cdot (\ind(v_3) + \ind(v_4) \cdot N) - \tau_2} \leq \frac{1}{3}\epsilon$
	\end{itemize}
\end{enumerate}
\end{lem}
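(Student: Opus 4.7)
The plan is to combine the abstract OR-gadget lemma (Lemma~\ref{lem:abstract_or_gadget}) with the four equality-gadget lemmas (Lemmas~\ref{lem:equality_gadget}--\ref{lem:equality_gadget4}) and the separation property of Lemma~\ref{lem:equality_gadgets_far}. By Observation~\ref{obs:diagonal} all curves chosen as building blocks of $\pi^{(j)},\sigma^{(j)}$ are indeed anti-diagonal and diagonal, respectively, so Lemma~\ref{lem:abstract_or_gadget} is applicable.

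For part (i), the assumption $v_1[j] \cdot v_2[j] \cdot v_3[j] \cdot v_4[j] = 0$ means that at least one of the four coordinates vanishes; the argument splits into four symmetric cases. For instance, if $v_1[j] = 0$, then by the inclusion rule of $\pi^{(j)}$ the curve $\pione$ is one of the anti-diagonal building blocks, and $\sigmaone$ is always included as a diagonal building block. Since $\tau_1 = \epsilon \cdot (\ind(v_1) + \ind(v_2) \cdot N)$ by definition of $\tau$, Lemma~\ref{lem:equality_gadget}(i) yields $\delta_F(\pione, \sigmaone + \tau) \le \delta$, and Lemma~\ref{lem:abstract_or_gadget} lifts this to $\delta_F(\pi^{(j)}, \sigma^{(j)} + \tau) \le \delta$. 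The remaining three cases use $F'$, $G$, $G'$ together with Lemmas~\ref{lem:equality_gadget2}, \ref{lem:equality_gadget3}, \ref{lem:equality_gadget4} respectively, noting in the $G,G'$ cases that we instead match $\tau_2 = \epsilon \cdot (\ind(v_3) + \ind(v_4) \cdot N)$.

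For part (ii), suppose $\delta_F(\pi^{(j)}, \sigma^{(j)} + \tau) \le \delta$ for some $\tau \in \mathcal{T}$. The backward direction of Lemma~\ref{lem:abstract_or_gadget} produces an anti-diagonal block $\hat{\pi}$ of $\pi^{(j)}$ and a diagonal block $\hat{\sigma}$ of $\sigma^{(j)}$ with $\delta_F(\hat{\pi}, \hat{\sigma} + \tau) \le \delta$. Lemma~\ref{lem:equality_gadgets_far} then forces $\hat{\pi}$ and $\hat{\sigma}$ to originate from equality gadgets of the same type, since pairs from different types are at Fréchet distance strictly greater than $\delta$ for every $\tau \in \mathcal{T}$. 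Depending on that common type, the inclusion rule used when assembling $\pi^{(j)},\sigma^{(j)}$ gives us the zero coordinate: if the pair is $(\pione,\sigmaone)$ or $(\pitwo,\sigmatwo)$ for some $v_1,v_2$, then $v_1[j] = 0$ or $v_2[j] = 0$ respectively, so $v_1[j] \cdot v_2[j] = 0$, and Lemma~\ref{lem:equality_gadget}(ii) or Lemma~\ref{lem:equality_gadget2}(ii) gives $|\epsilon \cdot (\ind(v_1) + \ind(v_2) \cdot N) - \tau_1| \le \tfrac{1}{3}\epsilon$; symmetrically, a pair from $G$ or $G'$ yields the second alternative of (ii) via Lemmas~\ref{lem:equality_gadget3}(ii) or \ref{lem:equality_gadget4}(ii).

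I expect no serious obstacle: the lemma is a bookkeeping combination of building blocks that were designed exactly for this purpose. The only point deserving care is the step where Lemma~\ref{lem:equality_gadgets_far} is used to rule out ``cross-type'' matches inside the OR gadget; once that is settled, the correspondence between inclusion in $\pi^{(j)}$/$\sigma^{(j)}$ and the required zero-entry is immediate from the construction, and the quantitative bound on $\tau_1$ or $\tau_2$ follows verbatim from the appropriate equality-gadget lemma.
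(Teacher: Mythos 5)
Your proof takes essentially the same approach as the paper: part (i) via the appropriate equality-gadget lemma lifted by the abstract OR-gadget lemma, and part (ii) via the backward direction of the OR-gadget lemma combined with Lemma~\ref{lem:equality_gadgets_far} to force a same-type match, then the corresponding equality-gadget lemma. You spell out the four-way case split and the role of the inclusion rule a bit more explicitly than the paper does, but the argument is the same.
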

\begin{proof}
	For (i), from $v_1[j] \cdot v_2[j] \cdot v_3[j] \cdot v_4[j] = 0$ it follows that at least one gadget of $F(v_1, v_2)$, $F'(v_1, v_2)$, $G(v_3, v_4)$, $G'(v_3, v_4)$ is contained in $H(j)$. By Lemmas \ref{lem:equality_gadget} to \ref{lem:equality_gadget4}, we know that the discrete Fréchet distance of this gadget is small. By Lemma \ref{lem:abstract_or_gadget} it then follows that $\delta_F(\pi^{(j)}, \sigma^{(j)} + \tau) \leq \delta$.

	For (ii), from $\delta_F(\pi^{(j)}, \sigma^{(j)} + \tau) \leq \delta$ it follows by Lemmas \ref{lem:abstract_or_gadget} and \ref{lem:equality_gadgets_far} that there exists a gadget $\Gamma$ for which the discrete Fréchet distance is at most $\delta$. From Lemmas \ref{lem:equality_gadget} to \ref{lem:equality_gadget4} it then follows that
\[
	\abs{\epsilon \cdot (\ind(v_1) + \ind(v_2) \cdot N) - \tau_1} \leq \frac{1}{3}\epsilon \quad \text{or} \quad \abs{\epsilon \cdot (\ind(v_3) + \ind(v_4) \cdot N) - \tau_2} \leq \frac{1}{3}\epsilon.
\]
for some vectors $v_1 \in V_1, \dots, v_4 \in V_4$. As $\Gamma$ is contained in the OR gadget, we additionally have that $v_1[j] \cdot v_2[j] = 0$ or $v_3[j] \cdot v_4[j] = 0$, respectively.
\end{proof}

\paragraph{Final Curves.} The final curves $\pi$ and $\sigma$ are now defined as follows. We start with the translation gadget $\pi^{(0)}$ ($\sigma^{(0)}$). Then the curves $\pi^{(j)}$ ($\sigma^{(j)}$) follow for $j \in [D]$. Note that we have to translate these curves to fulfill the requirements of Lemmas \ref{lem:translation_gadget} and \ref{lem:ov_dimension_gadget}, thus, we translate $\pi^{(j)}$ ($\sigma^{(j)}$) by $(100 \cdot j, 0)$. More explicitly, the final curves are
\begin{equation*}
	\begin{split}
		\pi &\coloneqq \pi^{(0)} \bigcirc_{j \in [D]} \pi^{(j)} + (100 \cdot j, 0), \\
		\sigma &\coloneqq \sigma^{(0)} \bigcirc_{j \in [D]} \sigma^{(j)} + (100 \cdot j, 0).
	\end{split}
\end{equation*}

We are now ready to prove Theorem \ref{thm:mainlower}. Recall its statement:
\mainlower*

\noindent
Also recall that it suffices to prove a lower bound under the $4$-OV hypothesis. For clarity of structure, we split the proof into Lemma \ref{lem:main1} and Lemma \ref{lem:main2} which together imply Theorem~\ref{thm:mainlower}.

\begin{lem} \label{lem:main1}
	Given a YES-instance of $4$-OV, the curves $\pi$ and $\sigma$ constructed in our reduction have discrete Fréchet distance under translation at most $\delta$, \ie, $\min_\tau \delta_F(\pi,\sigma + \tau) \leq \delta$.
\end{lem}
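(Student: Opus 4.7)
The plan is to exhibit an explicit translation $\tau^*$ derived from the orthogonal witnesses, and then apply the gadget lemmas in sequence. Concretely, let $v_1^* \in V_1, \dots, v_4^* \in V_4$ be vectors witnessing the YES-instance, i.e., for every $j \in [D]$ we have $v_1^*[j] \cdot v_2^*[j] \cdot v_3^*[j] \cdot v_4^*[j] = 0$. I define
\[
\tau^* \coloneqq \bigl( h(v_1^*, v_2^*) \cdot \epsilon,\; h(v_3^*, v_4^*) \cdot \epsilon \bigr) = \bigl((\ind(v_1^*) + \ind(v_2^*) \cdot N) \cdot \epsilon,\; (\ind(v_3^*) + \ind(v_4^*) \cdot N) \cdot \epsilon\bigr).
\]
Since each $\ind(v_i^*) \in \{0, \dots, N-1\}$, both coordinates of $\tau^*$ lie in $[0, (N^2-1)\epsilon]$, so $\tau^* \in \mathcal{T}$.

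Next, I would verify that each individual gadget has Fréchet distance at most $\delta$ under $\tau^*$. For the translation gadget, since $\tau^* \in [0, (N^2-1)\epsilon]^2$, Lemma~\ref{lem:translation_gadget}(i) gives $\delta_F(\pi^{(0)}, \sigma^{(0)} + \tau^*) \leq \delta$. For each OR gadget $H(j) = (\pi^{(j)}, \sigma^{(j)})$, the orthogonality condition ensures $v_1^*[j] \cdot v_2^*[j] \cdot v_3^*[j] \cdot v_4^*[j] = 0$, so Lemma~\ref{lem:or_gadget}(i) (applied with our chosen $v_1^*,\dots,v_4^*$, noting that $\tau^*$ is exactly the translation produced by those vectors) yields $\delta_F(\pi^{(j)}, \sigma^{(j)} + \tau^*) \leq \delta$.

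Finally, I would combine all the gadgets. The gadgets are placed in the plane with shifts of $(100 \cdot j, 0)$ across different indices $j$, so any two distinct gadgets are separated by at least roughly $100 - \Oh(1)$ in the $x$-direction, which is far greater than the threshold $\delta + 2 N^2 \epsilon \leq 3$ required by Lemma~\ref{lem:ov_dimension_gadget}. Hence that lemma (applied to the OR gadgets, and extended trivially to absorb the translation gadget as an additional prefix piece satisfying the same separation) guarantees $\delta_F(\pi, \sigma + \tau^*) \leq \delta$. Taking the minimum over translations then gives $\min_\tau \delta_F(\pi, \sigma + \tau) \leq \delta$, as desired.

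The only mild subtlety I foresee is verifying that the translation gadget $(\pi^{(0)}, \sigma^{(0)})$ integrates cleanly into the concatenation handled by Lemma~\ref{lem:ov_dimension_gadget}. Since Lemma~\ref{lem:ov_dimension_gadget} is stated for gadgets $\pi^{(1)}, \ldots, \pi^{(D)}$, I would either observe that its proof (simultaneous traversal across far-apart gadgets) extends verbatim to a prefix gadget, or treat the pair $(\pi^{(0)}, \sigma^{(0)})$ as the ``zeroth'' dimension gadget in the argument, which is justified by the shift of $(100 j, 0)$ ensuring that $\pi^{(0)}$ is far from all $\sigma^{(j)}$ with $j \geq 1$ and vice versa.
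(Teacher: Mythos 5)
Your proof is correct and follows essentially the same route as the paper's: choose the translation $\tau^*$ determined by the orthogonal witnesses, invoke Lemma~\ref{lem:translation_gadget}(i) for the prefix, Lemma~\ref{lem:or_gadget}(i) for each dimension's OR gadget, and Lemma~\ref{lem:ov_dimension_gadget} to stitch the pieces together. Your extra care about verifying $\tau^* \in \mathcal{T}$ and about integrating the translation gadget into the concatenation argument is a welcome clarification, but it does not change the substance of the argument.
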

\begin{proof}
	Let $v_1 \in V_1, \dots, v_4 \in V_4$ be orthogonal vectors and let $\tau = ((\ind(v_1) + \ind(v_2) \cdot N) \cdot \epsilon, (\ind(v_3) + \ind(v_4) \cdot N) \cdot \epsilon)$ be the translation corresponding to those vectors. From Lemma \ref{lem:translation_gadget} we know that $\delta_F(\pi^{(0)}, \sigma^{(0)}+\tau) \leq \delta$, and thus there is a valid traversal to the endpoints of the translation gadget. Then we simultaneously step to the start of $\pi^{(1)}$ and $\sigma^{(1)}$. From Lemma \ref{lem:or_gadget} we know that there also exist traversals of $\pi^{(1)}, \dots, \pi^{(D)}$ and $\sigma^{(1)}+\tau, \dots, \sigma^{(D)}+\tau$ of distance at most $\delta$. It follows from Lemma \ref{lem:ov_dimension_gadget} that we can also traverse those gadgets sequentially in distance $\delta$ and thus $\delta_F(\pi,\sigma + \tau) \leq \delta$.
\end{proof}

\begin{lem} \label{lem:main2}
If the curves $\pi$ and $\sigma$ constructed in our reduction have discrete Fréchet distance under translation at most $\delta$, then the given $4$-OV instance is a YES-instance.
\end{lem}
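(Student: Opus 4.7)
The plan is to reverse the argument used in Lemma~\ref{lem:main1} by chaining the three gadget lemmas in the opposite direction. Suppose some translation $\tau \in \mathbb{R}^2$ achieves $\delta_F(\pi, \sigma + \tau) \le \delta$. First I would invoke Lemma~\ref{lem:translation_gadget}(ii): since every $\pi^{(j)}$ (respectively $\sigma^{(j)}$) is placed far from the origin via the translation $(100 j, 0)$, the hypothesis that all remaining points lie at distance greater than $8$ from $\sigma^{(0)}$ is satisfied, and we conclude $\tau \in \mathcal{T}$. Then I would apply Lemma~\ref{lem:ov_dimension_gadget} to the concatenation of $\pi^{(1)}, \dots, \pi^{(D)}$ and $\sigma^{(1)}, \dots, \sigma^{(D)}$; the inter-gadget distances are at least roughly $100$, which comfortably exceeds $\delta + 2 N^2 \epsilon$, so we obtain $\delta_F(\pi^{(j)}, \sigma^{(j)} + \tau) \le \delta$ for every $j \in [D]$.

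Next I would apply Lemma~\ref{lem:or_gadget}(ii) to each dimension $j \in [D]$. This supplies, for each $j$, either a pair $(v_1, v_2) \in V_1 \times V_2$ with $v_1[j] \cdot v_2[j] = 0$ and $|\epsilon(\ind(v_1) + \ind(v_2) N) - \tau_1| \le \tfrac{1}{3}\epsilon$, or a pair $(v_3, v_4) \in V_3 \times V_4$ with $v_3[j] \cdot v_4[j] = 0$ and $|\epsilon(\ind(v_3) + \ind(v_4) N) - \tau_2| \le \tfrac{1}{3}\epsilon$. Let $J_1 \subseteq [D]$ collect the dimensions where the first case holds, and set $J_2 \coloneqq [D] \setminus J_1$, so that the second case must hold throughout $J_2$.

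The crucial step — the one the whole reduction was engineered around — is to observe that $\tau$ uniquely pins down the witness pairs, independently of $j$. Because $\ind(v_1) + \ind(v_2) \cdot N$ takes pairwise distinct integer values in $\{0, \ldots, N^2-1\}$ as $(v_1, v_2)$ ranges over $V_1 \times V_2$, distinct pairs are mapped to values of $\epsilon \cdot (\ind(v_1) + \ind(v_2) N)$ that are at least $\epsilon$ apart, strictly exceeding the combined tolerance $\tfrac{2}{3}\epsilon$. Hence at most one pair $(v_1^\ast, v_2^\ast)$ can satisfy the first condition for $\tau_1$, and symmetrically at most one pair $(v_3^\ast, v_4^\ast)$ can satisfy the second for $\tau_2$.

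Finally I would extract orthogonal vectors: the pair $(v_1^\ast, v_2^\ast)$ witnesses $v_1^\ast[j] \cdot v_2^\ast[j] = 0$ for every $j \in J_1$, and $(v_3^\ast, v_4^\ast)$ witnesses $v_3^\ast[j] \cdot v_4^\ast[j] = 0$ for every $j \in J_2$; in the degenerate case that $J_1$ or $J_2$ is empty I would simply pick arbitrary representatives from the corresponding sets. Consequently, for every $j \in [D]$ at least one of $v_1^\ast[j], v_2^\ast[j], v_3^\ast[j], v_4^\ast[j]$ is zero, so the quadruple is orthogonal and the 4-OV instance is a YES-instance. No step is genuinely hard; the only subtlety is the uniqueness-via-spacing argument in the third paragraph, which is precisely what the choices $\epsilon \le 0.001/N^4$ and $\delta = 2 + \tfrac{1}{4}\epsilon$ were calibrated to enable.
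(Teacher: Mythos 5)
Your proof is correct and follows the same chain of lemmas as the paper (translation gadget $\Rightarrow \tau \in \mathcal{T}$, OV-dimension gadget $\Rightarrow$ per-dimension bound, OR gadget $\Rightarrow$ per-dimension constraint, then extract witnesses). The one place you go further than the paper is the uniqueness-via-spacing argument: the paper merely remarks ``Note that those constraints have to be consistent,'' whereas you spell out why — the values $\epsilon\cdot(\ind(v_1)+\ind(v_2)N)$ are $\epsilon$-separated, exceeding the combined slack $\tfrac{2}{3}\epsilon$, so at most one pair can satisfy the $\tau_1$-constraint and likewise for $\tau_2$ — which is exactly the content that makes the final extraction go through.
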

\begin{proof}
	Let $\tau$ be a translation such that $\delta_F(\pi,\sigma+\tau) \leq \delta$. We know from Lemma \ref{lem:translation_gadget} that $\tau \in \mathcal{T}$. Furthermore, from Lemma \ref{lem:ov_dimension_gadget} we know that for all $j \in [D]$ it holds that $\delta_F(\pi^{(j)}, \sigma^{(j)}+\tau) \leq \delta$. It follows from Lemma \ref{lem:or_gadget} that for every $j \in [D]$ there exist $v_1 \in V_1, v_2 \in V_2$ such that $v_1[j] \cdot v_2[j] = 0$ and $\abs{\epsilon \cdot (\ind(v_1) + \ind(v_2) \cdot N) - \tau_1} \leq \frac{1}{3}\epsilon$ or there exist $v_3 \in V_3, v_4 \in V_4$ such that $v_3[j] \cdot v_4[j] = 0$ and $\abs{\epsilon \cdot (\ind(v_3) + \ind(v_4) \cdot N) - \tau_2} \leq \frac{1}{3}\epsilon$.
	Therefore, every dimension $j \in [D]$ gives us constraints on either $v_1, v_2$ or $v_3, v_4$. Due to Lemma \ref{lem:or_gadget} these constraints have to be consistent. If in total this gives us constraints for $v_1, \dots, v_4$, then we are done. Otherwise, if this only gives us constraints for $v_1, v_2$, then we already found $v_1, v_2$ which are orthogonal and thus we can pick arbitrary $v_3 \in V_3, v_4 \in V_4$ to obtain an orthogonal set of vectors. The case of only $v_3, v_4$ being constrained is symmetric.
\end{proof}

\begin{proof}[Proof of Theorem \ref{thm:mainlower}.]
	The Strong Exponential Time Hypothesis implies the $k$-OV hypothesis.
	The reduction above from a $4$-OV instance of size $N$ over $\{0, 1\}^D$ to an instance of the discrete Fréchet distance under translation in $\mathbb{R}^2$ results in two curves of length $\mathcal{O}(D \cdot N)$. Lemmas \ref{lem:main1} and \ref{lem:main2} show correctness of this reduction. Hence, any $\mathcal{O}(n^{4-\epsilon})$-time algorithm for the discrete Fréchet distance under translation would imply an algorithm for $4$-OV in time $\mathcal{O}((D \cdot N)^{4-\epsilon}) = \mathcal{O}(\text{poly}(D) \cdot N^{4-\epsilon})$, refuting the $k$-OV hypothesis.
\end{proof}
 \section{Conclusion}

In this work, we designed an improved algorithm for the discrete Fréchet distance under translation running in time $\tOh(n^{14/3}) = \tOh(n^{4.66...})$. As a crucial subroutine, we developed an improved algorithm for offline dynamic grid reachability. Additionally, we presented a conditional lower bound of $n^{4-o(1)}$ based on the Strong Exponential Time Hypothesis, which, despite not yet matching our upper bound, strongly separates the discrete Fréchet distance under translation from the standard discrete \Fr distance.

Our use of offline dynamic grid reachability yields further motivation for studying the offline setting of dynamic algorithms, for potential use as subroutines in static algorithms. 
Problems left open by this paper include: (1) Closing the gap between our upper and conditional lower bound. This might require a solution to offline dynamic grid reachability with polylogarithmic amortized update time. (2) Generalizing our bounds to $d=1$ or higher dimensions $d\ge 3$, as in this paper we only considered curves in the plane. While generalizing our algorithm to $d=1$ or $d \ge 3$ seems rather straight-forward but technical, obtaining strong conditional lower bounds for these cases is more interesting. (3) Considering different transformations such as scaling, rotation, or affine transformations in general; here we only treated translations. Significantly new ideas seem necessary to obtain meaningful lower bounds for other transformations. 
(4) Determine whether the time complexity of variants of the discrete Fréchet distance, such as the continuous or weak Fréchet distance, have similar or different relationships to their translation-invariant analogues. 
 
\bibliographystyle{alpha}
\bibliography{biblio}

\end{document}